\newcounter{ecount}
  {\end{list}}
\long\def\unmarkedfootnote#1{{\long\def\@makefntext##1{##1}\footnotetext{#1}}}
\title{Approximation Bounds for Interpolation and Normals on  \\
       Triangulated Surfaces and Manifolds}
\author{Marc Khoury and Jonathan Richard Shewchuk}
\keywords{approximation theory, numerical analysis,
          triangulation, surface mesh, triangulated manifold,
          surface sampling theory, manifold sampling theory,
          surface interpolation lemma, triangle normal lemma,
          normal variation lemma,
          epsilon-sample, restricted Delaunay triangulation, geometry}
\abstract{
%
How good is a triangulation as an approximation of
a smooth curved surface or manifold?
We provide bounds on the {\em interpolation error},
the error in the position of the surface, and
the {\em normal error}, the error in the normal vectors of the surface,
as approximated by a piecewise linearly triangulated surface
whose vertices lie on the original, smooth surface.
The interpolation error is the distance from
an arbitrary point on the triangulation to
the nearest point on the original, smooth manifold, or vice versa.
The normal error is the angle separating
the vector (or space) normal to a triangle from
the vector (or space) normal to the smooth manifold
(measured at a suitable point near the triangle).
We also study the {\em normal variation}, the angle separating
the normal vectors (or normal spaces)
at two different points on a smooth manifold.
Our bounds apply to manifolds of any dimension embedded
in Euclidean spaces of any dimension, and
our interpolation error bounds apply to simplices of any dimension,
although our normal error bounds apply only to triangles.
These bounds are expressed in terms of the sizes of suitable medial balls
(the {\em empty ball size} or {\em local feature size} measured
at certain points on the manifold), and have applications in
Delaunay triangulation-based algorithms for
provably good surface reconstruction and provably good mesh generation.
Our bounds have better constants than the prior bounds we know of---and
for several results in higher dimensions,
our bounds are the first to give explicit constants.
}
\begin{document}

\maketitle
\cleardoublepage

\renewcommand{\thepage}{\roman{page}}

\tableofcontents

\setlength{\parskip}{\medskipamount}

\renewcommand{\thepage}{\arabic{page}}
\setcounter{page}{1}

\pagestyle{fancy}
\headheight 14.4pt
\renewcommand{\sectionmark}[1]{\markboth{}{#1}}
\renewcommand{\subsectionmark}[1]{}
\lhead[\rm\thepage]{}
\chead[{\sl Marc Khoury and Jonathan Richard Shewchuk}]{{\sl \rightmark}}
\rhead[]{\rm\thepage}
\cfoot[]{}
\thispagestyle{empty}

\newtheorem{theorem}{Theorem}
\newtheorem{lemma}[theorem]{Lemma}
\newtheorem{corollary}[theorem]{Corollary}

\newcommand{\R}{\mathbb{R}}
\newcommand{\del}{\mathrm{Del}\,}
\newcommand{\vor}{\mathrm{Vor}\,}
\newcommand{\conv}{\mathrm{conv}\,}
\newcommand{\aff}{\mathrm{aff}\,}
\newcommand{\proj}{\mathrm{proj}}
\newcommand{\Int}{\mathrm{int}\,}
\newcommand{\lfs}{\mathrm{lfs}}
\newcommand{\ebs}{\mathrm{ebs}}
\newcommand{\resvor}[2]{\mathrm{Vor}|_{#2}\,#1}
\newcommand{\resdel}[2]{\mathrm{Del}|_{#2}\,#1}
\newcommand{\fixdel}[2]{\mathrm{Del}^*|_{#2}\,#1}

\section{Introduction}

Triangulations of surfaces are used heavily in computer graphics,
visualization, and geometric modeling;
they also find applications in scientific computing.
Also useful are triangulations of manifolds in
spaces of dimension higher than three---for example, as
a tool for studying the topology of algebraic varieties.
A~surface triangulation (sometimes called a {\em surface mesh}) replaces
a curved surface with flat triangles---or in higher dimensions,
simplices---which are easy to process
and suitable for graphics rendering engines; but they introduce error.
How good is a triangulation as an approximation of a curved surface?

The two criteria most important in practice are the {\em interpolation error},
the error in the position of the surface, and
the {\em normal error}, the error in the normal vectors of the surface.
Let $\Sigma$ be a surface or manifold embedded in a Euclidean space $\R^d$, and
let $\Lambda$ be a piecewise linear surface or manifold formed by
a triangulation that approximates $\Sigma$.
The interpolation error can be quantified as the distance from
an arbitrary point on $\Lambda$ to the nearest point on $\Sigma$, or
vice versa.
The normal error can be quantified by choosing two nearby points
$x \in \Lambda$ and $y \in \Sigma$---a natural choice of $y$ is
the point on $\Sigma$ nearest $x$---and
measuring the angle separating the vector normal to $\Lambda$ at $x$ from
the vector normal to $\Sigma$ at $y$.
(The vector normal to $\Lambda$ is usually undefined if
$x$ lies on a boundary where simplices meet, but our results will treat
simplices individually rather than treat $\Lambda$ as a whole.)

Some notation:  we employ a correspondence between the two surfaces called
the {\em nearest-point map}\footnote{
We follow the convention of Cheng et al.~\cite{cheng12} and use
the Greek letter {\em nu}, which unfortunately is hard to distinguish from
the italic Roman letter $v$.
}
$\nu$, which maps a point $x \in \R^d$ to
the point $\nu(x)$ nearest $x$ on $\Sigma$ (if that point is unique).
We will frequently use the abbreviation $\tilde{x}$ to denote $\nu(x)$.
Given two points $p, q \in \R^d$,
$pq$ denotes a line segment with endpoints $p$ and $q$, and
$|pq|$ denotes its Euclidean length $\|p - q\|_2$.
For a point $p$ on a surface $\Sigma \subset \R^3$,
$n_p$ denotes a vector normal to $\Sigma$ at $p$
(whose magnitude is irrelevant).
For a triangle $\tau \subset \R^3$, $n_\tau$ denotes a vector normal to $\tau$.
Let $\angle (n_\tau, n_p)$ denote the angle separating $n_\tau$ from $n_p$.
In higher-dimensional Euclidean spaces, the normal vectors may be replaced by
normal subspaces; see Section~\ref{tour}.

The goal of this paper is to provide strong bounds on
the interpolation errors for simplices (of any dimension) and
the normal errors for triangles,
based on assumptions about the sizes of medial balls
(defined in Section~\ref{tour}).
Specifically, given a simplex $\tau$ whose vertices lie on $\Sigma$ and
a point $x \in \tau$, we bound the distance $|x\tilde{x}|$ and,
if $\tau$ is a triangle, we bound the angle $\angle (n_\tau, n_{\tilde{x}})$.
Besides the interpolation and normal errors,
we also study the {\em normal variation}, the angle separating
the normal vectors (or normal spaces) at two different points on $\Sigma$.
(We need to understand the normal variation to study the normal error;
it is also used to prove that certain triangulations are homeomorphic to
a surface~\cite{cheng12,dey07}.)
Bounds on all three of these quantities---the interpolation error,
the normal error, and the normal variation---have been derived in prior works
\cite{amenta99b,amenta02,amenta07,cheng05,cheng12,dey07}
and form a foundation for
the correctness and accuracy of many algorithms in surface reconstruction
\cite{amenta99b,amenta02,amenta01,amenta07,boissonnat10,boissonnat14,
cheng05,dey08b,dey07,khoury16}
and mesh generation
\cite{boissonnat06,boissonnat05,cheng10,cheng12,dey08,oudot05,rineau07}
based on Delaunay triangulations.
Our notably improved bounds directly imply improved {\em sampling bounds} for
all of those algorithms.
By ``sampling bounds,'' we mean estimates of
how densely points must be sampled on a surface to guarantee that
the reconstructed surface or the surface mesh has
a good approximation accuracy and the correct topology.

A second goal of this paper is to generalize our bounds to
manifolds in higher dimensions.
Our bound on the interpolation error applies to a simplex of any dimension
with its vertices on a manifold of any dimension in a space of any dimension.
Our bounds on the normal error apply only to triangles, albeit
on a manifold of any dimension (greater than $1$) in a space of any dimension.
(We would like to study normal errors for simplices of higher dimension, but
the interaction between the shape of, say, a tetrahedron in $\R^4$ and
the stability of its normal space is complicated.
It deserves more study, but not in this paper.)

Our bounds on the normal variation also apply in higher dimensions, but
with a twist.
The {\em codimension} of a $k$-manifold $\Sigma \subset \R^d$ is $d - k$.
We have two {\em normal variation lemmas} (Section~\ref{variation}):
one for codimension~$1$, which bounds an angle
$\angle (n_p, n_q) \in [0^\circ, 180^\circ]$ between two normal vectors, and
one for higher codimensions, which bounds an angle
$\angle (N_p\Sigma, N_q\Sigma) \in [0^\circ, 90^\circ]$
between two normal spaces (see Section~\ref{tour} for
definitions of normal spaces and the angles between them).
The reason for two separate lemmas is that
the codimension~$1$ bound is stronger;
codimension~$2$ introduces configurations that weaken the bound and
cannot occur in codimension~$1$.
As a consequence,
some of our bounds on the normal errors also depend on the codimension.

Our bound on the interpolation error improves a prior bound
by a factor of about $30$ (see Section~\ref{tour}), and
one of our bounds on the normal error improves a prior bound
by a factor of about $1.9$ (see Section~\ref{tnls}).
Even small constant-factor improvements in the bounds are valuable;
for example, the number of triangles necessary for a surface mesh to guarantee
a specified accuracy in the normals is reduced by a factor of $1.9^2 = 3.61$,
helping to substantially speed up the application using the mesh.
In dimensions higher than three,
we are not aware of prior bounds with explicitly stated constants, but
there are asymptotic results~\cite{cheng05};
part of our contributions is to give strong explicit bounds.
Our bound on the interpolation error is {\em sharp},
meaning that it cannot be improved (without making additional assumptions).
(We use {\em sharp} to mean that not even the constants can be improved,
as opposed to {\em tight}, which is sometimes used in an asymptotic sense.)
We conjecture that our bound on the normal variation in codimension~$1$ is
sharp too.

The bounds help to clarify the relationship between approximation accuracy,
the sizes and shapes of the simplices in a surface mesh, and
the geometry of the surface itself.
Reducing the sizes of the simplices tends to reduce both the interpolation
and normal errors;
unsurprisingly, finer meshes offer better approximations than coarser ones.
The interpolation errors on a simplex scale quadratically with
the size of the simplex.
This is good news:
shrinking the simplices reduces the interpolation error quickly.
The normal errors scale linearly (not quadratically) with
the size of the simplex.
Roughly speaking, both types of error scale linearly with
the curvature of the manifold, measured at a selected point; more precisely,
they scale inversely with the radii of selected medial balls
(defined in Section~\ref{tour}), which we use to impose appropriate bounds
on both the curvature and the proximity of different parts of a manifold.
Therefore, portions of a manifold with greater curvature require
smaller simplices.

Interpolation errors are largely insensitive to the shape of a simplex.
Our bound on the interpolation error $|x\tilde{x}|$ is
proportional to the square of the {\em min-containment radius} of
the simplex containing $x$---the radius of its smallest enclosing ball
(see Section~\ref{prox}).
As this bound is sharp,
the min-containment radius is exactly the right measure to quantify
the effects of a simplex's size and shape on the interpolation error.

By contrast, normal errors are very sensitive to the shape of a simplex.
Skinny simplices underperform simplices that are close to equilateral, and
really skinny simplices can yield catastrophically wrong normals.
As a rough approximation, the worst-case normal error on a triangle is
linearly proportional to the triangle's circumradius,
defined in Section~\ref{tour}.
(See Sections~\ref{tnls} and~\ref{etnls} and
Amenta, Choi, Dey, and Leekha~\cite{amenta02}).
For triangles with a fixed longest edge length, the worst normal errors are
suffered by triangles with angles close to $180^\circ$, because
the circumradius approaches infinity as the largest angle approaches
$180^\circ$.
We give several bounds on the normal error for a triangle:
the simplest one depends on the triangle's circumradius, whereas
a stronger bound depends on one of triangle's angles as well,
giving us a more nuanced understanding of the relationship between
triangle shape and normal errors.

\section{A Tour of the Bounds}
\label{tour}

To create a surface mesh that meets specified constraints on accuracy,
one must consider the geometry of $\Sigma$ and
the size and (sometimes) the shape of each simplex.
Our bounds use three parameters to measure a simplex $\tau$:
the min-containment radius of $\tau$ and, for triangles only,
the circumradius of $\tau$ and (optionally) one of $\tau$'s plane angles.

For a simplex $\tau \subset \R^d$, the {\em smallest enclosing ball} of $\tau$
(also known as the {\em min-containment ball}) is
the smallest closed $d$-dimensional ball $B_\tau \supseteq \tau$,
illustrated in Figure~\ref{radii}.
The {\em min-containment radius} of $\tau$ is
the radius of $\tau$'s smallest enclosing ball; we write it as $r$
(though sometimes in this paper,
$r$ will be the radius of any arbitrary enclosing ball).
The {\em diametric ball} of $\tau$ is the smallest closed $d$-ball $B$
such that all $\tau$'s vertices lie on $B$'s boundary,
also illustrated in Figure~\ref{radii}.
The {\em circumcenter} and {\em circumradius} of $\tau$ are
the center and radius of $\tau$'s diametric ball, respectively;
we write the circumradius as $R$.
For every simplex, $r \leq R$; but if $\tau$ is ``badly'' shaped,
$R$ can be arbitrary large compared to $r$.
(Recall that for a triangle, $R \rightarrow \infty$ as
the largest angle approaches $180^\circ$ and the longest edge remains fixed.)
A simplex $\tau$ always contains the center of its smallest enclosing ball, but
frequently not its circumcenter.
The center of $\tau$'s smallest enclosing ball is
the point on $\tau$ closest to $\tau$'s circumcenter.
(See Rajan~\cite[Lemma~3]{rajan91} for
an algebraic proof based on quadratic program duality, or
Shewchuk~\cite[Lemma~24]{shewchuk08} for a geometric proof.)
Hence, $r = R$ if and only if $\tau$ contains its circumcenter.

\begin{figure}
\centerline{\input{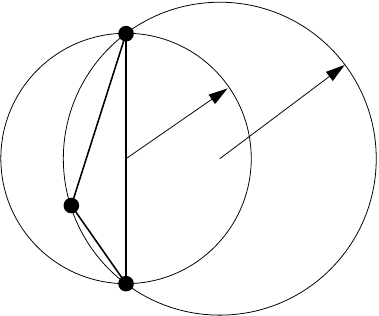_t}}

\caption{\label{radii}  \protect\small \sf
The smallest enclosing ball of a triangle, with radius $r$, and
the triangle's diametric ball, with radius $R$.
}
\end{figure}

The {\em circumcircle} (circumscribing circle) of
a triangle $\tau \subset \R^d$ is
the unique circle that passes through all three vertices of $\tau$.
The circumcircle has the same center and radius $R$ as $\tau$'s diametric ball
(i.e., $\tau$'s circumcenter and circumradius).
A {\em plane angle} of a triangle $\tau$ is one of the usual three angles
we associate with a triangle, though $\tau$ might be embedded in
a high-dimensional space.
A triangle contains its circumcenter (and has $r = R$) if and only if
it has no angle greater than $90^\circ$.

There are two salient aspects to the geometry of $\Sigma$.
One is curvature:  a surface with greater curvature needs smaller triangles.
(Nonsmooth phenomena like sharp edges can make the triangle normals
inaccurate no matter how small the triangles are, and are best addressed by
matching the triangle edges to the surface discontinuities.
We don't address that problem here.)
A more subtle aspect is that a surface can ``double back'' and
come close to itself in Euclidean space:
for example, if a mesh of a hand has a triangle connecting
the pad of the thumb to a knuckle of the index finger,
the triangle misrepresents the surface badly.

The early literature on provably good surface reconstruction identified
the {\em medial axis}---more specifically, the sizes of medial balls---as
an effective way to gauge the triangle sizes required
as a consequence of both curvature and the proximity of parts like fingers.
Let $\Sigma$ be a bounded, smooth $k$-manifold embedded in~$\R^d$.
Let $B \subset \R^d$ be an open ball.
We call $B$ {\em surface-free} if $B \cap \Sigma = \emptyset$.
We say $B$ {\em touches} $\Sigma$ if $B \cap \Sigma = \emptyset$ but
$B$'s boundary intersects $\Sigma$; that is,
$B$ is surface-free but its closure is not.
In that case, $B$ is tangent to $\Sigma$ at the point(s) where they intersect.
There are two types of {\em medial ball};
both types are surface-free balls that touch $\Sigma$,
as illustrated in Figure~\ref{medial}.
Every surface-free ball whose boundary touches $\Sigma$ at more than one point
is a medial ball; most medial balls are of this first type.
Let $W \subset \R^d$ be the set containing the center of every
medial ball of this first type; these are the points $w \in W$ where
the nearest-point map $\nu(w)$ is not uniquely defined.
(Recall that $\nu$ maps a point $x \in \R^3$ to
the point $\tilde{x} = \nu(x)$ nearest $x$ on $\Sigma$.)
The {\em medial axis} $M \in \R^d$ is the closure of $W$,
as illustrated.
Each point added to $M$ by taking the closure is the center of
a medial ball of the second type, which touches $\Sigma$ at just one point.

\begin{figure}
\centerline{\input{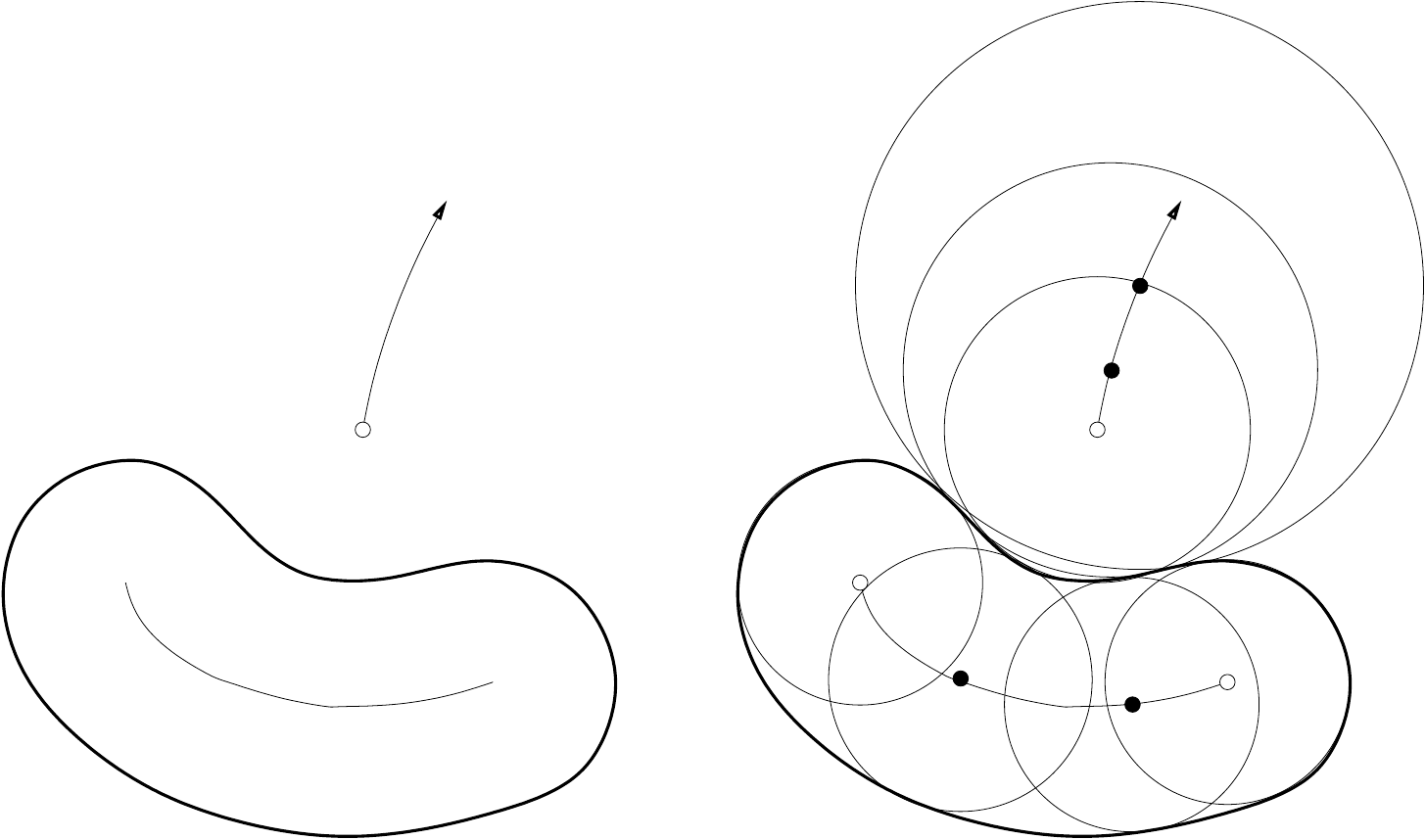_t}}

\caption{\label{medial}  \protect\small \sf
Left:  A $1$-manifold $\Sigma$ and its medial axis $M$.
Right:  Some of the medial balls that define $M$.
Those with black centers are medial balls of the first type;
those with white centers are of the second type.
}
\end{figure}

We will often refer to the medial balls tangent to $\Sigma$ at a point
$p \in \Sigma$.
In codimension~$1$, there are typically two such balls
(but sometimes just one),
one enclosed by $\Sigma$ and (optionally) one outside $\Sigma$.
In higher codimensions, there are infinitely many.
All their centers lie in the normal space $N_p\Sigma$.
A useful construction we will use later is
to choose a point $q \in N_p\Sigma \setminus \{ p \}$ and
imagine an open ball tangent to $\Sigma$ at $p$ whose radius is initially zero;
then the ball grows so that its center moves along the ray $\vec{pq}$ while
its boundary remains touching $p$.
Typically, at some point the ball will not be able to grow further without
intersecting $\Sigma$.
At the last instant when the ball is still surface-free,
it is a medial ball, and its center is a point in the medial axis $M$.
Typically the ball cannot grow further because it touches
a second point on $\Sigma$ (producing a medial ball of the first type), but
sometimes it is constrained solely by the curvature of $\Sigma$ at $p$ itself
(producing a medial ball of the second type).
In some cases when $p$ lies on the boundary of the convex hull of $\Sigma$,
the ball can grow to infinite radius and
degenerate into an open halfspace while remaining surface-free.
It is occasionally useful to refer to such a degenerate medial ball,
although it does not contribute a point to $M$.

\begin{figure}
\centerline{\input{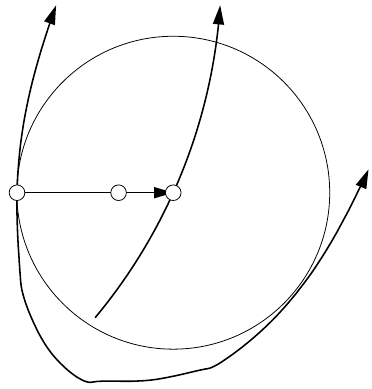_t}}

\caption{\label{tanball}  \protect\small \sf
The medial ball tangent to $\Sigma$ at $p$ whose center lies on
the ray $\vec{pq}$.
}
\end{figure}

For any $p \in \Sigma$, the {\em empty ball size} $\ebs(p)$ is the radius of
the smallest medial ball tangent to $\Sigma$ at $p$.
The {\em local feature size} $\lfs(p)$ is the distance from $p$ to
the medial axis (i.e., from $p$ to the nearest point on $M$).
Formally,
\[
\lfs(p) = \min_{m \in M} |pm|;
\hspace{.5in}
\ebs(p) = \min_{m \in M \cap N_p\Sigma} |pm|.
\]
This definition makes clear that $\lfs(p) \leq \ebs(p)$.
Both measures simultaneously constrain the curvature of $\Sigma$ at $p$
(the principle curvatures cannot exceed $1 / \ebs(p)$) and
the proximity of other ``parts'' of the manifold
(recall the example of fingers of a hand).
The empty ball size has the advantage that
it is more local in nature than the local feature size, so
bounds expressed in terms of $\ebs(p)$ are more generally applicable
(which is why we are introducing $\ebs$ here).
The local feature size $\lfs(p)$ constrains the curvature not only at $p$, but
also at nearby points, permitting the proof of stronger conclusions.
The local feature size is $1$-Lipschitz,
meaning that for all $p, q \in \Sigma$, $\lfs(p) \leq \lfs(q) + |pq|$;
whereas the empty ball size can vary rapidly over $\Sigma$.

One of the main contribution of the early literature on
provably good surface reconstruction was to recognize that
the local feature size (scaled down by a constant factor) is
a good guide to how closely points need to be spaced on $\Sigma$ to ensure
that surface reconstruction algorithms will produce a correct output that
approximates $\Sigma$ well~\cite{amenta99b,amenta98b}.
Subsequently, provably good surface mesh generation algorithms also adopted
these observations~\cite{boissonnat03,boissonnat05,cheng01}.

The interpolation and normal errors are (approximately)
inversely proportional to $\ebs(p)$ or $\lfs(p)$ for some relevant point $p$.
That is, the errors increase with a decreasing radius of curvature
(i.e., an increasing curvature).
If $\Sigma$ is not smooth, each point $p$ where $\Sigma$ is not smooth has
$\ebs(p) = \lfs(p) = 0$, and $p$ lies on the medial axis $M$.
Our bounds do not apply at such points (the bounds are infinite).
However, the bounds still apply at other points where $\ebs$ is positive.

Our first result is a Surface Interpolation Lemma (Section~\ref{prox}),
which holds for a $j$-simplex $\tau$ whose vertices lie on
a $k$-manifold $\Sigma \subset \R^d$
for any $j$, $k$, and $d$ (even if $j > k$, oddly).
Let $r$ be the min-containment radius of $\tau$.
Given any point $x \in \tau$ and the nearest point $\tilde{x} \in \Sigma$,
\begin{equation}
|x\tilde{x}| \leq \ebs(\tilde{x}) - \sqrt{\ebs(\tilde{x})^2 - r^2}.
\label{interpineq}
\end{equation}

This bound is somewhat opaque.
It grows as $r$ grows and shrinks as $\ebs(\tilde{x})$ grows,
contrary to what you might expect at a first glance.
For the sake of understanding its asymptotics,
we plot the bound in Figure~\ref{interpbound}
(as well as a more specific bound given in Lemma~\ref{interplemma}) and
look at its Taylor series around $r = 0$,
\begin{equation}
|x\tilde{x}| \leq \frac{r^2}{2 \, \ebs(\tilde{x})} +
                  \frac{r^4}{8 \, \ebs(\tilde{x})^3} +
                  \frac{r^6}{16 \, \ebs(\tilde{x})^5} +
                  {\cal O}\left( \frac{r^8}{\ebs(\tilde{x})^7} \right).
\label{interptaylor}
\end{equation}
The bound~(\ref{interpineq}) is
in the interval $[ r^2 / (2 \, \ebs(\tilde{x})), r^2 / \ebs(\tilde{x}) ]$
over its legal range $r \in [0, \ebs(\tilde{x}) ]$.
Hence, if we scale $\tau$ by a factor of $\alpha$,
we scale the interpolation error by approximately $\alpha^2$
(which is good news for achieving small errors).
The interpolation error shrinks inversely as
the empty ball size or local feature size grows.
Note that $\ebs(\tilde{x})$ can be replaced by $\lfs(\tilde{x})$, as
$\lfs(\tilde{x}) \leq \ebs(\tilde{x})$.

\begin{figure}
\centerline{
  \setlength{\unitlength}{2.8in}
  \begin{picture}(1,1)
  \put(0,0){\includegraphics[width=2.8in]{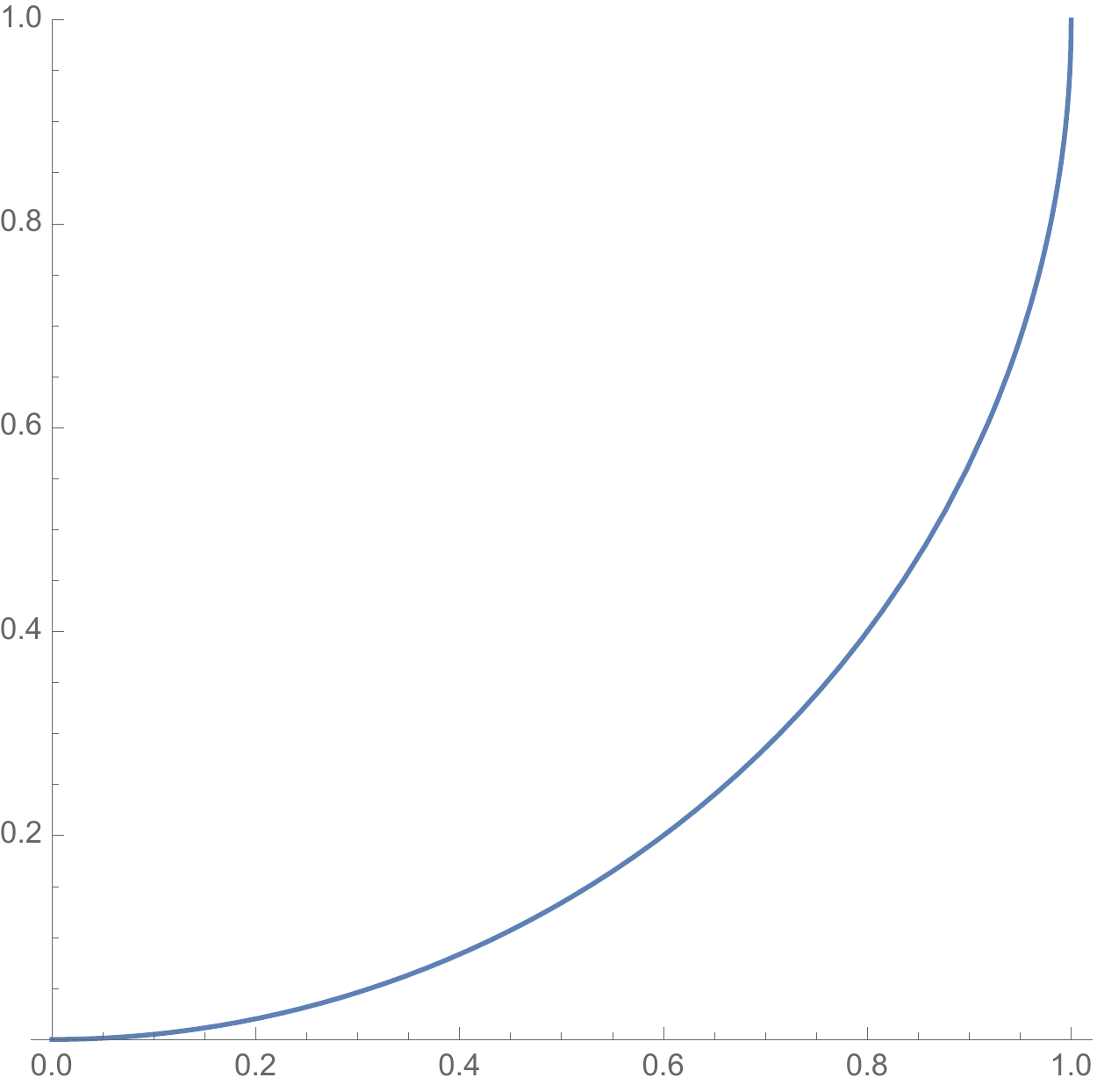}}
  \put(0.88,0){$r$}
  \put(0.07,0.93){bound on $|x\tilde{x}|$}
  \put(0.45,0.33){$1 - \sqrt{1 - r^2}$}
  \end{picture}
  \setlength{\unitlength}{2.8in}
  \begin{picture}(1,1)
  \put(0,0){\includegraphics[width=2.8in]{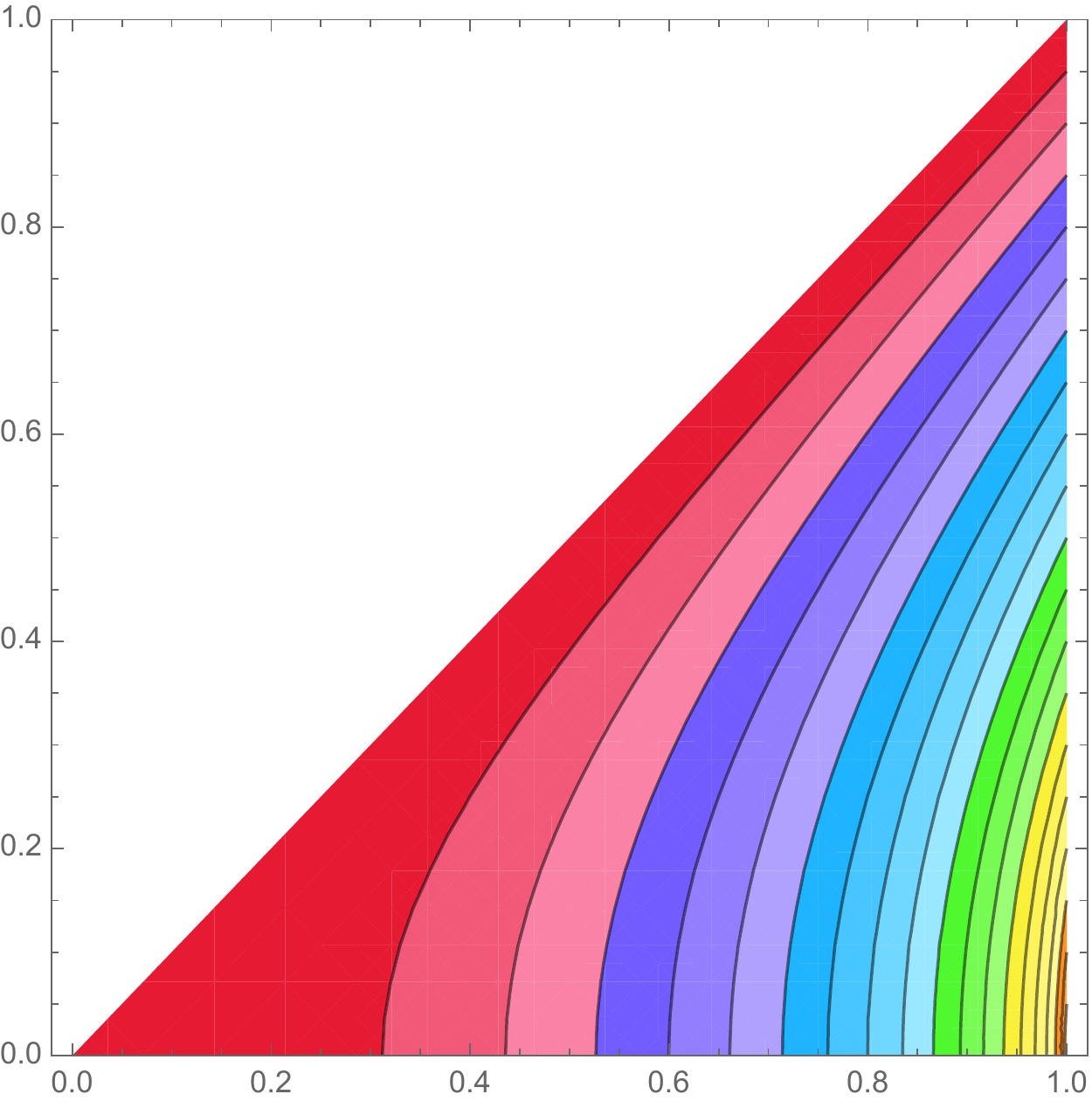}}
  \put(0.88,0){$r$}
  \put(0.08,0.89){$|xc|$}
  \end{picture}
  \includegraphics[width=0.55in]{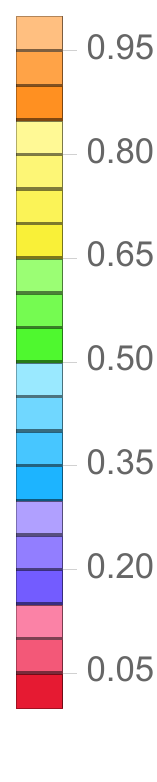}
}


\caption{\label{interpbound}  \protect\small \sf
Upper bounds for $|x\tilde{x}|$ where
$x$ is a point on a simplex $\tau$ whose vertices lie on a manifold $\Sigma$,
and $\tilde{x}$ is the point nearest $x$ on $\Sigma$.
We assume $\ebs(\tilde{x}) = 1$.
Left:  a bound on $|x\tilde{x}|$ for all $x \in \tau$,
as a function of the radius $r$ of $\tau$'s smallest enclosing ball.
Right:  isocontour plot of the bound on $|x\tilde{x}|$ as a function of $r$
(on the horizontal axis) and
the distance $|xc|$ from $x$ to the circumcenter $c$ of $\tau$
(on the vertical axis), showing how the interpolation error changes
from the circumcenter (bottom edge of plot) to
the vertices (diagonal of plot) of $\tau$.
The bottom edge of this isocontour plot corresponds to the case $x = c$ and
the left graph.
}
\end{figure}

The bound~(\ref{interpineq}) is sharp, meaning that
under reasonably general conditions, there is a matching lower bound.
({\em Exactly} matching, not {\em asymptotically} matching.)
This implies that the min-containment radius is exactly
the right way to characterize the influence of $\tau$'s size on
the worst-case interpolation error.
The chief difficulty of the proof is showing that the bound holds
for the min-containment radius, and not only for the circumradius.

Compare the bound~(\ref{interptaylor}) with
the bound of $15 R^2 / \ebs(\tilde{x})$
implied by Cheng et al.~\cite[Proposition 13.19]{cheng12}.
We improve on that by a factor of up to $30$ for small $r$, or
by an arbitrarily large amount for triangles with $R \gg r$.

What if we reverse the question and ask to bound the distance from
a point $y \in \Sigma$ to the nearest point $\bar{y}$ on
a surface mesh $\Lambda$ whose vertices lie on $\Sigma$?
We assume that $\nu(\Lambda) = \Sigma$; that is,
for every point $y \in \Sigma$,
there is some point $x \in \Lambda$ such that $\tilde{x} = y$.
(This seems like a reasonable necessary criterion for $\Lambda$ to be
a ``good'' triangulation of $\Sigma$.)
The nearest-point relationship between $\Sigma$ and $\Lambda$ is not symmetric:
it is usually not true that $\bar{y} = x$.
Nevertheless, it is clearly true that $|y\bar{y}| \leq |xy|$.
Therefore, our upper bound~(\ref{interpineq}) on $|x\tilde{x}|$ is also
an upper bound on $|\tilde{x}\bar{\tilde{x}}|$.

Before we discuss normal errors,
we must discuss our Normal Variation Lemmas (Section~\ref{variation}).
The smoothness of a manifold $\Sigma$ implies that if two points are
close to each other, their normal spaces differ by only a small angle, and
likewise for their tangent spaces.
Given two points $p, q \in \Sigma$,
a {\em normal variation lemma} gives an upper bound on
the angle between their normal vectors (in codimension~$1$) or
their normal spaces (in codimension~$2$ or higher).

What are tangent spaces and normal spaces?
A {\em $k$-flat}, also known as an {\em $k$-dimensional affine subspace}, is
a $k$-dimensional space that is a subset of $\R^d$.
It is essentially the same as a $k$-dimensional subspace (from linear algebra),
but whereas a subspace must contain the origin, a flat has no such requirement.
Given a smooth $k$-manifold $\Sigma \subset \R^d$ and a point $p \in \Sigma$,
the {\em tangent space} $T_p\Sigma$ is the $k$-flat tangent to $p$ at $\Sigma$,
and the {\em normal space} $N_p\Sigma$ is the $(d - k)$-flat through $p$
that is entirely orthogonal (complementary) to $T_p\Sigma$; that is,
every line in $N_p\Sigma$ is perpendicular to every line in $T_p\Sigma$.

Recall that the codimension of $\Sigma$ is $d - k$.
In the special (but common) case of codimension~$1$,
a $(d - 1)$-manifold without boundary divides $\R^d$ into
an unbounded region we call ``outside'' and
one or more bounded regions we call ``inside.''
Hence for codimension~$1$ we use the convention that any normal vector $n_p$
is directed outward.
The normal space $N_p\Sigma$ is a line parallel to $n_p$, but
$n_p$ is directed and $N_p\Sigma$ is not.
In codimension~$2$ or higher, the normal space has dimension~$2$ or higher
(matching the codimension of $\Sigma$) and
$\Sigma$ might not even be orientable, so
we don't assign $N_p\Sigma$ a direction.

Let $F, G \subseteq \R^d$ be two flats, and suppose that
the dimension of $F$ is less than or equal to the dimension of $G$.
We define the angle separating $F$ from $G$ to be
\[
\angle (F, G) = \angle (G, F) =
\max_{\ell_F \subset F} \min_{\ell_G \subset G} \angle (\ell_F, \ell_G)
\]
where $\ell_F$ and $\ell_G$ are lines.
Note that if $F$ and $G$ are of different dimensions,
the ``$\max$'' must apply over the lower-dimensional flat and
the ``$\min$'' over the higher-dimensional flat.
This angle is always in the range $[0^\circ, 90^\circ]$;
we use angles greater than $90^\circ$ only for directed vectors.
If $F_\perp$ denotes a flat complementary to $F$,
it is well known that $\angle (F, G) = \angle (G_\perp, F_\perp)$;
hence, for two points $p, q \in \Sigma$,
$\angle (N_p\Sigma, N_q\Sigma) = \angle (T_p\Sigma, T_q\Sigma)$.
Note that there is more than one way to define ``angles between subspaces.'' 
The best-known way originates with an 1875 paper of Jordan~\cite{jordan75};
by this reckoning, one needs multiple angles to fully characterize
the angular relationships between two high-dimensional flats.
Our definition corresponds to the greatest of these angles
(including the $90^\circ$ angles, which are not included in
Jordan's {\em canonical angles}), so our upper bound holds for all the angles.

It is convenient to specify our bounds on
$\angle (N_p\Sigma, N_q\Sigma) = \angle (T_p\Sigma, T_q\Sigma)$
in terms of a parameter $\delta = |pq| / \lfs(p)$.
The worst-case value of $\angle (N_p\Sigma, N_q\Sigma)$ is
$\delta + \mathcal{O}(\delta^3)$ radians for small $\delta$.
Hence, the worst-case normal variation is approximately linear in $|pq|$ and
approximately inversely proportional to $\lfs(p)$.

We give two Normal Variation Lemmas that, collectively, apply to
smooth $k$-manifolds embedded in $\R^d$ for every $d$ and $k < d$.
They are stronger than the best prior bounds, especially for $d > 3$.
There are two separate lemmas because we obtain
a better bound for codimension one than for codimension two and higher.
Our main result in codimension $1$ is that for $\delta \leq 0.9717$,
$\angle (n_p, n_q) \leq \eta_1(\delta) \in [0^\circ, 180^\circ]$ where
\[
\eta_1(\delta) = \arccos \left( 1 - \frac{\delta^2}{2 \sqrt{1 - \delta^2}}
                                \right)
\approx
\delta + \frac{7}{24} \delta^3 + \frac{123}{640} \delta^5 +
\frac{1\mbox{,}083}{7\mbox{,}168} \delta^7 + O(\delta^9).
\]
Our main result for general codimensions is that for $\delta \leq 0.7861$,
$\angle (N_p\Sigma, N_q\Sigma) = \angle (T_p\Sigma, T_q\Sigma) \leq
\eta_2(\delta) \in [0^\circ, 90^\circ]$ where
\[
\eta_2(\delta) = \arccos
  \sqrt{1 - \frac{\delta^2}{\sqrt{1 - \delta^2}}}
\approx
\delta + \frac{5}{12} \delta^3 + \frac{57}{160} \delta^5 +
\frac{327}{896} \delta^7 + O(\delta^9).
\]
We conjecture that our bound for codimension~$1$ is sharp, meaning that
it cannot be improved without imposing additional restrictions.
Our bound for codimension~$2$ is not sharp and leaves room for improvement.
See Section~\ref{variation} for additional bounds (and plots thereof) that are
stronger when the distance from $q$ to $p$'s tangent plane is known.

Figure~\ref{nvlplots} compares our two bounds and two prior bounds for
surfaces in $\R^3$, both by Amenta and Dey~\cite{amenta07}.
The stronger prior bound is $\angle (n_p, n_q) \leq - \ln(1 - \delta)$ radians
for $\delta \leq 0.9567$.
(A derivation of both bounds can also be found in Cheng et al.~\cite{cheng12}.
Amenta and Bern~\cite{amenta99b} gave an early normal variation lemma with
a weaker bound, but the proof was erroneous.)
This bound fades to $90^\circ$ at $\delta \approx 0.7921$ and
to $180^\circ$ at $\delta \approx 0.9567$, whereas
our bound for codimension~$1$ fades to $90^\circ$ at $\delta \approx 0.9101$
and to $180^\circ$ at $\delta \approx 0.9717$.
Our bound for higher codimensions fades to $90^\circ$ at
$\delta \approx 0.7861$ and stops there (because
we do not assign directions to normal spaces of dimension~$2$ or higher).
Amenta and Dey~\cite{amenta07} also proved a bound of
$\delta / (1 - \delta)$ radians, which has become better known.
We include it in Figure~\ref{nvlplots} (in purple) to show how much is lost
by using the well-known bound instead of the stronger bounds.
The Amenta--Dey bounds are of the form
$\angle (N_p\Sigma, N_q\Sigma) \leq \delta + \mathcal{O}(\delta^2)$ radians,
whereas our bounds show that
$\angle (N_p\Sigma, N_q\Sigma) \leq \delta + \mathcal{O}(\delta^3)$ radians.

\begin{figure}
\centerline{
  \setlength{\unitlength}{4in}
  \begin{picture}(1,0.642)
  \put(0,0){\includegraphics[width=4in]{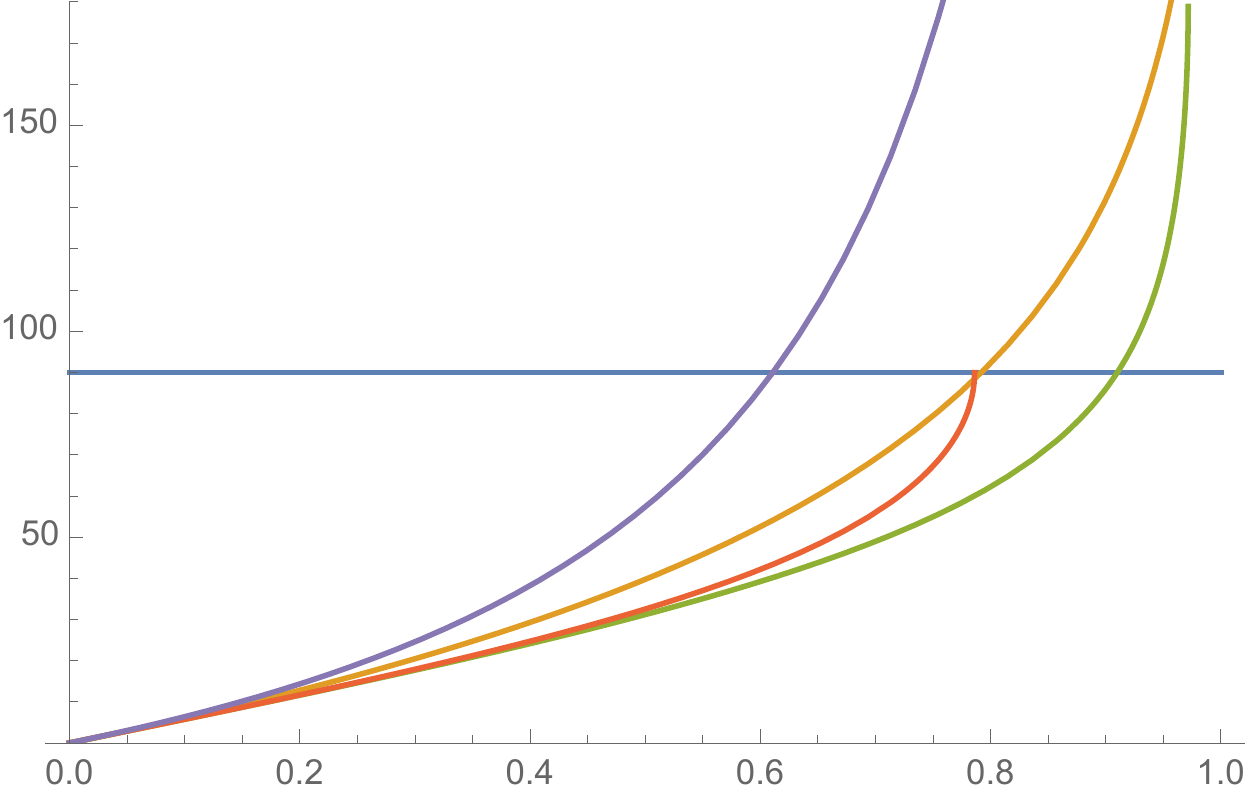}}
  \put(0.88,0){$\delta$}
  \put(0.07,0.61){bound on $\angle(N_p\Sigma, N_q\Sigma)$, degrees}
  \put(0.865,0.28){$\eta_1(\delta)$}
  \put(0.78,0.31){$\eta_2(\delta)$}
  \put(0.73,0.51){$- \ln (1 - \delta)$}
  \put(0.755,0.46){radians}
  \put(0.58,0.55){$\delta / (1 - \delta)$}
  \put(0.59,0.5){radians}
  \end{picture}
}


\caption{\label{nvlplots}  \protect\small \sf
Upper bounds in degrees for $\angle (N_p\Sigma, N_q\Sigma)$ as a function of
$\delta = |pq| / \lfs(p)$, provided by several normal variation lemmas.
The brown curve is the bound $- \ln(1 - \delta)$ radians proved by
Amenta and Dey~\cite{amenta07} for surfaces without boundary in $\R^3$.
The purple curve is the weaker but better-known bound
$\delta / (1 - \delta)$ radians, also by Amenta and Dey~\cite{amenta07}.
The green curve is our bound for codimension $1$---that is,
for $(d - 1)$-manifolds without boundary in $\R^d$.
The red curve is our bound for codimension~$2$ or greater---that is,
for $k$-manifolds without boundary in $\R^d$ with $d - k \geq 2$.
Bounds between $90^\circ$ and $180^\circ$ are meaningful for
manifolds without boundary in codimension~$1$.
The red curve stops at $90^\circ$ because
we do not assign directions to normal spaces of dimension~$2$ or higher.
}
\end{figure}

Cheng, Dey, and Ramos~\cite{cheng05} prove
a general-dimensional normal variation lemma for $k$-manifolds in~$\R^d$,
showing that in the worse case, $\angle (N_p\Sigma, N_q\Sigma)$ grows linearly
with $\delta$ for small $\delta$; but they express their bound in
an asymptotic form with an unspecified constant coefficient,
which makes a comparison with our bounds difficult.
We think it is a useful and practical contribution to provide
explicit numerical bounds $\eta_1(\delta)$ and $\eta_2(\delta)$ for $d > 3$.
Although our bound $\eta_2(\delta)$ is not sharp,
for $\delta \leq 0.7$ it is not much bigger than $\eta_1(\delta)$,
which we conjecture is a lower bound for all codimensions.

Finally, our results include several Triangle Normal Lemmas
(Sections~\ref{tnls} and~\ref{etnls}).
For a triangle $\tau$ whose vertices lie on a $k$-manifold $\Sigma$,
let $\nu(\tau)$ be the image of $\tau$ under the nearest-point map.
We derive bounds on how well $\tau$'s normal vector
locally approximates the vectors normal to $\Sigma$ on $\nu(\tau)$.
For a $j$-simplex $\tau \subset \R^d$,
its tangent space is its affine hull, a $j$-flat denoted $\aff \tau$.
For convenience, we define a particular normal space for simplices:
let $N_\tau$ denote the set of points in $\R^d$ that are
equidistant to all the vertices of $\tau$.
$N_\tau$ is a $(d - j)$-flat complementary to $\aff \tau$.
The intersection of $N_\tau$ and $\aff \tau$ is $\tau$'s circumcenter.

Our basic Triangle Normal Lemma applies only at the vertices of $\tau$.
Let $R$ be $\tau$'s circumradius.
Let $v$ be a vertex of $\tau$ and let $\phi$ be $\tau$'s plane angle at $v$.
Then
\begin{equation}
\angle(N_\tau, N_v\Sigma) = \angle(\aff{\tau}, T_v\Sigma) \leq
\arcsin{\left( \frac{R}{\ebs(v)} \max \left\{ \cot \frac{\phi}{2}, 1 \right\}
               \right)}.
\label{tnlbound2}
\end{equation}
Note that the argument $\cot \frac{\phi}{2}$ dominates if $\phi$ is acute and
the argument $1$ dominates if $\phi$ is obtuse.
If $v$ is the vertex at $\tau$'s largest plane angle (so $\phi \geq 60^\circ$),
then
\begin{equation}
\angle(N_\tau, N_v\Sigma) = \angle(\aff{\tau}, T_v\Sigma) \leq
\arcsin \frac{\sqrt{3}R}{\ebs(v)}.
\label{tnlbound1}
\end{equation}
Figure~\ref{tnlplots} plots both bounds,
(\ref{tnlbound1}) at left and~(\ref{tnlbound2}) at right.
Note that $\ebs(v)$ can be replaced by $\lfs(v)$.
It is interesting that the worst case preventing the bound~(\ref{tnlbound1})
from being better is incurred by an equilateral triangle
(rather than a triangle with a very large or small angle, as one might expect).

\begin{figure}
\centerline{
  \setlength{\unitlength}{2.8in}
  \begin{picture}(1,0.967)
  \put(0,0){\includegraphics[width=2.8in]{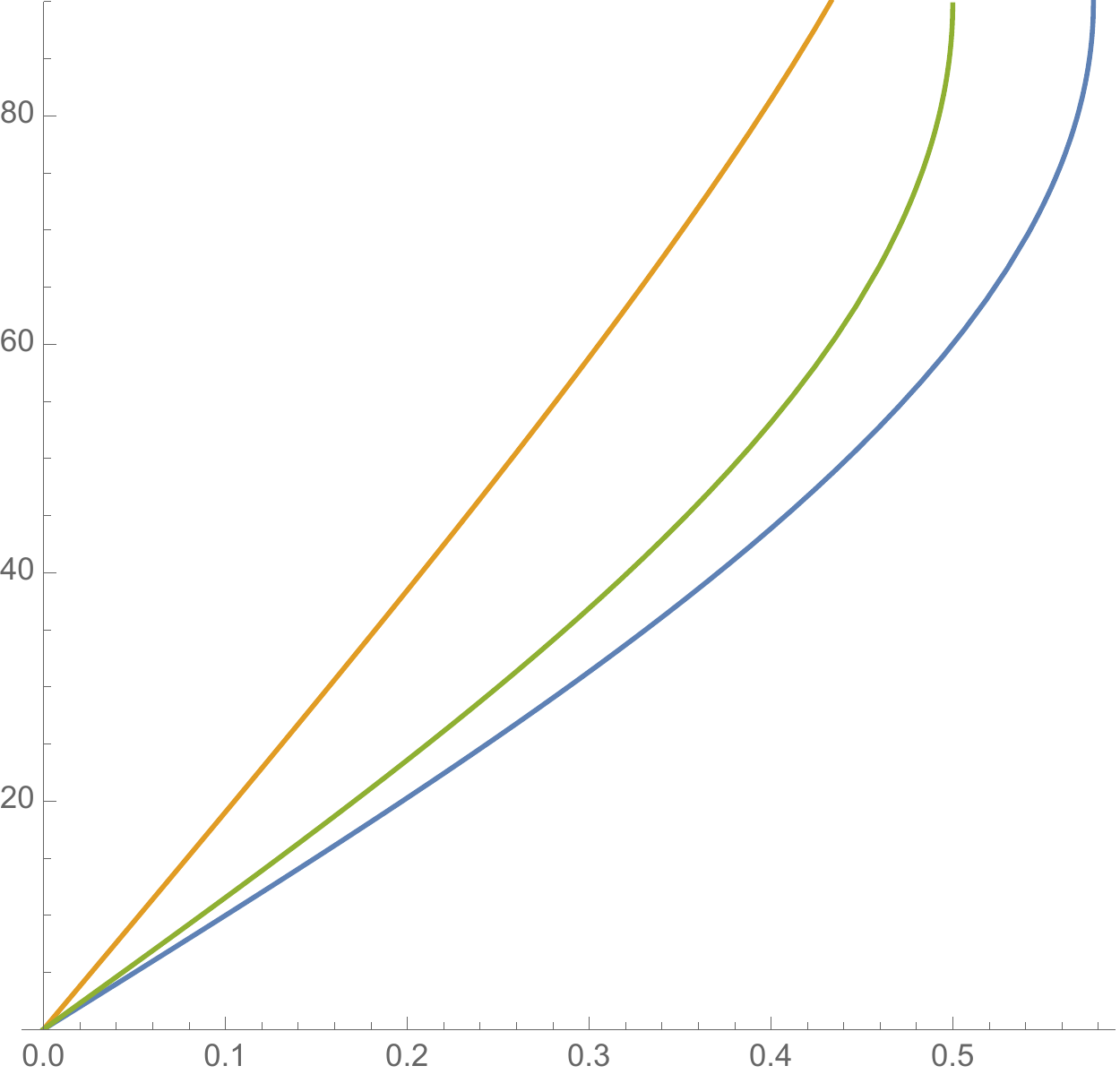}}
  \put(0.95,0){$R$}
  \put(0.07,0.91){bound on $\angle(N_\tau, N_v\Sigma)$}
  \put(0.5,0.3){$\arcsin (\sqrt{3} R)$}
  \put(0.18,0.6){Amenta et al.}
  \put(0.54,0.6){Cheng}
  \put(0.52,0.53){et al.}
  \end{picture}
  \begin{picture}(1,0.989)
  \put(0,0){\includegraphics[width=2.8in]{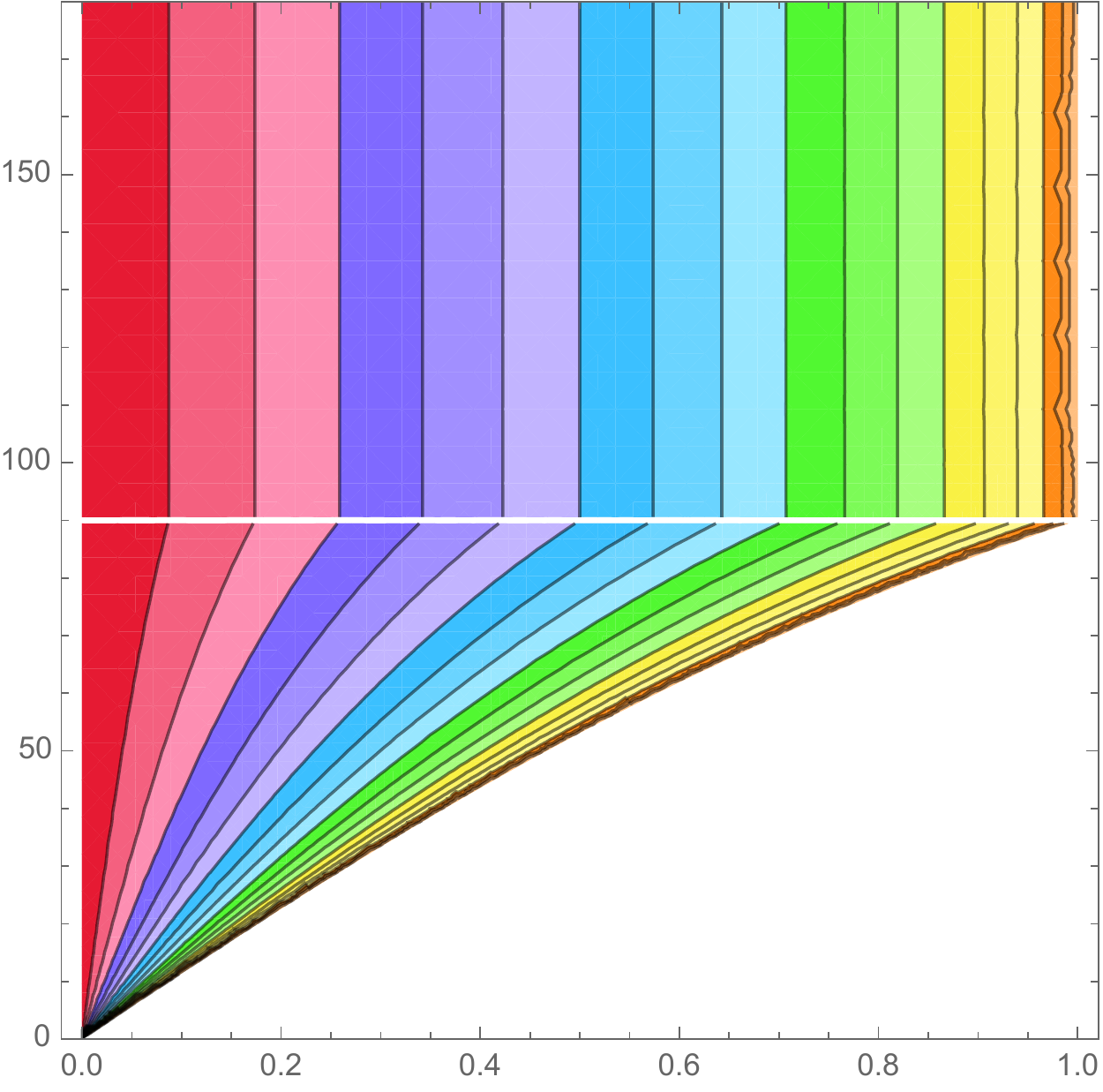}}
  \put(0.88,0){$R$}
  \put(0.01,0.92){$\phi$}
  \end{picture}
  \includegraphics[width=0.45in]{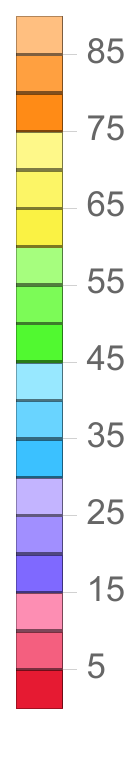}
}


\caption{\label{tnlplots}  \protect\small \sf
Upper bounds in degrees for
$\angle(N_\tau, N_v\Sigma) = \angle(\aff{\tau}, T_v\Sigma)$,
where $\tau$ is a triangle whose vertices lie on a manifold $\Sigma$ and
$v$ is a vertex of $\tau$.
We assume $\ebs(v) = 1$.
Left:  three bounds on $\angle(N_\tau, N_v\Sigma)$ for the case where
$v$ is the vertex at $\tau$'s largest plane angle
(or any angle $60^\circ$ or greater),
as a function of the circumradius $R$ of $\tau$.
The blue curve is our new bound~(\ref{tnlbound1}).
The green curve is the best (albeit little-known) prior bound we are aware of,
$\arcsin(2R)$, due to Cheng, Dey, Edelsbrunner, and Sullivan~\cite{cheng01}.
The brown curve is a much better-known prior bound, due to
Amenta, Choi, Dey, and Leekha~\cite{amenta02} (see Lemma~\ref{lem:otnl}).
Right:  isocontour plot of our bound~(\ref{tnlbound2}) as a function of
the circumradius $R$ (on the horizontal axis) and
the angle $\phi$ at the vertex $v$ (on the vertical axis).
For small $\phi$, the lemma does not provide a bound
(unless $R$ is very small), but see Section~\ref{etnls}.
}
\end{figure}

These bounds vary approximately linearly with the circumradius of $\tau$,
and inversely with the empty ball size or local feature size at $v$.
Whereas the interpolation error varies quadratically with
the radius of $\tau$'s smallest enclosing ball, and is therefore
very sensitive to $\tau$'s size but nearly insensitive to its shape,
the normal error varies (linearly) with $\tau$'s circumradius, which can be
much larger than $\tau$ if $\tau$ has a large angle (close to $180^\circ$).
It is well known that in surface meshes,
triangles with large angles are undesirable and sometimes even crippling to
applications, not because of problems with interpolation error, but
because of problems with very inaccurate normals.

Given a triangulation of $\Sigma$, one would like to have
a triangle normal lemma that applies to every point on $\Sigma$,
not just at the vertices.
Moreover, the Triangle Normal Lemma bounds are weak or nonexistent at
the vertices where the triangles have small plane angles.
Hence, we use the Normal Variation Lemmas to extend
the Triangle Normal Lemma bounds over the rest of $\nu(\tau)$---that is,
for every $x \in \tau$, we bound $\angle (N_\tau, N_{\tilde{x}}\Sigma)$.
Thus, a finely triangulated smooth manifold accurately approximates
the normal spaces of all the points on the manifold.
We call these results {\em extended triangle normal lemmas}.
Suppose that $R \leq \kappa \, \lfs(w)$ for every vertex $w$ of $\tau$.
Then for every point $x \in \tau$,
\[
\angle (N_\tau, N_{\tilde{x}}\Sigma) \leq
\max \left\{ \eta(\sqrt{2} \kappa) +
\arcsin \left( \kappa \cot \frac{\phi}{2} \right),
\eta(2 \kappa) +
\arcsin \left( \kappa \cot \left( 45^\circ - \frac{\phi}{4} \right) \right)
\right\}
\]
where $\eta(\delta) = \eta_1(\delta)$ in codimension~$1$, or
$\eta(\delta) = \eta_2(\delta)$ in codimension~$2$; and
$\phi$ is a ``proof parameter'' that can be set to
any angle in the range $(0^\circ, 60^\circ]$.
We recommend choosing $\phi = 49^\circ$ in codimension~$1$, and
$\phi = 48.5^\circ$ in higher codimensions.
Figure~\ref{etnlplot} graphs the bound for both cases.
We also give another version of this bound tailored for
restricted Delaunay triangles in an $\epsilon$-sample of $\Sigma$.
(See Section~\ref{etnls}.)

\begin{figure}
\centerline{
  \setlength{\unitlength}{4in}
  \begin{picture}(1,0.653)
  \put(0,0){\includegraphics[width=4in]{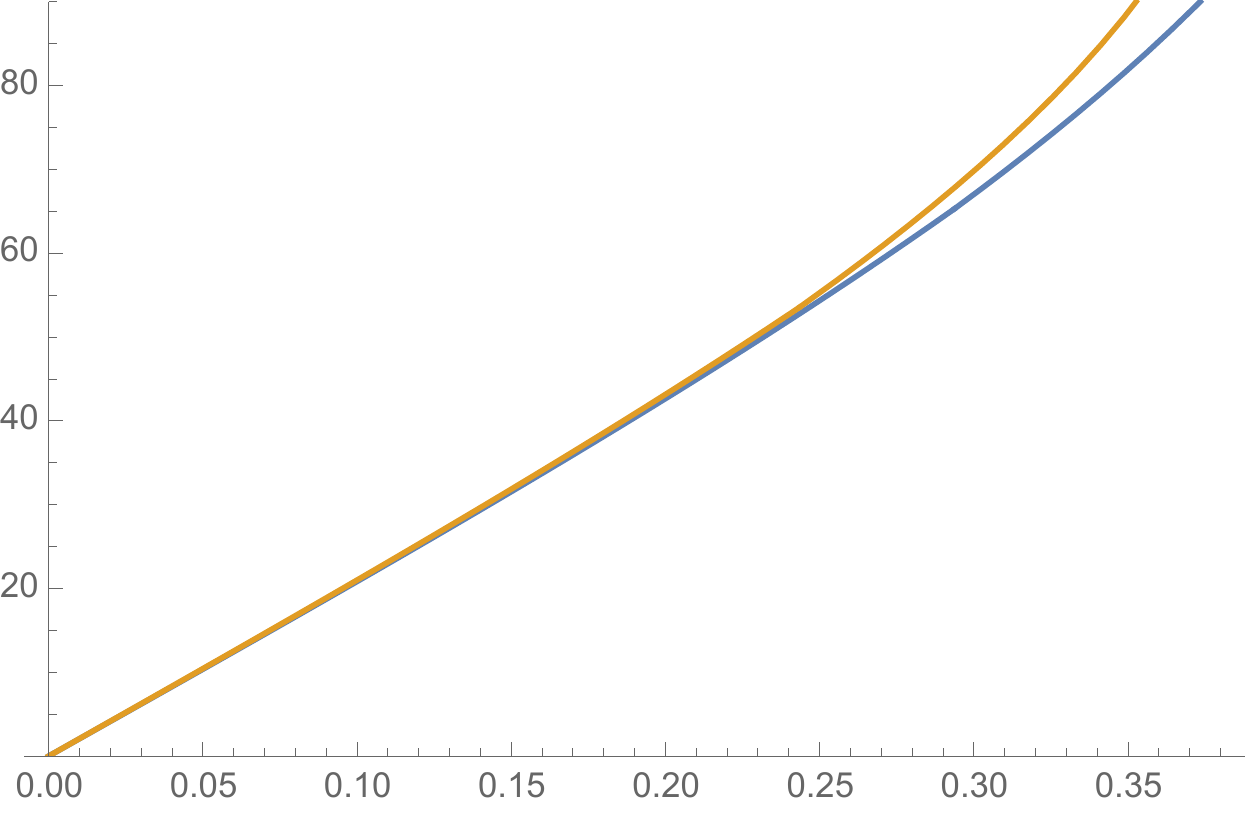}}
  \put(0.97,0){$R$}
  \put(0.07,0.62){bound on $\angle(N_\tau, N_{\tilde{x}}\Sigma)$}
  \put(0.54,0.3){codimension $1$}
  \put(0.2,0.35){higher codimensions}
  \end{picture}
}

%

\caption{\label{etnlplot}  \protect\small \sf
Upper bounds for $\angle(N_\tau, N_{\tilde{x}}\Sigma) =
\angle(\aff{\tau}, T_{\tilde{x}}\Sigma)$
as a function of the circumradius $R$ of $\tau$,
where $\tau$ is a triangle whose vertices lie on
a manifold $\Sigma$ and $x$ is any point on $\tau$.
We assume all three vertices $w$ of $\tau$ satisfy $\lfs(w) \geq 1$.
The blue curve is the upper bound in codimension $1$
(with the choice $\phi = 49^\circ$) and
the brown curve is the upper bound in higher codimensions
(with the choice $\phi = 48.5^\circ$),
for which the Normal Variation Lemma is weaker.
}
\end{figure}


Beyond the improved approximation bounds, we think that some of
the proof ideas in this paper are interesting in their own right.
Our proof of the Triangle Normal Lemma is strongly intuitive and
reveals a lot about {\em why} the bound is what it is.
Our proofs of the Normal Variation Lemmas exploit properties of medial balls
and medial-free balls in ways that allow us to obtain stronger bounds than
prior proofs, which were based on
integration of the curvature along a path on $\Sigma$.
These properties also find application in a forthcoming sequel paper that
improves the sampling bounds needed to guarantee that
a triangulation is homeomorphic to an underlying $2$-manifold.

Bounds on the interpolation and normal errors for surfaces have
much in common with analogous bounds for
piecewise linear interpolation over triangulations in the plane,
many of which were developed in an effort to analyze the finite element method
for solving partial differential equations~\cite{strang73}.
Consider a scalar field $f$ defined over a domain $\Omega \subset \R^d$, and
suppose that $f$'s directional second derivatives are, in all directions,
bounded so their magnitudes do not exceed some constant.
Let $g$ be an approximation of $f$ that is piecewise linear over $\Omega$,
with $g(v) = f(v)$ at every triangulation vertex $v$.
Waldron~\cite{waldron98} gives a sharp bound on
the pointwise interpolation error $f(p) - g(p)$ at
an arbitrary point $p \in \Omega$.
His bound is akin to our bound~(\ref{interpineq}) on $|x\tilde{x}|$---it is
proportional to the square of the min-containment radius of
the simplex that contains~$p$, it is sharp, and it holds in any dimension---but
the precise bound, the context, and the correctness proof are different.

In many applications (such as mechanical modeling of stress),
the interpolation error in the gradient, $\|\nabla f(p) - \nabla g(p)\|$, is
even more important than $|f(p) - g(p)|$.
The pointwise gradient interpolation error $\|\nabla f(p) - \nabla g(p)\|$ at
the worst point $p$ in a simplex scales linearly with the size of the simplex,
and is very sensitive to the shape of the simplex.
An early analysis by Bramble and Zl\'{a}mal~\cite{bramble70} for $\R^2$
seemed to implicate triangles with small angles (near $0^\circ$), but
a famous paper by Babu\v{s}ka and Aziz~\cite{babuska76} vindicated small angles
and placed the blame on large angles (near $180^\circ$).
A triangle's circumradius alone suffices to produce a reasonable rough bound
on the pointwise gradient interpolation error over the triangle, but
a stronger bound can be obtained by taking into account additional information
about the triangle's shape~\cite{shewchuk02b}.
Similarly, in this paper we show that a triangle's circumradius alone suffices
to produce a reasonable rough bound~(\ref{tnlbound1}) on the normal error, but
a stronger bound~(\ref{tnlbound2}) can be obtained by taking into account
more information about shape.

\section{A Surface Interpolation Lemma}
\label{prox}

Recall that, given a simplex $\tau$ whose vertices lie on a manifold $\Sigma$,
we desire an upper bound on the {\em interpolation error} $|x\tilde{x}|$
for a point $x \in \tau$.
To develop intuition, consider the lower bound first.
Suppose $\Sigma$ is a $k$-sphere embedded in $\R^d$, with radius $L$ and
centered at the origin, as illustrated in Figure~\ref{lower}.
Then the medial axis $M$ is a $(d - k - 1)$-flat passing through the origin;
for our purposes, the origin is the only medial axis point relevant here.
Let $\tau$ be a $j$-simplex whose vertices all lie on $\Sigma$.
Let $B_\tau \supset \tau$ be $\tau$'s diametric ball
(the smallest closed $d$-ball whose boundary
passes through all of $\tau$'s vertices).
Let $c$ and $R$ be the center and radius of $B_\tau$, respectively.
Observe that $\tau$'s circumcircle is a cross section of $\Sigma$.

\begin{figure}
\centerline{\input{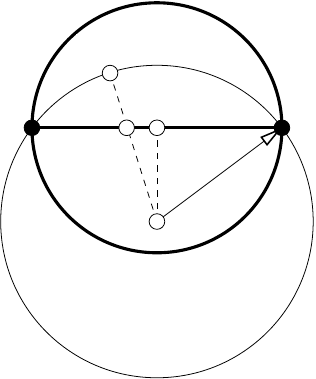_t}}

\caption{\label{lower}  \protect\small \sf
A worst-case example for interpolation error.
}
\end{figure}

Consider a point $x \in \tau$ and
the point $\tilde{x}$ nearest $x$ on $\Sigma$.
As $x$ lies on the line segment connecting $\tilde{x}$ to
the center of $\Sigma$, and the length of that line segment is $L$,
it follows that the distance from $x$ to $\tilde{x}$---the interpolation error
that we wish to study---is $|x\tilde{x}| = L - \|x\|$.
Observe that the line segment connecting $c$ (the center of $B_\tau$) to
the origin (the center of $\Sigma$) is
perpendicular to the $j$-flat in which $\tau$ lies.
By Pythagoras' Theorem, $L^2 = \|c\|^2 + R^2$ and
$\|x\|^2 = \|c\|^2 + |xc|^2 = L^2 - R^2 + |xc|^2$, so
\begin{equation}
|x\tilde{x}| = L - \sqrt{L^2 - R^2 + |xc|^2}.
\label{intlower}
\end{equation}
In this example, $L = \ebs(\tilde{x}) = \lfs(\tilde{x})$, so
in Equation~(\ref{intlower})
we can replace $L$ with either of those expressions.

The following lemma shows that for any smooth manifold $\Sigma$,
the interpolation error can never be worse than in this example.
Moreover (and happily), the crucial characteristic of $\tau$ is not
its circumradius $R$, but the radius of its smallest enclosing ball.
(Note that in the lemma below, $B_\tau$ can be {\em any} enclosing ball.)

\begin{lemma}[Surface Interpolation Lemma]
\label{interplemma}
Let $\Sigma \subset \R^d$ be a smooth $k$-manifold, and
let $M$ be its medial axis.
Let $\tau$ be a simplex (of any dimension) whose vertices lie on $\Sigma$.
Let $B_\tau$ be a closed $d$-ball such that $B_\tau \supseteq \tau$
(e.g., $\tau$'s smallest enclosing ball or $\tau$'s diametric ball),
let $c$ be its center, and let $r$ be its radius.
For every point $x \in \tau$ such that $x \not\in M$,
if $r < \ebs(\tilde{x})$ then
\[
|x\tilde{x}| \leq
\ebs(\tilde{x}) - \sqrt{\ebs(\tilde{x})^2 - r^2 + |xc|^2}
\leq \ebs(\tilde{x}) - \sqrt{\ebs(\tilde{x})^2 - r^2}
= \frac{r^2}{2 \, \ebs(\tilde{x})} +
{\cal O}\left( \frac{r^4}{\ebs(\tilde{x})^3} \right),
\]
and if $r < \lfs(\tilde{x})$ then
\[
|x\tilde{x}| \leq
\lfs(\tilde{x}) - \sqrt{\lfs(\tilde{x})^2 - r^2 + |xc|^2}
\leq \lfs(\tilde{x}) - \sqrt{\lfs(\tilde{x})^2 - r^2}
= \frac{r^2}{2 \, \lfs(\tilde{x})} +
{\cal O}\left( \frac{r^4}{\lfs(\tilde{x})^3} \right).
\]
The first inequality in each line is sharp for balls that circumscribe $\tau$
(that is, when every vertex of $\tau$ lies on the boundary of $B_\tau$):
there exists a $\Sigma$ such that
$|x\tilde{x}| = \ebs(\tilde{x}) - \sqrt{\ebs(\tilde{x})^2 - r^2 + |xc|^2} =
\lfs(\tilde{x}) - \sqrt{\lfs(\tilde{x})^2 - r^2 + |xc|^2}$
for every simplex $\tau$ whose vertices lie on $\Sigma$,
every $x \in \tau \setminus M$, and
every ball $B_\tau$ that circumscribes $\tau$ and has radius
$r < \ebs(\tilde{x})$.
The second inequality in each line is sharp when $x = c$.
\end{lemma}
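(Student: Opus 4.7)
The plan is to construct a surface-free medial ball tangent to $\Sigma$ at $\tilde{x}$ on the same side of $\Sigma$ as $x$, and combine it with the enclosing ball $B_\tau$ by a convexity argument exploiting the fact that $p \mapsto |pc|^2 - |pm|^2$ is affine in $p$ for any fixed centers $c$ and $m$ (the quadratic terms cancel).

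First I would construct the medial ball. Since $x \notin M$, the segment $x\tilde{x}$ lies along a normal line to $\Sigma$ at $\tilde{x}$. Starting from a zero-radius ball at $\tilde{x}$ and moving its center along the ray $\vec{\tilde{x}x}$ while keeping the ball tangent to $\Sigma$ at $\tilde{x}$, I grow the ball until it is no longer surface-free; this produces a medial ball $B_1$ with center $m_1$ and radius $r_1$. By definition of $\ebs$, $r_1 \geq \ebs(\tilde{x})$. The ball $B(x, |x\tilde{x}|)$ is itself a surface-free ball tangent at $\tilde{x}$ with center on this same ray, so the growing ball passes through radius $|x\tilde{x}|$; hence $r_1 \geq |x\tilde{x}|$ and $x$ lies on the segment $\tilde{x}m_1$ with $|xm_1| = r_1 - |x\tilde{x}|$. (If the ball grows without bound into a halfspace, then $\Sigma$ and hence $\tau$ lie in the closed complementary halfspace, forcing $x = \tilde{x}$ and making the bound trivial.)

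Next, writing $x = \sum_i \lambda_i v_i$ as a convex combination of the vertices $v_i$ of $\tau$, affinity of $f(p) = |pc|^2 - |pm_1|^2$ gives $f(x) = \sum_i \lambda_i f(v_i)$. For each vertex, $v_i \in B_\tau$ yields $|v_i c| \leq r$, while $v_i \in \Sigma$ together with $B_1$ being open and surface-free yields $|v_i m_1| \geq r_1$; thus $f(v_i) \leq r^2 - r_1^2$ and so $|xc|^2 - |xm_1|^2 \leq r^2 - r_1^2$. Substituting $|xm_1| = r_1 - |x\tilde{x}|$ and rearranging yields the quadratic
\[
|x\tilde{x}|^2 - 2 r_1\, |x\tilde{x}| + (r^2 - |xc|^2) \geq 0.
\]
The hypothesis $r < \ebs(\tilde{x}) \leq r_1$ together with $|xc| \leq r$ makes the discriminant nonnegative, and because $x \neq m_1$ forces $|x\tilde{x}| < r_1$, the quadratic forces $|x\tilde{x}|$ to lie at or below its smaller root: $|x\tilde{x}| \leq r_1 - \sqrt{r_1^2 - r^2 + |xc|^2}$.

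Finally, the map $\rho \mapsto \rho - \sqrt{\rho^2 - r^2 + |xc|^2}$ has derivative $1 - \rho / \sqrt{\rho^2 - r^2 + |xc|^2} \leq 0$ for $\rho \geq r \geq |xc|$, so replacing $r_1$ by the smaller value $\ebs(\tilde{x})$ only enlarges the right-hand side, giving the first stated bound. Dropping $|xc|^2$ inside the radical gives the second (weaker) bound, and applying the same monotonicity once more, using $\lfs(\tilde{x}) \leq \ebs(\tilde{x})$, yields the $\lfs$ version. Sharpness follows from the $k$-sphere computation already displayed before the lemma: there $\ebs(\tilde{x}) = \lfs(\tilde{x}) = L$, and a direct calculation shows that every circumscribing ball satisfies $L^2 - r^2 + |xc|^2 = L^2 - R^2 + |x - c_\tau|^2$, producing equality termwise for every $x \in \tau \setminus M$. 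The main obstacle is the geometric setup of Step 1: one must verify that the growing ball truly becomes a medial ball of radius at least $\ebs(\tilde{x})$, and, when the codimension exceeds $1$, that choosing the ray $\vec{\tilde{x}x}$ places $x$ on the segment $\tilde{x}m_1$; the degenerate halfspace case is handled by the convexity cleanup noted above.
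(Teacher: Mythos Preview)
Your proposal is correct and follows essentially the same approach as the paper: construct the medial ball $B$ tangent at $\tilde{x}$ whose center lies on the ray $\vec{\tilde{x}x}$, derive $|xm|^2 \geq L^2 - r^2 + |xc|^2$ from the fact that every vertex lies in $B_\tau$ but outside $B$, and then use monotonicity in $L$ to pass to $\ebs(\tilde{x})$ and $\lfs(\tilde{x})$. The only difference is packaging: the paper obtains the key inequality via the radical hyperplane $\Pi$ of $B$ and $B_\tau$ (showing $\tau\subset H_\tau$ and hence $(x-z)\cdot(c-m)\geq 0$), whereas you obtain it by observing that $p\mapsto |pc|^2-|pm_1|^2$ is affine and bounding it at the vertices---these are the same argument, since $\Pi$ is precisely the level set $\{f=r^2-L^2\}$ and convexity of the halfspace is affinity of $f$.
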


\begin{proof}
Let $x$ be a point on $\tau \setminus M$; then $\tilde{x}$ is uniquely defined.
If $x \in \Sigma$ then $|x\tilde{x}| = 0$ and the result follows immediately,
so assume that $x \not\in \Sigma$; thus $\tau$ has at least two vertices.
Let $B$ be the open medial ball tangent to $\Sigma$ at $\tilde{x}$ such that
$x$ lies on the line segment $\tilde{x}m$,
where $m \in M$ is the center of $B$, as illustrated in Figure~\ref{interpo}.
($B$ is the medial ball found by ``growing'' a ball tangent to $\Sigma$
at $\tilde{x}$ so its center moves linearly through $x$ and
stops at a medial axis point $m$.)
As $B \cap \Sigma = \emptyset$, no vertex of $\tau$ lies in $B$.
(Note that $B$ cannot degenerate to a halfspace because
a halfspace containing $x$ would contain at least one vertex of $\tau$.)
Let $L = |\tilde{x}m| \geq \ebs(\tilde{x}) \geq \lfs(\tilde{x})$ be
the radius of $B$.
As $x$ lies on $\tilde{x}m$, $|x\tilde{x}| = L - |xm|$.

\begin{figure}
\centerline{\input{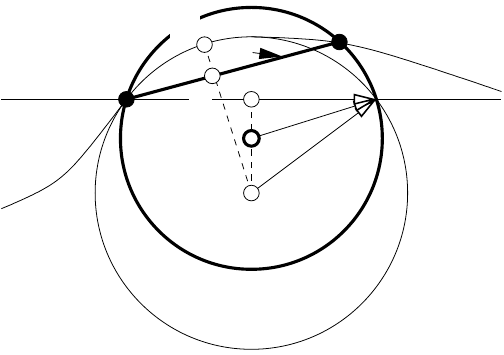_t}}

\caption{\label{interpo}  \protect\small \sf
The Interpolation Lemma:  for every $x \in \tau \setminus M$,
$|x\tilde{x}| \leq \ebs(\tilde{x}) - \sqrt{\ebs(\tilde{x})^2 - r^2 + |xc|^2}$,
where $c$ and $r$ are the center and radius of a ball $B_\tau \supset \tau$.
}
\end{figure}


Let $S$ be the intersection of the boundaries of $B$ and $B_\tau$.
By the following reasoning, $S$ is a $(d - 2)$-sphere
(e.g., a circle in $\R^3$).
The two balls must intersect at more than one point,
as $x \in \tau \subseteq B_\tau$ and $x$ is in the open ball $B$.
By assumption, $r < \ebs(\tilde{x}) \leq L$, so
it is not possible that $B \subseteq B_\tau$.
Nor is it possible that $B_\tau$ is included in the closure of $B$, as
$\tau$'s vertices (there are at least two) lie in $B_\tau$ but not in $B$.

Let $\Pi$ be the unique hyperplane that includes $S$.
$\Pi$ divides $\R^d$ into a closed halfspace $H_\tau$ and
an open halfspace $H$, as illustrated.
The portion of $B_\tau$ in $H_\tau$ includes the portion of $B$ in $H_\tau$
(i.e., $B_\tau \cap H_\tau \supset B \cap H_\tau$), whereas
the portion of $B$ in $H$ includes the portion of $B_\tau$ in $H$
(i.e., $B \cap H \supset B_\tau \cap H$).
Every vertex of $\tau$ lies in $H_\tau$, because
$B_\tau$ contains every vertex of $\tau$ and $B$ contains no vertex of $\tau$.
Hence $\tau \subset H_\tau$. 

Recall that $m$ and $c$ are the centers of $B$ and $B_\tau$, respectively, and
observe that $\Pi$ is orthogonal to $cm$.
Moreover, the vector $c - m$ points
``out of'' the halfspace $H$ and ``into'' the halfspace $H_\tau$.
Let $z$ and $\rho$ be the center and radius of $S$.
Observe that $z \in \Pi$ and $z$ is collinear with $cm$.
By Pythagoras' Theorem,
$L^2 = \rho^2 + |zm|^2$ and $r^2 = \rho^2 + |zc|^2$.

Every point $x \in \tau$ lies in $H_\tau$, and
$z$ lies on the boundary of $H_\tau$, so
the angle separating the vectors $x - z$ and $c - m$ is at most $90^\circ$.
Hence
\begin{equation}
(x - z) \cdot (c - m) \geq 0.
\label{dotpos}
\end{equation}
It follows that
\begin{eqnarray*}
|xm|^2 - L^2 + r^2 - |xc|^2
&    = & |xm|^2 - |zm|^2 + |zc|^2 - |xc|^2 \\
&    = & - 2x \cdot m + 2z \cdot m - 2z \cdot c + 2x \cdot c \\
&    = & 2 (x - z) \cdot (c - m) \\
& \geq & 0.
\end{eqnarray*}
Therefore,
\begin{eqnarray}
|x\tilde{x}| & =    & L - |xm|  \nonumber \\
             & \leq & L - \sqrt{L^2 - r^2 + |xc|^2}  \label{Lexp} \\
             & \leq & \ebs(\tilde{x}) -
                      \sqrt{\ebs(\tilde{x})^2 - r^2 + |xc|^2}.  \label{ebsexp}
                      \\
             & \leq & \lfs(\tilde{x}) -
                      \sqrt{\lfs(\tilde{x})^2 - r^2 + |xc|^2}.  \label{lfsexp}
\end{eqnarray}
Inequalities~(\ref{ebsexp}) and~(\ref{lfsexp}) follow because (\ref{Lexp})~is
monotonically decreasing in $L$ (contrary to superficial appearances) and
$L \geq \ebs(\tilde{x}) \geq \lfs(\tilde{x})$.

We observe that the inequality~(\ref{Lexp}) holds with equality
if and only if~(\ref{dotpos}) holds with equality, which happens
if and only if $x \in \Pi$.
The inequalities~(\ref{ebsexp}) and~(\ref{lfsexp}) hold with equality when
$\Sigma$ is a $k$-sphere, in which case the medial ball $B$ is always
the open $d$-ball with the same center and radius as $\Sigma$.
Both inequalities hold with equality when
$\Sigma$ is a $k$-sphere and the boundary of $B_\tau$ circumscribes $\tau$,
in which case $S$ also circumscribes $\tau$, so $\tau \subset \Pi$ and
every point $x \in \tau$ lies on $\Pi$.
Hence, the inequalities are sharp as claimed.
\end{proof}

\section{Triangle Normal Lemmas}
\label{tnls}

Given a triangle $\tau$ whose vertices lie on a $k$-manifold $\Sigma$,
we derive bounds on how well $\tau$'s normal space
locally approximates the spaces normal to $\Sigma$ in the vicinity of $\tau$.
In this section, we derive a bound on
$\angle(N_\tau, N_v\Sigma) = \angle(\aff{\tau}, T_v\Sigma)$ where
$v$ is a vertex of $\tau$.
(In codimension~$1$, we can interpret this as the angle between normal vectors,
albeit a nonobtuse angle---we do not distinguish between a vector $n_v$ and
its negation $-n_v$.)
We first consider surfaces embedded in $\R^3$, then
we show that the same bound applies to
$k$-manifolds embedded in $\R^d$ for all $d > k \geq 2$
(for which the normal vectors are replaced by normal spaces).
In Section~\ref{etnls}, we give a bound on
$\angle(N_\tau, N_{\tilde{x}}\Sigma) =
\angle(\aff{\tau}, T_{\tilde{x}}\Sigma)$ applicable to every point
$x \in \tau$, not just at the vertices.
Hence, it applies to the normal spaces of all the points in $\nu(\tau)$.
Note that in the lemma, each occurrence of $\ebs(v)$ can be replaced by
$\lfs(v)$, as $\lfs(v) \leq \ebs(v)$.

\begin{lemma}[Triangle Normal Lemma for $\R^3$]
\label{lem:tnl}
Let $\Sigma$ be a smooth $2$-manifold without boundary embedded in $\R^3$. Let $\tau$ be a triangle whose vertices lie on $\Sigma$. Let $R$ be $\tau$'s circumradius. Let $v$ be a vertex of $\tau$ and let $\phi$ be $\tau$'s plane angle at $v$. Then
\[
\angle(N_\tau, N_v\Sigma) = \angle(\aff{\tau}, T_v\Sigma) \leq \arcsin{\left( \frac{R}{\ebs(v)} \max \left\{ \cot \frac{\phi}{2}, 1 \right\} \right)}.
\]
(Note that the argument $\cot \frac{\phi}{2}$ dominates if $\phi$ is acute and the argument $1$ dominates if $\phi$ is obtuse.)
In particular, if $v$ is the vertex at $\tau$'s largest plane angle (so $\phi \geq 60^\circ$) and $R < \ebs(v) / \sqrt{3} \doteq 0.577 \, \ebs(v)$, then
\[
\angle(N_\tau, N_v\Sigma) = \angle(\aff{\tau}, T_v\Sigma) \leq \arcsin \frac{\sqrt{3}R}{\ebs(v)}.
\]
\end{lemma}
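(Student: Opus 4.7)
My plan is to bound $\beta := \angle(\aff\tau, T_v\Sigma)$ in three stages: a chord inequality at $v$, a trigonometric identity for $\sin\beta$, and a worst-case optimization over admissible edge inclinations.

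First I will establish the chord inequality. Let $B^+, B^-$ be the two medial balls tangent to $\Sigma$ at $v$; each has radius at least $\ebs(v)$ and is internally tangent to $T_v\Sigma$ at $v$. Since $\Sigma$ is disjoint from both open balls, the other two vertices $u_1, u_2$ of $\tau$ lie outside them. Writing the sphere equation in coordinates with $v$ at the origin, $T_v\Sigma$ as $\{z = 0\}$, and the two medial balls centered at $(0,0,\pm\ebs(v))$ of radius $\ebs(v)$, the ``outside'' condition immediately gives
\[
|z_i| \;\leq\; \frac{x_i^2 + y_i^2 + z_i^2}{2\,\ebs(v)} \;=\; \frac{|vu_i|^2}{2\,\ebs(v)},
\]
where $(x_i, y_i, z_i) = u_i - v$. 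Setting $\ell_i = |vu_i|$ and writing $\alpha_i$ for the signed angle between $\overline{vu_i}$ and $T_v\Sigma$, this reads $|\sin\alpha_i| \leq \ell_i/(2\,\ebs(v))$.

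Next I will express $\sin^2\beta$ in terms of $\alpha_1, \alpha_2, \phi$. Parametrize each unit edge as $e_i/\ell_i = (\cos\alpha_i\cos\psi_i,\; \cos\alpha_i\sin\psi_i,\; \sin\alpha_i)$, so that $\cos\phi = (e_1/\ell_1)\cdot(e_2/\ell_2)$ determines the azimuthal gap $\psi_2 - \psi_1$. Computing $\cos\beta = |(e_1\times e_2)\cdot n_v|/|e_1\times e_2|$ and simplifying with $\sin^2 + \cos^2 = 1$ produces the clean spherical-law-of-cosines identity
\[
\sin^2\beta \;=\; \frac{\sin^2\alpha_1 + \sin^2\alpha_2 - 2\sin\alpha_1\sin\alpha_2\cos\phi}{\sin^2\phi}.
\]

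Finally I will maximize over $s_i := \sin\alpha_i$ subject to $|s_i| \leq \ell_i/(2\,\ebs(v))$. The numerator is a quadratic form on a rectangle, so its maximum occurs at a corner. For acute $\phi$ (i.e.\ $\cos\phi > 0$) the maximum has $s_1 s_2 < 0$ and equals $(\ell_1^2 + \ell_2^2 + 2\ell_1\ell_2\cos\phi)/(4\,\ebs(v)^2)$; for obtuse $\phi$ it has $s_1 s_2 > 0$ and equals $(\ell_1^2 + \ell_2^2 - 2\ell_1\ell_2\cos\phi)/(4\,\ebs(v)^2) = |u_1 u_2|^2/(4\,\ebs(v)^2)$. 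In the obtuse case, the law of sines $|u_1 u_2| = 2R\sin\phi$ immediately yields $\sin\beta \leq R/\ebs(v)$. In the acute case I will substitute $\ell_i = 2R\sin\phi_j$ (with $\phi_1+\phi_2+\phi=\pi$) and apply $\sin^2 A + \sin^2 B = 1 - \cos(A+B)\cos(A-B)$ to show the numerator is maximized by the isoceles configuration $\phi_1 = \phi_2$, evaluating to $16R^2\cos^4(\phi/2)$ and hence $\sin\beta \leq R\cot(\phi/2)/\ebs(v)$. Combining the two cases gives the $\max\{\cot(\phi/2),1\}$ bound, and the second half of the lemma then follows because $\phi \geq 60^\circ$ forces this max to be at most $\sqrt{3}$. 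The main obstacle I expect is the acute-case step: cleanly separating the inner maximization over the $s_i$ from the outer maximization over triangle shape at fixed $\phi, R, \ebs(v)$, and verifying that the isoceles triangle truly dominates the numerator $\ell_1^2 + \ell_2^2 + 2\ell_1\ell_2\cos\phi$ via the identity above.
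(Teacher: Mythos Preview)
Your proof is correct and takes a genuinely different route from the paper's. The paper works geometrically in the plane $\aff\tau$: it intersects the two tangent balls of radius $\ebs(v)$ with $\aff\tau$, obtaining two circles of radius $\rho=\ebs(v)\sin\theta$ whose open interiors cannot contain the other two vertices, and then does a case split on whether the line through the two circle centers enters $\tau$ at $v$. In the ``into $\tau$'' case the constraint $\rho\le R$ is immediate; in the other case the paper uses the perpendicular bisectors of $vu$ and $vw$ together with the inscribed-angle theorem to trap the circumcenter outside a lens-shaped region and conclude $R\ge\rho\min\{\tan(\phi/2),1\}$.

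Your argument is analytic rather than synthetic: you encode the ball constraints as elevation bounds $|\sin\alpha_i|\le\ell_i/(2\,\ebs(v))$, write down the exact spherical-cosine identity for $\sin^2\beta$, and then maximize a convex quadratic over a box. Your case split is on the sign of $\cos\phi$, which is cleaner than the paper's geometric ``pointing into $\tau$'' dichotomy, and your intermediate identity $\sin^2\beta=(\sin^2\alpha_1+\sin^2\alpha_2-2\sin\alpha_1\sin\alpha_2\cos\phi)/\sin^2\phi$ is a reusable byproduct. The step you flagged as a potential obstacle---showing that $\ell_1^2+\ell_2^2+2\ell_1\ell_2\cos\phi$ is maximized at the isoceles shape for fixed $R,\phi$---is sound: substituting $\ell_i=2R\sin\phi_j$ and applying your product-to-sum identity gives $4R^2\bigl(1+\cos^2\phi+2\cos\phi\cos(\phi_1-\phi_2)\bigr)$, which for acute $\phi$ is indeed maximized at $\phi_1=\phi_2$ with value $4R^2(1+\cos\phi)^2=16R^2\cos^4(\phi/2)$. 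The paper's proof is more visual and explains geometrically why the equilateral triangle is the worst case for the $\sqrt{3}$ bound; yours is more mechanical and arguably easier to verify line by line.
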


\begin{proof}
Let $\theta = \angle(N_\tau, N_v\Sigma)$. Consider the two balls of radius $\ebs(v)$ tangent to $\Sigma$ at $v$. The plane $\aff{\tau}$ intersects these two balls in two circles of radius $\rho = \ebs{(v)} \sin \theta$, as Figure \ref{fig:ntnl1} shows. We consider these two circles $C_1$ and $C_2$ in the plane $\aff{\tau}$. Notice that since $C_1$ and $C_2$ are cross sections of surface-free balls, their insides are surface-free. In particular, $u$ and $w$ cannot lie strictly inside $C_1$ or $C_2$. We will use this fact to establish a relationship between the radius $\rho$ of these circles and the circumradius $R$ of $\tau$.

\begin{figure}[h!]
\begin{center}
\includegraphics[width=0.35\textwidth]{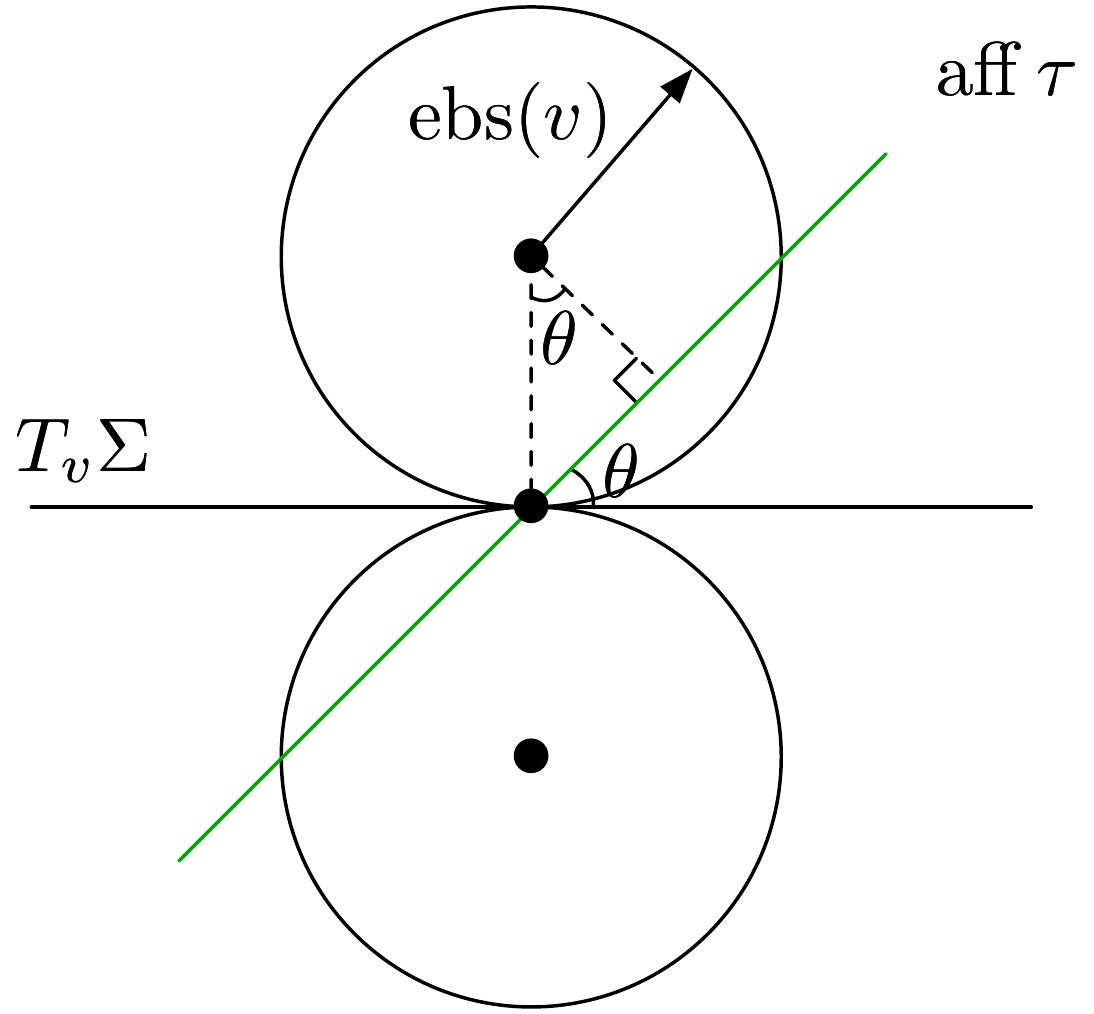}
\caption{\label{fig:ntnl1}  \protect\small \sf
The affine hull $\aff{\tau}$ intersects the surface-free balls of radius $\ebs(v)$ in two circles of radius $\ebs{(v)} \sin \theta$.}
\end{center}
\end{figure}

Let $c_1$ and $c_2$ be the centers of $C_1$ and $C_2$, respectively. Imagine that as $\theta$ increases, and $\aff{\tau}$ tilts further, $C_1$ grows in the direction $\vec{vc_1}$ while remaining in contact with $v$, and $C_2$ grows in the opposite direction. We distinguish two cases: (1)~either $\vec{vc_1}$ or $\vec{vc_2}$ points into $\tau$ or (2)~both $\vec{vc_1}$ and $\vec{vc_2}$ point to the exterior of $\tau$. See Figures~\ref{fig:ntnl2} and~\ref{fig:ntnl3}.

\begin{figure}[h!]
  \centering
  \includegraphics[width=0.5\linewidth]{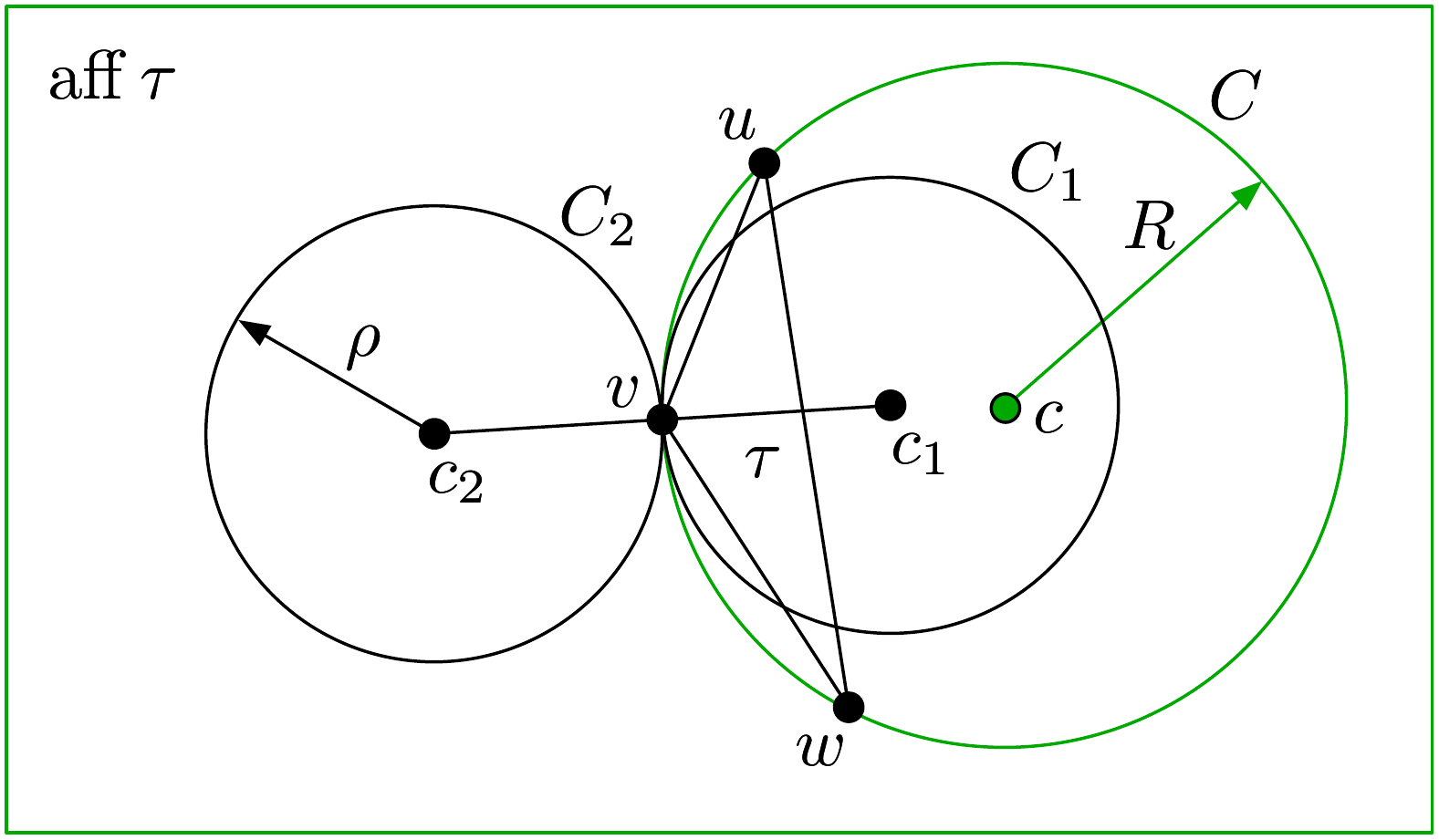}
  \caption{\label{fig:ntnl2}  \protect\small \sf
    Case 1, where one of the two circles grows into the interior of $\tau$. In this case, the radius of $C_1$ is at most $R$.
  }
\end{figure}

Let $\tau = \triangle uvw$. Let $C$ be the circumcircle of $\tau$ in the plane $\aff{\tau}$, and let $c$ be the center of $C$. In case~1, illustrated in Figure~\ref{fig:ntnl2}, one of $vc_1$ or $vc_2$ points into $\tau$; suppose it is $vc_1$. $C_1$ cannot grow indefinitely; eventually it intersects $u$ or $w$. The maximum angle is achieved when $C_1 = C$, whereupon $u$ and $w$ prevent further growth. Thus $R \geq \rho = \ebs{(v)} \sin{\theta}$ which implies that $\theta \leq \arcsin{\frac{R}{\ebs{(v)}}}$.


\begin{figure}[h!]
  \centering
  \includegraphics[width=0.5\linewidth]{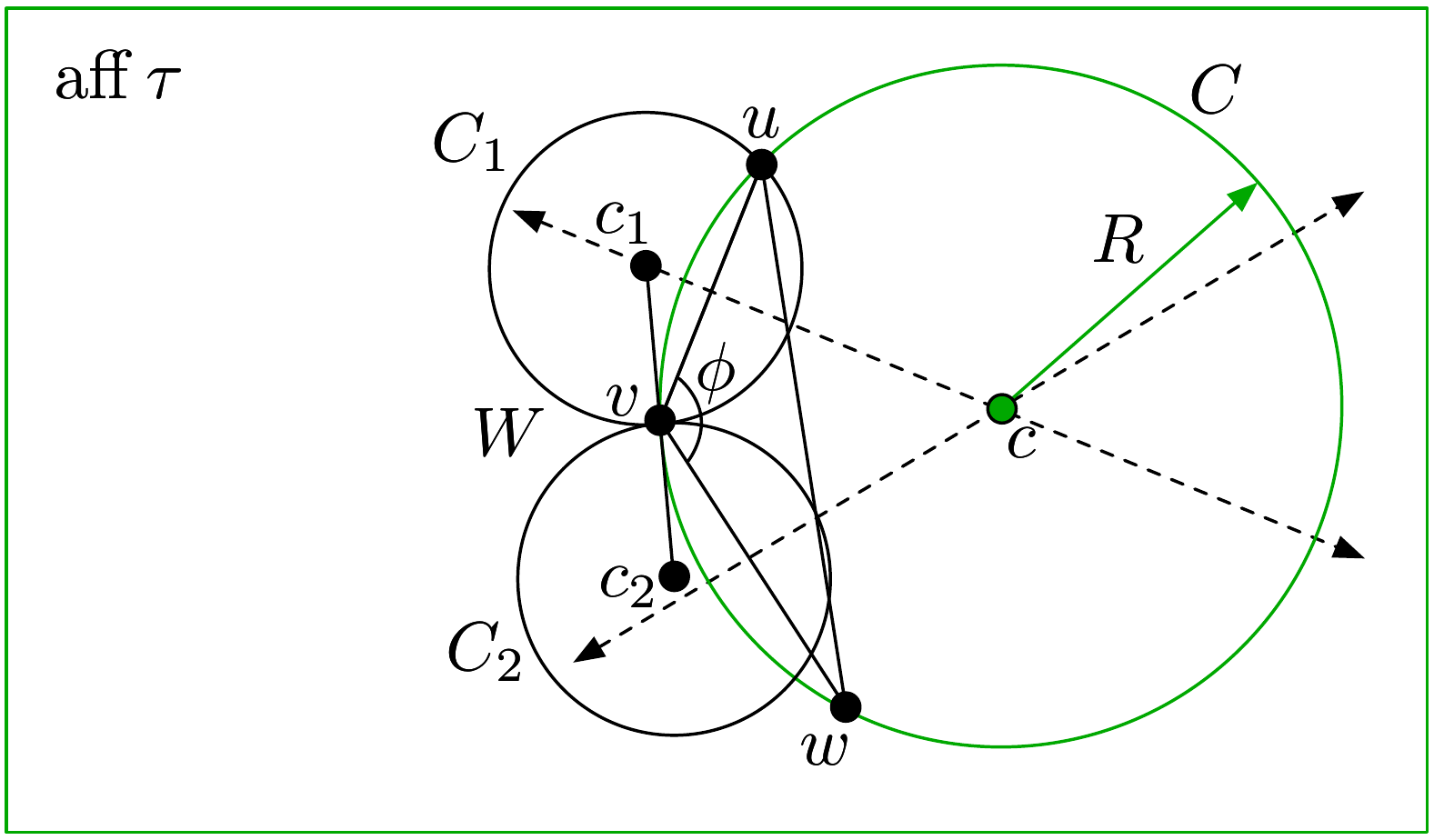}
  \caption{\label{fig:ntnl3}  \protect\small \sf
    Case 2, where both circles grow into the exterior of $\tau$. In this case, the bound depends on the angle $\phi$ at $v$.
  }
\end{figure}

In case~2, the line segment $c_1c_2$ does not intersect $\tau$ except at $v$, as Figure~\ref{fig:ntnl3} shows. The bisectors of $vu$ and $vw$ divide the plane into four wedges with apex $c$; let $W$ be the closed wedge that contains $v$. As $vu$ and $vw$ meet at $v$ at an angle $\phi$, the wedge angle where the bisectors meet at $c$ is $180^\circ - \phi$, as illustrated in Figure~\ref{fig:arcs}.

\begin{figure}[h!]
\centerline{\input{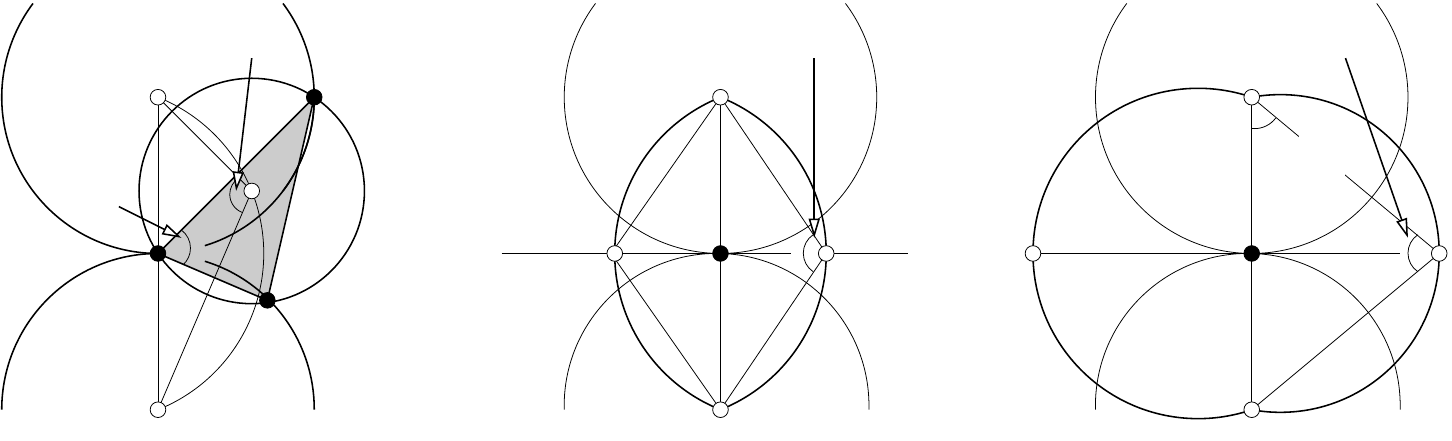_t}}

\caption{\label{fig:arcs}  \protect\small \sf
Left:  the triangle angle of $\phi$ induces a wedge angle of $180^\circ - \phi$.  Center:  the circumcenter $c$ cannot lie inside the region enclosed by arcs $A_1$ and $A_2$, here illustrated for an acute $\phi$.  Right:  For an obtuse $\phi$.
}
\end{figure}


As $u$ is not inside the circle $C_1$, $|uc_1| \geq |vc_1|$. Similarly, $|wc_1| \geq |vc_1|$. It follows that $c_1 \in W$. Similarly, $c_2 \in W$. Therefore, $\angle{c_1cc_2} \leq 180^\circ - \phi$. By circle geometry, this inequality implies that we can draw two circular arcs with endpoints $c_1$ and $c_2$ such that $c$ cannot be strictly inside the region enclosed by the arcs. Specifically, let $\ell$ be the line that bisects $c_1c_2$. Let $q_1$ and $q_2$ be the two distinct points on $\ell$ such that $\angle c_1q_1c_2 = 180^\circ - \phi$ and $\angle c_1q_2c_2 = 180^\circ - \phi$, as illustrated in Figure~\ref{fig:arcs}. Both of these angles are bisected by $\ell$; that is, $\angle c_iq_jv = 90^\circ - \phi / 2$ for $i \in \{ 1, 2 \}$, $j \in \{1, 2 \}$. Thus we have four similar right triangles adjoining $v$ of the form $\triangle c_ivq_j$ with $\angle q_jc_iv = \phi / 2$.

Observe that $|vc_1| = |vc_2| = \rho = \ebs{(v)} \sin{\theta}$, hence $|vq_1| = |vq_2| = \rho \tan (\phi / 2)$. Consider the unique circular arc $A_1$ having endpoints $c_1$ and $c_2$ and passing through $q_1$, and its mirror image arc $A_2$ passing through $q_2$, as illustrated. By circle geometry, for every point $q$ on $A_1$ or $A_2$ (except $c_1$ or $c_2$), $\angle c_1qc_2 = 180^\circ - \phi$, and for every point $q$ enclosed between the two arcs, $\angle c_1qc_2 > 180^\circ - \phi$. It follows that the circumcenter $c$ cannot lie in the region enclosed by $A_1$ and $A_2$.

As $\sin \theta \leq \rho / \ebs(v)$, our goal is to determine the maximum possible value of $\rho$ for a fixed value of $R$. Equivalently, we wish to determine the minimum value of $R = |vc|$ for a fixed $\rho$. In other words, with $\rho$ fixed, what is the closest that $c$ can get to $v$? If $\phi \leq 90^\circ$, then the distance $|vc|$ is minimized for $c = q_1$ or $c = q_2$ (see Figure~\ref{fig:arcs}, center), in which case $R = |vq_1| = \rho \tan (\phi / 2)$. If $\phi \geq 90^\circ$, then $|vc|$ is minimized for $c = c_1$ or $c = c_2$ (see Figure~\ref{fig:arcs}, right), in which case $r = |vc_1| = \rho$. It follows that $r \geq \rho \min \{ \tan(\phi / 2), 1 \}$, hence $\sin \theta \leq \rho / \ebs(v) \leq R \max \{ \cot(\phi / 2), 1 \} / \ebs(v)$.
\end{proof}

Compare Lemma~\ref{lem:tnl} with two prior versions of
the Triangle Normal Lemma.
The following lemma gives the best known bound, which was proven by
Amenta, Choi, Dey, and Leekha~\cite{amenta02}.
(The derivation can also be found in
Dey~\cite{dey07} and Cheng et al.~\cite{cheng12}.)

\begin{lemma}
\label{lem:otnl}
Let $\Sigma$ be a smooth $2$-manifold without boundary embedded in $\R^3$. Let $\tau$ be a triangle whose vertices lie on $\Sigma$. Let $R$ be $\tau$'s circumradius. Let $v$ be the vertex of $\tau$ at $\tau$'s largest plane angle. If $R \leq 0.433 \, \lfs(v)$, then
\begin{equation*}
\angle(N_\tau, N_v\Sigma) = \angle(\aff{\tau}, T_v\Sigma) \leq \arcsin\left(\frac{R}{\lfs(v)}\right) + \arcsin\left(\frac{2}{\sqrt{3}}\sin\left(2\arcsin\left(\frac{R}{\lfs(v)}\right)\right)\right).
\end{equation*}
\end{lemma}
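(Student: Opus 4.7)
My plan follows the two-step template of Amenta--Choi--Dey--Leekha for codimension-one triangle normal bounds, aligned with the two arcsine terms. Set $\alpha := \arcsin(R/\lfs(v))$.

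First I would establish the standard edge-tangent inequality: for any $p, q \in \Sigma$ with $|pq| < 2\lfs(p)$,
\[
\sin\angle(pq, T_p\Sigma) \leq \frac{|pq|}{2\lfs(p)}.
\]
The argument considers the two open medial balls of radius $\lfs(p)$ tangent to $\Sigma$ at $p$, centered at $p \pm \lfs(p)\, n_p$. Both balls are surface-free, so $q$ lies outside both; expanding $|q - (p \pm \lfs(p)\, n_p)|^2 \geq \lfs(p)^2$ yields $|(q-p) \cdot n_p| \leq |pq|^2/(2\lfs(p))$, which after dividing by $|pq|$ is the claim. Applied to the edges $vu$ and $vw$, whose lengths are each at most the circumdiameter $2R$, this gives $\angle(vu, T_v\Sigma) \leq \alpha$ and $\angle(vw, T_v\Sigma) \leq \alpha$, accounting for the first arcsine.

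Second, I would decompose the rotation in $SO(3)$ sending $T_v\Sigma$ onto $\aff\tau$ as a composition of two rotations: (i) a rotation of angle $\alpha_1 = \angle(vu, T_v\Sigma) \leq \alpha$ about the line in $T_v\Sigma$ through $v$ perpendicular to the projection of $vu$, which carries $T_v\Sigma$ onto a plane $P$ containing $vu$; followed by (ii) a rotation of some angle $\beta$ about the axis $vu$, carrying $P$ onto $\aff\tau$. By subadditivity of rotation angles under composition in $SO(3)$ (the triangle inequality on $SO(3)$), $\angle(\aff\tau, T_v\Sigma) \leq \alpha_1 + \beta \leq \alpha + \beta$. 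Setting up coordinates with $T_v\Sigma$ as the $xy$-plane and the projection of $vu$ along the $x$-axis, a direct computation expresses $\sin\beta = \mathrm{dist}(w, P)/(|vw|\sin\phi)$ in closed form in $\alpha_1, \alpha_2, \phi$ (with a sign depending on which side of $T_v\Sigma$ contains $w$). Combining the edge-tangent bounds $\sin\alpha_i \leq R/\lfs(v)$, the law-of-sines relations $|vu|, |vw| \leq 2R\sin(\text{opposite angle})$, the hypothesis $\sin\phi \geq \sqrt{3}/2$ (from $\phi \geq 60^\circ$), and the identity $2\sin\alpha\cos\alpha = \sin(2\alpha)$, the aim is to reduce $\sin\beta$ to $(2/\sqrt{3})\sin(2\alpha)$, which is the second arcsine. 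The assumption $R \leq 0.433\,\lfs(v)$ is precisely what keeps $(2/\sqrt{3})\sin(2\alpha) \leq 1$, so that the outer $\arcsin$ is well-defined.

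The main obstacle is the trigonometric reduction in the second step. The naive substitution $\alpha_1 = \alpha_2 = \alpha$, $|\cos\phi| \leq 1$, $\sin\phi \geq \sqrt{3}/2$ yields only $\sin\beta \leq (4/\sqrt{3})\tan\alpha$, which is weaker than the claimed $(2/\sqrt{3})\sin(2\alpha) = (4/\sqrt{3})\sin\alpha\cos\alpha$ by a factor of $1/\cos^2\alpha$. To recover the sharper form, one must track the interdependence between the $\alpha_i$ and the triangle's shape through the law of sines: an edge can saturate the tangent bound only when its opposite angle is close to $90^\circ$, which is incompatible with simultaneously maximizing the factor $(1 + |\cos\phi|)/\sin\phi$ at the extremes $\phi \to 60^\circ$ or $\phi \to 180^\circ$. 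Quantifying this trade-off carefully --- and in particular handling the obtuse regime $\phi > 90^\circ$ where $\tan(\phi/2)$ threatens to blow up but is tamed by the shortening of the edges $vu, vw$ --- is the technical crux.
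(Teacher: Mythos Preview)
The paper does not prove this lemma; it is stated as the prior bound of Amenta, Choi, Dey, and Leekha and cited without proof. So there is no in-paper argument to compare against, and your outline is essentially the standard ACDL derivation.

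The two-step structure is correct, but you leave the second step unfinished and overestimate its difficulty. Your ``naive'' bound $(4/\sqrt{3})\tan\alpha$ appears to be a miscalculation. In your coordinates, $w = |vw|(\cos\alpha_2\cos\theta,\,\cos\alpha_2\sin\theta,\,\pm\sin\alpha_2)$ and $n_P = (-\sin\alpha_1,0,\cos\alpha_1)$, so $|w\cdot n_P| \leq |vw|\sin(\alpha_1+\alpha_2)$ and hence
\[
\sin\beta \;\leq\; \frac{\sin(\alpha_1+\alpha_2)}{\sin\phi}.
\]
For $60^\circ \leq \phi \leq 120^\circ$ one has $\sin\phi \geq \sqrt{3}/2$ and $\alpha_1+\alpha_2 \leq 2\alpha < 60^\circ$, giving $\sin\beta \leq (2/\sqrt{3})\sin(2\alpha)$ directly---no trade-off analysis needed. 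The law-of-sines idea you sketch is required only for $\phi > 120^\circ$: writing $\psi_u,\psi_w$ for the other two angles (so $\psi_u+\psi_w < 60^\circ$), the edge-tangent bound sharpens to $\sin\alpha_1+\sin\alpha_2 \leq \sin\alpha\,(\sin\psi_u+\sin\psi_w)$, and then
\[
\sin\beta \;\leq\; \sin\alpha\cdot\frac{\sin\psi_u+\sin\psi_w}{\sin(\psi_u+\psi_w)} \;=\; \sin\alpha\cdot\frac{\cos\tfrac{\psi_u-\psi_w}{2}}{\cos\tfrac{\psi_u+\psi_w}{2}} \;\leq\; \frac{2}{\sqrt{3}}\sin\alpha \;\leq\; \frac{2}{\sqrt{3}}\sin(2\alpha),
\]
the last step using $\cos\alpha > 1/2$. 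So the ``technical crux'' reduces to a short case split; you have identified the right ingredients but stopped short of assembling them.
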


The year before, Cheng, Dey, Edelsbrunner, and Sullivan~\cite{cheng01} derived
a stronger bound of $\arcsin \frac{2R}{\lfs(v)}$, but
it seems to have escaped notice.
All three bounds are plotted in Figure~\ref{tnlplots} (left).
Lemma~\ref{lem:tnl} improves upon both prior results in three ways:
it is tighter for the case covered by Lemma~\ref{lem:otnl}
(improving the Cheng et al.\ bound by a factor of $1.15$ and
the Amenta et al.\ bound by a factor of $1.91$
for small values of $R / \lfs(v)$),
it applies to any vertex $v$ of $\tau$, and
it takes into account $\tau$'s angle at $v$.

Lemma~\ref{lem:tnl} extends straightforwardly to
higher-dimensional manifolds embedded in higher-dimensional Euclidean spaces
(but not to higher-dimensional simplices).
Given a triangle $\tau$ whose vertices lie on
a $k$-manifold $\Sigma \subset \R^d$,
we wish to know the worst-case angle deviation
$\angle (\aff \tau, T_v\Sigma)$ between $\tau$'s affine hull and
the tangent space at a vertex $v$ of $\tau$.

\begin{lemma}[Triangle Normal Lemma for $\R^d$]
\label{lem:tnlhigh}
Let $\Sigma$ be a smooth $k$-manifold without boundary embedded in $\R^d$,
with $k \geq 2$.
Let $\tau$ be a triangle whose vertices lie on $\Sigma$.
Let $R$ be $\tau$'s circumradius.
Let $v$ be a vertex of $\tau$ and let $\phi$ be $\tau$'s plane angle at $v$.
Then
\[
\angle(N_\tau, N_v\Sigma) = \angle(\aff{\tau}, T_v\Sigma) \leq
\arcsin{\left( \frac{R}{\ebs(v)} \max \left\{ \cot \frac{\phi}{2}, 1 \right\}
               \right)}.
\]
\end{lemma}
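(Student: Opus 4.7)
The plan is to reduce to the planar argument inside $\aff\tau$ that drives the proof of Lemma~\ref{lem:tnl}. That proof uses the two surface-free balls of radius $\ebs(v)$ tangent to $\Sigma$ at $v$; in codimension $d-k \geq 2$ the set of such tangent directions in $N_v\Sigma$ is an entire $(d-k-1)$-sphere, so the task is to single out a pair whose cross-sections with $\aff\tau$ reproduce the two disks of radius $\ebs(v)\sin\theta$ through $v$ used in Lemma~\ref{lem:tnl}, where $\theta := \angle(\aff\tau, T_v\Sigma)$.

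Using the max--min definition of $\angle(\aff\tau, T_v\Sigma)$ (valid because $\dim \aff\tau = 2 \leq k$), I would pick a unit vector $u$ in the direction space $V := \aff\tau - v$ whose angle with $T_v\Sigma$ realizes $\theta$. Decomposing $u = u_T + u_N$ with $u_T \in T_v\Sigma$ and $u_N \in N_v\Sigma$ gives $\|u_N\| = \sin\theta$; assuming $\sin\theta > 0$ (else the bound is trivial), set $n := u_N/\sin\theta$, a unit vector in $N_v\Sigma$. I would then verify that the open balls $B_\pm$ of radius $\ebs(v)$ tangent to $\Sigma$ at $v$ with centers $v \pm \ebs(v) n$ are surface-free. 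The growing-ball construction of Section~\ref{tour} applied to the directions $\pm n$ produces medial balls (or halfspaces) of radius at least $\ebs(v)$, by the definition of $\ebs(v)$ as the minimum such radius over all directions in $N_v\Sigma$, and $B_\pm$ is contained in that medial ball by the elementary inclusion that, for $L \geq \ebs(v)$, any point with $\|x - (v+\ebs(v) n)\| < \ebs(v)$ also satisfies $\|x - (v+Ln)\| < L$.

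Finally, I would compute the cross-sections $C_\pm := B_\pm \cap \aff\tau$. The extremal property of $u$ makes it a singular vector of the orthogonal projection $V \to T_v\Sigma$, which in turn makes $u_N$ orthogonal to the $N_v\Sigma$-component of any vector in $V$ perpendicular to $u$; a short computation then yields that the orthogonal projection of $n$ onto $V$ equals $\sin\theta \cdot u$. Hence $C_\pm$ are open disks in $\aff\tau$ of radius $\ebs(v) \sin\theta$ centered at $v \pm \ebs(v)\sin\theta \cdot u$ and passing through $v$---exactly the opening configuration of Lemma~\ref{lem:tnl}. The other two vertices of $\tau$ lie on $\Sigma$ and so outside $B_\pm$, hence not strictly inside $C_\pm$, so the wedge/circumcenter case analysis of Lemma~\ref{lem:tnl} applies verbatim and delivers $\sin\theta \leq R\max\{\cot(\phi/2),1\}/\ebs(v)$. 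The main obstacle is verifying that $B_\pm$ are surface-free in codimension $\geq 2$: here $n$ is chosen by the geometry of $\aff\tau$ relative to $T_v\Sigma$ rather than by the medial axis, so surface-freeness must be derived from the minimality of $\ebs(v)$ via the containment argument, rather than taken for granted as in the codimension-one dichotomy of an ``inside'' and an ``outside'' tangent ball.
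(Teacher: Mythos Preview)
Your proposal is correct and follows the same overall strategy as the paper: pick out a distinguished normal direction at $v$, use the two surface-free tangent balls of radius $\ebs(v)$ along that direction, and feed their cross-sections with $\aff\tau$ into the planar case analysis of Lemma~\ref{lem:tnl}. The execution is dual to the paper's. The paper works on the normal side: it selects a line $\ell_v\subset N_v\Sigma$ realizing $\angle(N_\tau,N_v\Sigma)$, then builds a $3$-flat $\Xi\supset\tau\cup\ell_v$ and reruns the $\R^3$ proof inside~$\Xi$ verbatim, so the cross-section radii come for free from Lemma~\ref{lem:tnl}. You work on the tangent side: you select the extremal direction $u\in\aff\tau$, recover the normal direction $n=u_N/\|u_N\|$, and stay entirely in the $2$-plane $\aff\tau$; the price is the singular-vector computation showing that the projection of $n$ onto $\aff\tau$ is exactly $\sin\theta\cdot u$, so that the disks $C_\pm$ have the right radius $\ebs(v)\sin\theta$. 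Both routes rely on the same extremality fact (that the minimizing line in the larger flat has its residual orthogonal to that flat), the paper just leaves it more implicit. Your containment argument for the surface-freeness of $B_\pm$ is also fine and matches the paper's one-line observation that every open ball of radius $\ebs(v)$ tangent to $\Sigma$ at $v$ is surface-free.
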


\begin{proof}
The dimension of $N_v\Sigma$ is less than or equal to
the dimension of $N_\tau$ (which is $d - 2$), so by definition,
\[
\angle (N_\tau, N_v\Sigma) = \max_{\ell_v \subset N_v\Sigma}
\min_{\ell_N \subset N_\tau} \angle (\ell_N, \ell_v)
\]
where $\ell_v$ and $\ell_N$ are lines.
Let $\ell_v \subset N_v\Sigma$ and $\ell_N \subset N_\tau$ be lines such that
$\angle (N_\tau, N_v\Sigma) = \angle (\ell_N, \ell_v)$,
translated so they pass through $v$ (without loss of generality).
If $\angle (\ell_N, \ell_v) = 0$ the result follows immediately, so
suppose that $\angle (\ell_N, \ell_v) > 0$.
Let $\Pi$ be the plane ($2$-flat) that includes both $\ell_v$ and $\ell_N$.
Let $\ell_\tau \subset \Pi$ be the line through $v$ perpendicular to $\ell_N$
in $\Pi$.
As $\ell_N$ is chosen from the flat $N_\tau$ to minimize
its angle with $\ell_v$, the line $\ell_\tau$ is orthogonal to $N_\tau$, and
therefore $\ell_\tau$ lies in the complementary flat $\aff \tau$.
Let $\Xi \subset \R^d$ be the unique $3$-flat that includes
$\tau$ and $\ell_N$.
As $\Xi$ includes $\aff \tau$, $\ell_\tau \subset \Xi$; and
as $\Xi$ also includes $\ell_N$, $\Pi \subset \Xi$, hence $\ell_v \subset \Xi$.

We reiterate the proof of Lemma~\ref{lem:tnl} to bound
$\angle (\ell_N, \ell_v)$, with $\Xi$ replacing $\R^3$ and
$\ell_v$ replacing $N_v\Sigma$ in the proof.
The proof of Lemma~\ref{lem:tnl} relies entirely on the fact that
$\tau$'s vertices cannot be inside the two open balls of radius $\ebs(v)$
that are centered on $\ell_v$ and touching $v$.
In the present setting in $\R^d$,
every open ball of radius $\ebs(v)$ tangent to $\Sigma$ at $v$ is surface-free;
two of those balls have centers on $\ell_v$.
The intersections of these balls with $\Xi$ are
surface-free $3$-balls of radius $\ebs(v)$, so
the constraints harnessed by the proof of Lemma~\ref{lem:tnl} hold
in the subspace $\Xi$.
Therefore, the bound of Lemma~\ref{lem:tnl} holds for $k$-manifolds in $\R^d$
as well.
%
\end{proof}


\section{Normal Variation Lemmas}
\label{variation}

Recall that, given two nearby points $p, q \in \Sigma$,
we seek an upper bound on the {\em normal variation},
the angle $\angle (n_p, n_q)$ separating their normal vectors
(in codimension~$1$) or
the angle $\angle (N_p\Sigma, N_q\Sigma)$ separating their normal spaces
(in codimension~$2$ or higher).

\begin{lemma}[Normal Variation Lemma for Codimension $1$]
\label{ptnormal}
Let $\Sigma \subset \mathbb{R}^d$ be
a bounded, smooth $(d - 1)$-manifold without boundary.
Consider two points $p, q \in \Sigma$ and let $\delta = |pq| / \lfs(p)$.
Let $n_p$ and $n_q$ be
outward-directed vectors normal to $\Sigma$ at $p$ and $q$, respectively.

If $\delta < \sqrt{4 \sqrt{5} - 8} \doteq 0.9717$, then
$\angle (n_p, n_q) \leq \eta_1(\delta)$ where
\begin{equation}
\eta_1(\delta) = \arccos \left( 1 - \frac{\delta^2}{2 \sqrt{1 - \delta^2}}
                                \right)
\approx
\delta + \frac{7}{24} \delta^3 + \frac{123}{640} \delta^5 +
\frac{1\mbox{,}083}{7\mbox{,}168} \delta^7 + O(\delta^9).
\label{codim1bound}
\end{equation}
Moreover, if $\delta_N$ is the component of $\delta$
parallel to $p$'s normal line $N_p\Sigma$---that is,
$\delta_N$ is the distance from $q$ to the tangent space $T_p\Sigma$
divided by $\lfs(p)$---we have the bound
(which is stronger when $\delta_N \neq 0$)
\begin{equation}
\angle (n_p, n_q) \leq
\arccos \left(1 -
\frac{\delta^2 - \delta^4 / 2 - 2 \delta_N^2}
     {\sqrt{(1 - \delta^2) \, \left( (2 - \delta^2)^2 - 4 \delta_N^2 \right)}}
\right).
\label{codim1bound2}
\end{equation}
\end{lemma}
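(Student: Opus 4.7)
I would normalize $\lfs(p)=1$, so that $|pq|=\delta$, and set coordinates with $p$ at the origin and $n_p = e_d$ along the last coordinate axis. Decompose $q = q_T + q_N\, e_d$ with $q_T \in T_p\Sigma$, so that $\delta_N = |q_N|$ and $|q_T|^2 = \delta^2 - q_N^2$. The goal is a lower bound on $\cos\theta$ where $\theta = \angle(n_p, n_q)$.

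The first step is a position bound on $q$. The two open balls of radius~$\lfs(p)=1$ tangent to $\Sigma$ at $p$ on opposite sides of $T_p\Sigma$ are surface-free, since each is contained in the corresponding medial ball of radius $\ebs^\pm(p)\geq \lfs(p) = 1$. Because $q\in\Sigma$ lies outside both, expanding $|q\mp n_p|^2\geq 1$ and using $|q|^2=\delta^2$ yields $|q_N|\leq \delta^2/2$, and therefore $\delta_N\leq \delta^2/2$.

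The decisive second step uses the codimension-$1$ hypothesis: $\Sigma$ is a closed $(d-1)$-manifold, so it separates $\R^d$ into an inside region (containing every inside medial ball) and an outside region (containing every outside one). The outside medial ball $B_p^+$ at $p$ (center $C_p^+ = r_p^+ n_p$, radius $r_p^+\geq 1$) therefore lies entirely in the outside region, while the inside medial ball $B_q^-$ at $q$ (center $C_q^- = q - r_q^- n_q$, radius $r_q^-\geq \lfs(q)\geq 1-\delta$ by the Lipschitz property of $\lfs$) lies entirely in the inside region. These regions are disjoint, so the two open balls are disjoint, giving
\[
|C_p^+ - C_q^-|^2 \;\geq\; (r_p^+ + r_q^-)^2.
\]
Expanding the left-hand side using $q \cdot n_p = q_N$ and $n_p\cdot n_q = \cos\theta$ and rearranging produces
\[
1 - \cos\theta \;\leq\; \frac{\delta^2 - 2\,r_p^+ q_N - 2\,r_q^-(q\cdot n_q)}{2\,r_p^+ r_q^-}.
\]
The symmetric disjointness of $B_p^-$ with $B_q^+$ gives a companion inequality in which the signs of the $q_N$ and $q\cdot n_q$ terms are reversed.

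The main obstacle is the third step: combining these two disjointness inequalities with the auxiliary constraints $|q_N|\leq \delta^2/(2\,r_p^\pm)$ and $|q\cdot n_q|\leq \delta^2/(2\,r_q^\pm)$ (each coming from surface-freeness of the corresponding tangent ball) and optimizing over the admissible radii. Naively using $r_p^\pm = 1$ and $r_q^\pm = 1-\delta$ yields only the weaker bound $1-\cos\theta\leq \delta^2/(2(1-\delta))$; the sharper factor $\sqrt{1-\delta^2}$ appearing in~\eqref{codim1bound} requires tracking the joint coupling between $q_N$, $q\cdot n_q$, and the radii, rather than bounding each quantity independently. Carrying the $q_N$-terms explicitly through the algebra gives the refined bound~\eqref{codim1bound2}; verifying that its right-hand side is monotone in $\delta_N^2$ shows the worst case is $\delta_N = 0$, which recovers~$\eta_1(\delta)$. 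Finally, the Taylor series in~\eqref{codim1bound} is a standard expansion of $\arccos(1-x)$ about $x=0$, and the cutoff $\delta < \sqrt{4\sqrt{5}-8}$ is exactly the largest $\delta$ for which $1 - \delta^2/(2\sqrt{1-\delta^2}) \geq -1$, so that $\arccos$ is defined.
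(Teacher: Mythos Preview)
Your framework is sound and shares the paper's core idea---disjointness of an ``outside'' ball at one point and an ``inside'' ball at the other---but there is a genuine gap at exactly the spot you flag as ``the main obstacle.'' Using the actual medial balls at $q$ with radii $r_q^\pm\geq\lfs(q)\geq 1-\delta$ cannot recover the factor $\sqrt{1-\delta^2}$; the optimization over $r_p^\pm,r_q^\pm$ in your setup really does bottom out at $1-\delta$, because nothing in your constraints couples the two sides tightly enough. The paper's resolution is a construction you are missing: rather than the medial balls at $q$, take the open ball $F$ of radius $\lfs(p)=1$ centered at $p$ (which is \emph{medial-free}, not surface-free), let $z,z'$ be the two points where the normal line $N_q\Sigma$ meets $\partial F$, and let $Z,Z'$ be the balls centered at $z,z'$ with boundary through $q$. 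Because $F$ contains no medial axis point, each of $Z,Z'$ is a sub-ball of the corresponding medial ball at $q$, hence still surface-free and still on a definite side of $\Sigma$; but now the centers $z,z'$ lie exactly on the unit sphere about $p$, so the Intersecting Chords Theorem gives $\ell\ell'=1-\delta^2$ where $\ell=|qz|$, $\ell'=|qz'|$. It is this exact product relation, together with the disjointness inequalities $\ell+1\leq|zo'|$ and $\ell'+1\leq|z'o|$ (where $o,o'=\pm n_p$), that yields the closed-form~\eqref{codim1bound2} via a ``dual tangency'' argument; the worst case is $\ell=\ell'=\sqrt{1-\delta^2}$, which is where your missing factor comes from.

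So the fix is concrete: replace your balls $B_q^\pm$ of uncontrolled radius by the clipped balls $Z,Z'$ with centers on $\partial F$. Everything else in your outline (the separation argument via inside/outside regions, the reduction of~\eqref{codim1bound} to~\eqref{codim1bound2} by checking monotonicity in $\delta_N^2$, the cutoff as the threshold where the $\arccos$ argument reaches $-1$) is correct and matches the paper.
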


Recall that the right-hand side of Inequality~(\ref{codim1bound}) is
plotted in green in Figure~\ref{nvlplots}.
Two isocontour plots of the right-hand side of Inequality~(\ref{codim1bound2})
appear in Figure~\ref{nvlcodim1}.
In most circumstances where a normal variation lemma is applied,
$|pq|$ is known but the normal component $\delta_N$ is not.
It is clear from the plot on the left that for any given value of $\delta$,
the bound~(\ref{codim1bound2}) is weakest at $\delta_N = 0$;
this substitution yields the bound~(\ref{codim1bound}).
Hence the green curve in Figure~\ref{nvlplots} also represents
the horizontal midline of the isocontour plot.

\begin{figure}
\centerline{
  \setlength{\unitlength}{2.9in}
  \begin{picture}(1,0.981)
  \put(0,0){\includegraphics[width=2.9in]{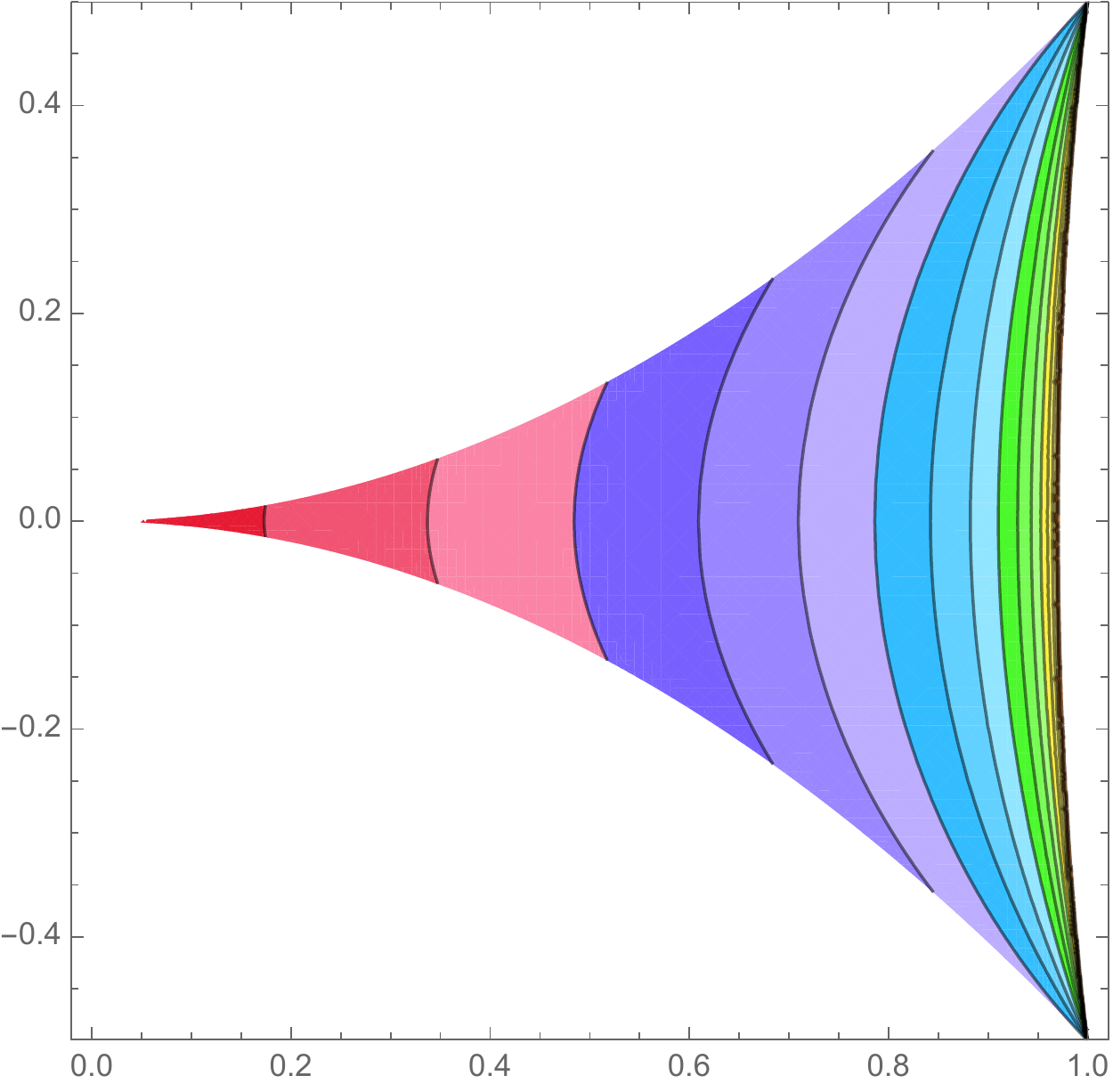}}
  \put(0.88,0){$\delta$}
  \put(0.09,0.89){$\delta_N$}
  \end{picture}
  \includegraphics[width=0.55in]{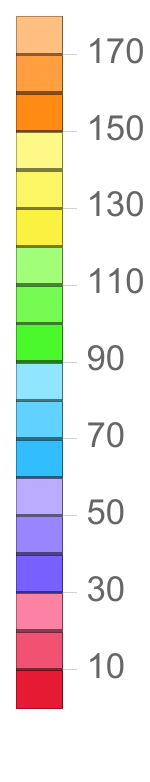}
  \begin{picture}(1,0.981)
  \put(0,0){\includegraphics[width=2.9in]{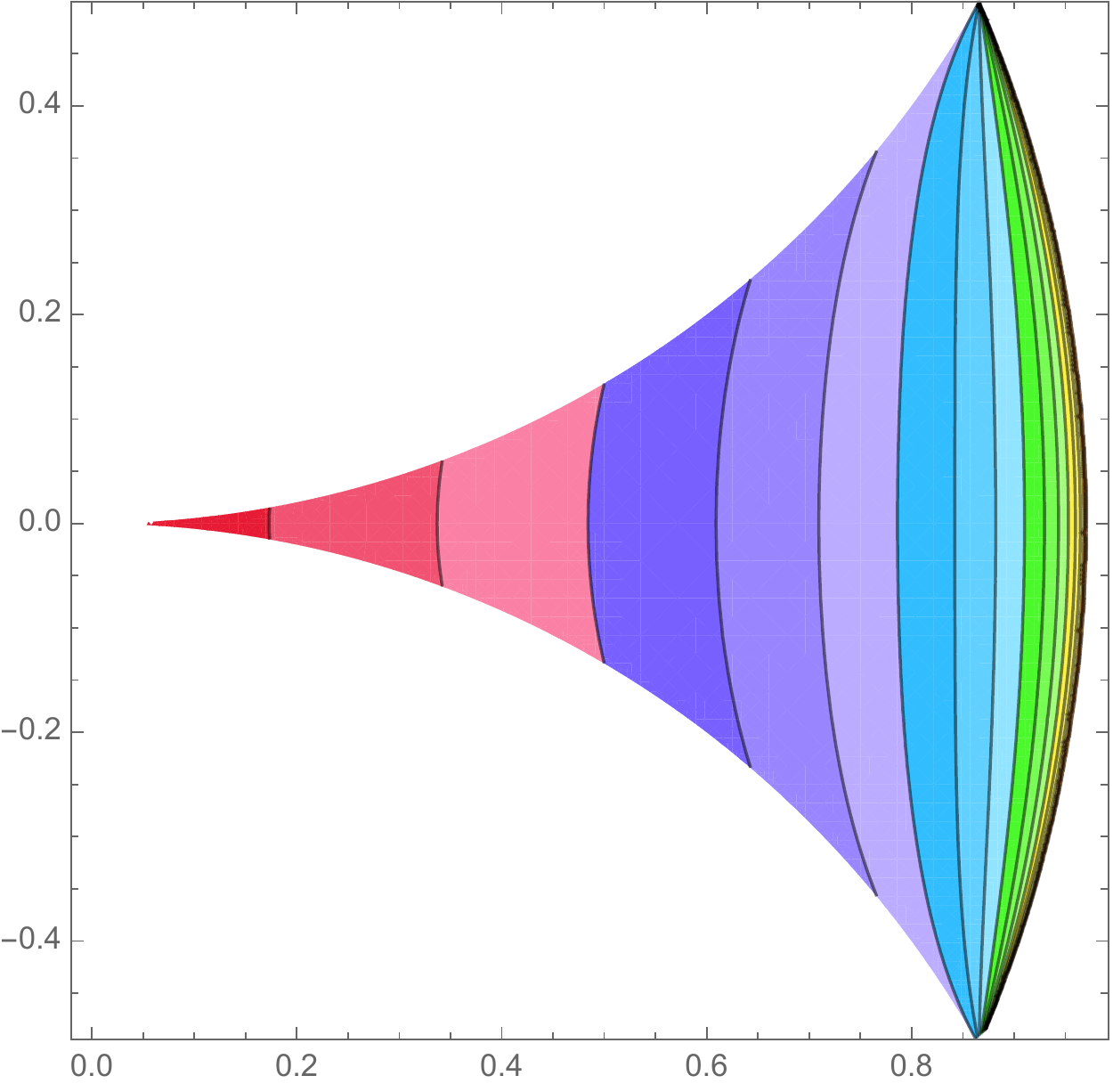}}
  \put(0.9,0){$\delta_T$}
  \put(0.09,0.89){$\delta_N$}
  \end{picture}
}


\caption{\label{nvlcodim1}  \protect\small \sf
Left:  The upper bound (in degrees) for $\angle (n_p, n_q)$
as specified by Inequality~(\ref{codim1bound2}), as a function of
$\delta = |pq| / \lfs(p)$ (on the horizontal axis) and
the normal component $\delta_N$ of $\delta$ (on the vertical axis);
i.e., $\delta_N$ is the distance from $q$ to the tangent space $T_p\Sigma$
divided by $\lfs(p)$.
Right:  A similar plot with one change:
the horizontal axis is the tangential component $\delta_T$ of $\delta$;
i.e., the distance from $q$ to the normal line $N_p\Sigma$ divided by
$\lfs(p)$.
Hence, this plot reflects the Euclidean geometry of the space,
with $p$ at the origin, $T_p\Sigma$ on the horizontal midline,
$q$ somewhere in the colored region, and
the two surface-free balls of radius $\lfs(p)$ (white) blocking
$q$ from occupying certain regions (compare with Figure~\ref{master}).
}
\end{figure}

\begin{proof}
Let $F$ be the open ball with center $p$ and radius $\lfs(p)$.
By the definition of $\lfs$,
$F$ does not intersect the medial axis $M$ of $\Sigma$.
The line $N_p\Sigma$ normal to $\Sigma$ at $p$ intersects the boundary of $F$
at two opposite poles $o$ and $o'$.
By assumption, $|pq| < \lfs(p)$, so $q \in F$ and
the normal line $N_q\Sigma$ intersects the boundary of $F$
at two points $z$ and $z'$.

\begin{figure}
\centerline{\input{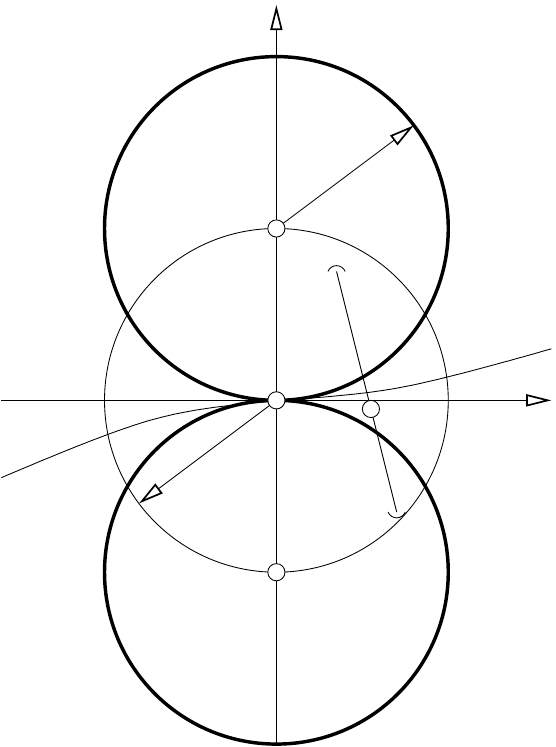_t}}

\caption{\label{master}
The medial-free ball $F$ and surface-free balls $B$ and $B'$ associated with
a point $p \in \Sigma$.
}
\end{figure}

Let $B$ and $B'$ be the two open balls of radius $\lfs(p)$ tangent to $\Sigma$
at $p$, illustrated in Figure~\ref{master};
the centers of these balls are $o$ and $o'$, respectively.
Neither ball intersects $\Sigma$ nor contains $q$.
Let $Z$ be the open ball centered at $z$ with its boundary passing through $q$,
and define $Z'$ likewise with its center at $z'$.
Each of $Z$ and $Z'$ is a subset of a medial ball tangent to $\Sigma$ at $q$,
so neither ball intersects $\Sigma$ nor contains $p$.
Without loss of generality,
suppose that $B'$ and $Z'$ are enclosed by $\Sigma$, whereas 
$B$ and $Z$ are outside the region enclosed by $\Sigma$.
Therefore, $B$ is disjoint from $Z'$, and $B'$ is disjoint from $Z$.
(However, $B$ may intersect $Z$, and $B'$ may intersect $Z'$.)
This property is the key to obtaining a bound on $\angle (n_p, n_q)$.


We create a $d$-axis coordinate system with $p$ at the origin.
For simplicity, we will scale the coordinate system so that $\lfs(p) = 1$;
hence $B$, $B'$, and $F$ all have radius $1$.
The $x_2$-axis is the normal line $N_p\Sigma$,
which passes through $o$, $p$, and $o'$ and is directed so that
$o = (0, 1, 0, \ldots, 0)$, $o' = (0, -1, 0, \ldots, 0)$, and
$p = (0, 0, \ldots, 0)$, as illustrated in Figure~\ref{master}.
The remaining axes span the tangent space $T_p\Sigma$.
We choose an $x_1$-axis on $T_p\Sigma$ such that its positive branch passes
through the orthogonal projection of $q$ onto $T_p\Sigma$;
that is, $q_1 \geq 0$ and $q_3 = q_4 = \ldots = q_d = 0$.
We choose an $x_3$-axis on $T_p\Sigma$ such that
the normal line $N_q\Sigma$ lies in the $x_1$-$x_2$-$x_3$-space
(which is now the affine hull of $N_p\Sigma \cup N_q\Sigma$).
Hence, $z_4 = z_5 = \ldots = z_d = 0$ and $z'_4 = z'_5 = \ldots = z'_d = 0$.
All the important features of the problem lie on the three-dimensional
cross-section of $\R^d$ specified by these three coordinates.

Let $\ell = |qz|$ and $\ell' = |qz'|$ be
the radii of $Z$ and $Z'$, respectively.
The unit ball $F$ has a diameter $e$ that passes through $q$
(and through the origin $p$, like all diameters of $F$).
The point $q$ subdivides $e$ into
a line segment of length $1 + \|q\|$ and a line segment of length $1 - \|q\|$.
As this diameter and the line segment $zz'$ intersect each other at $q$,
they are both chords of a common circle on the boundary of $F$,
illustrated in Figure~\ref{ichords}.
By the well-known Intersecting Chords Theorem,
\begin{equation}
\ell \ell' = (1 + \|q\|) \, (1 - \|q\|) = 1 - \|q\|^2,
\label{intchords}
\end{equation}
where $\|q\|^2 = q_1^2 + q_2^2$ (as $q$'s other coordinates are zero).
Note that $\|q\|$ is the distance from $p$ to $q$.

\begin{figure}
\centerline{\input{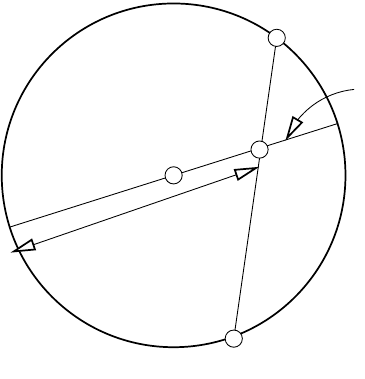_t}}

\caption{\label{ichords}
The Intersecting Chords Theorem:  $\ell \ell' = (1 + \|q\|) \, (1 - \|q\|)$.
}
\end{figure}


The balls $Z$ and $B'$ (with centers $z$ and $o'$ and radii $\ell$ and $1$)
are disjoint and $z$ lies on the unit sphere, so
\begin{eqnarray}
\ell + 1 & \leq & |zo'|  \nonumber \\
         & =    & \sqrt{z_1^2 + (z_2 + 1)^2 + z_3^2}  \nonumber \\
         & =    & \sqrt{2 + 2z_2}.
\label{disjoint1}
\end{eqnarray}
Symmetrically, $Z'$ and $B$ are disjoint, so
\begin{equation}
\ell' + 1 \leq \sqrt{2 - 2z'_2}.
\label{disjoint2}
\end{equation}

If one of the inequalities~(\ref{disjoint1}) or~(\ref{disjoint2}) holds
with equality, we call this event a {\em tangency}.
A tangency between $Z$ and $B'$ implies that
\begin{equation}
z_2 = \frac{(\ell + 1)^2}{2} - 1,  
\label{tangency1}
\end{equation}
whereas a tangency between $Z'$ and $B$ implies that
\begin{equation}
z'_2 = 1 - \frac{(\ell' + 1)^2}{2}.  
\label{tangency2}
\end{equation}

Our goal is to find an upper bound on $\angle (n_p, n_q)$.
This angle is the tilt of the line segment $zq$ relative to the $x_2$-axis, so
\[
\cos \angle (n_p, n_q) =
\frac{z_2 - q_2}{|zq|} = \frac{z_2 - q_2}{\ell}.
\]

To find a bound, we seek to determine the configuration(s) in which
the angle is maximized---hence, the cosine is minimized---subject to
Inequalities~(\ref{disjoint1}) and~(\ref{disjoint2}).
We will see that the maximum is obtained when both inequalities hold with
equality, a configuration we call a {\em dual tangency},
illustrated in Figure~\ref{dualtan}.

\begin{figure}
\centerline{
  \setlength{\unitlength}{2.8in}
  \begin{picture}(1,0.967)
  \put(0,0){\includegraphics[width=2.8in]{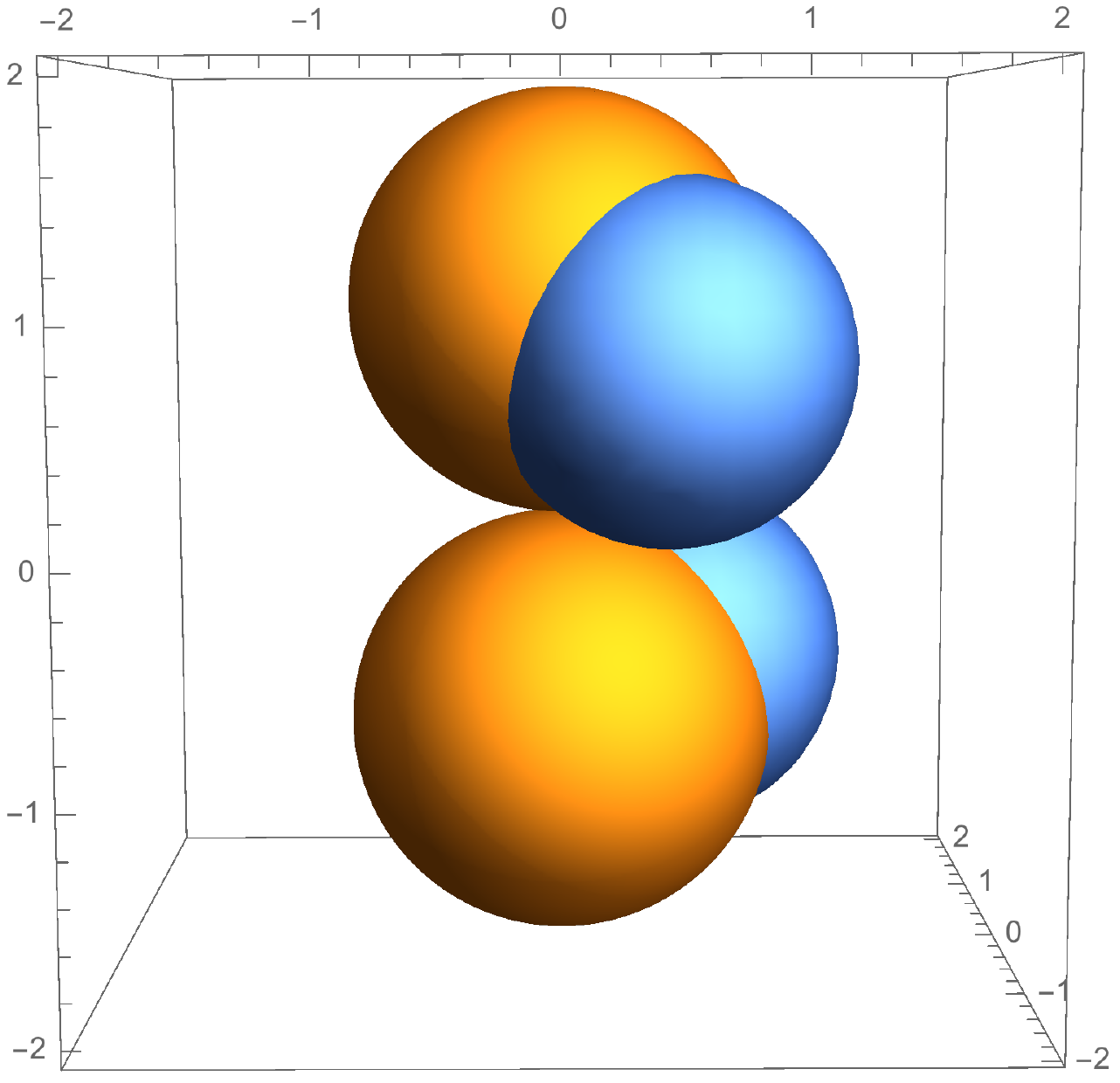}}
  \put(0.81,0.95){$x_1$}
  \put(-0.03,0.8){$x_2$}
  \put(0.81,0.11){$x_3$}
  \end{picture}
  \begin{picture}(1,1.041)
  \put(0,0){\includegraphics[width=2.8in]{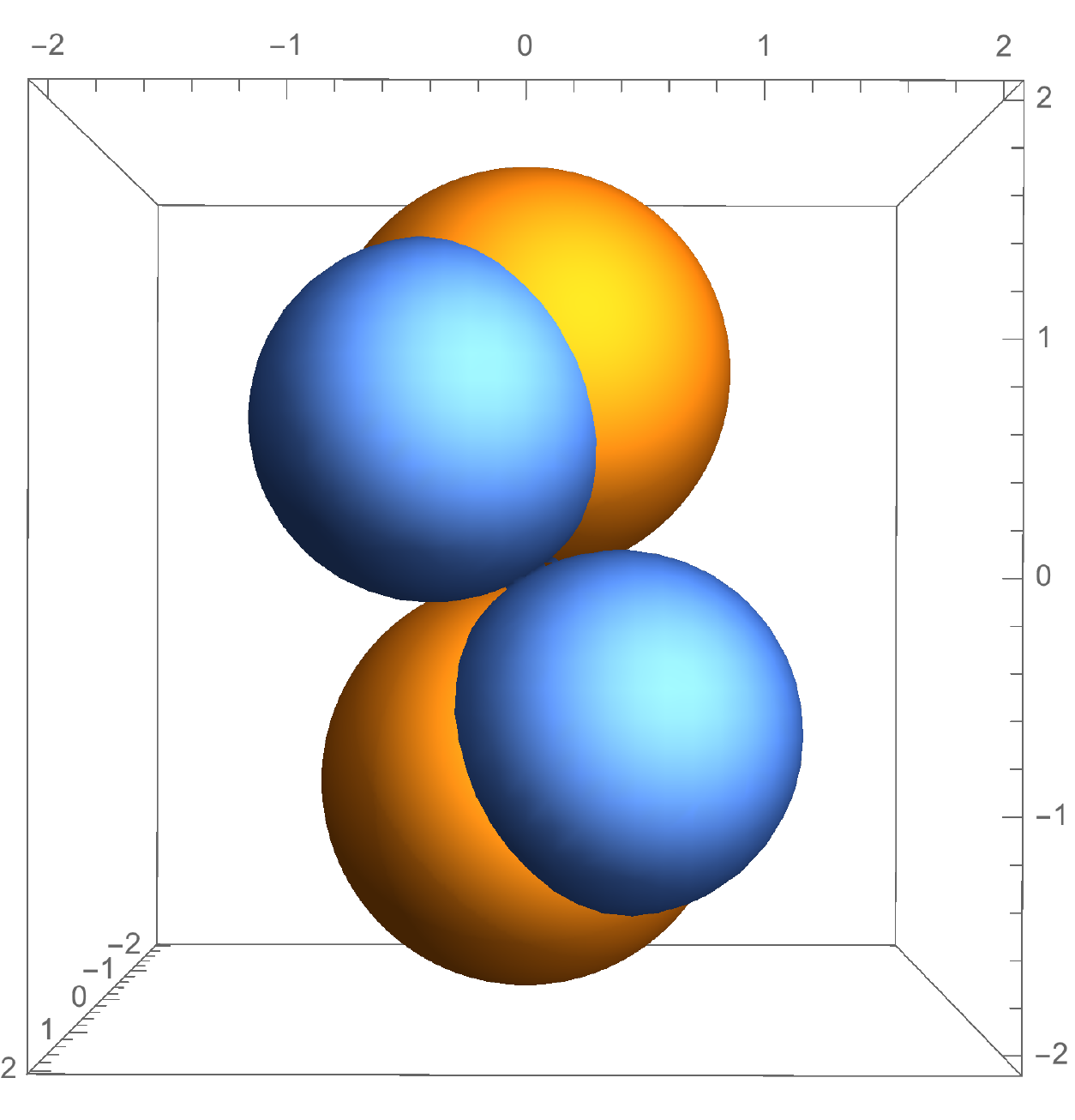}}
  \put(0.81,1){$x_3$}
  \put(0.97,0.84){$x_2$}
  \put(0.13,0.09){$x_1$}
  \end{picture}
}
\centerline{
  \setlength{\unitlength}{2.8in}
  \begin{picture}(1,0.967)
  \put(0,0){\includegraphics[width=2.8in]{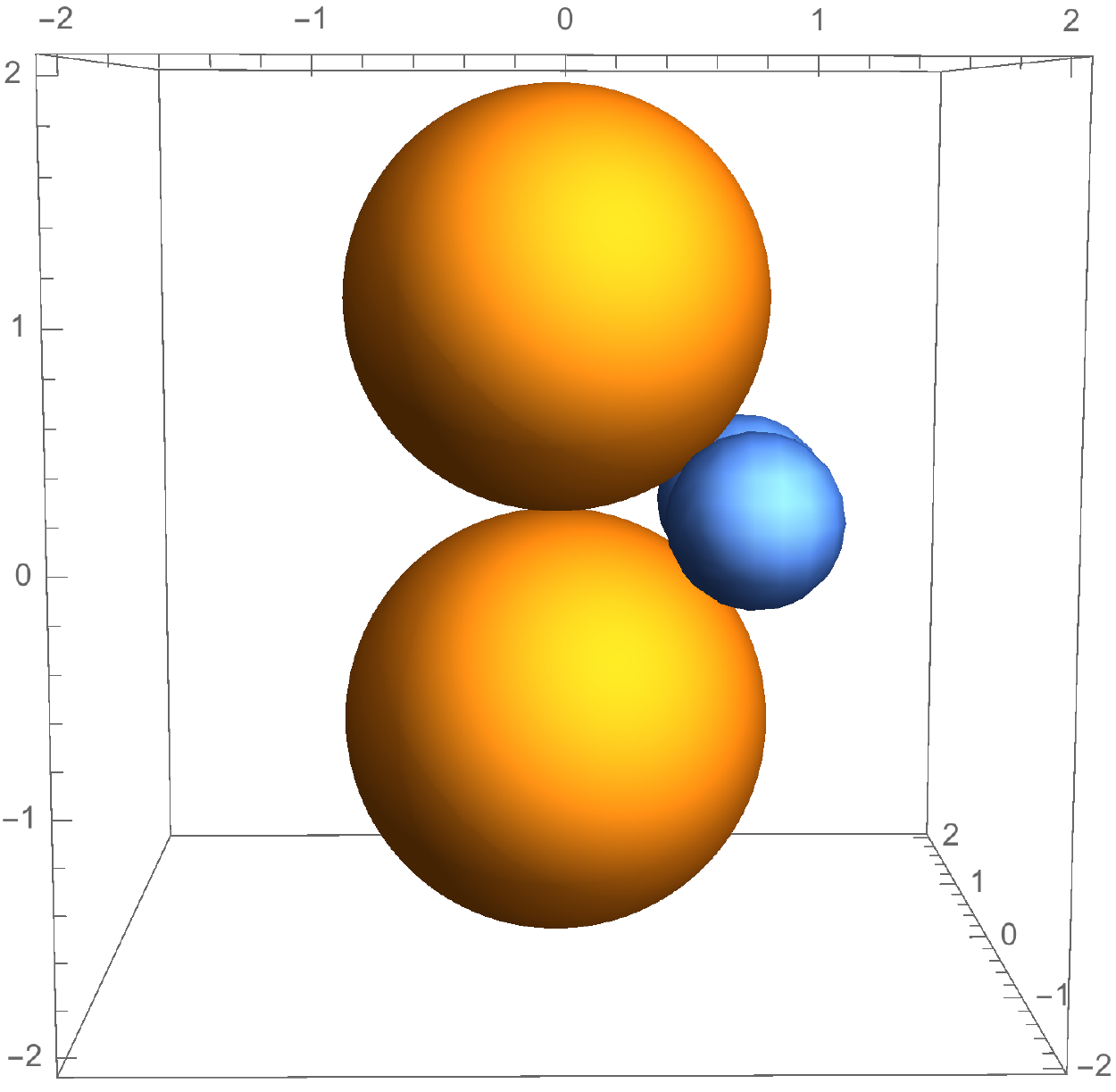}}
  \put(0.81,0.95){$x_1$}
  \put(-0.03,0.8){$x_2$}
  \put(0.81,0.11){$x_3$}
  \end{picture}
  \begin{picture}(1,1.041)
  \put(0,0){\includegraphics[width=2.8in]{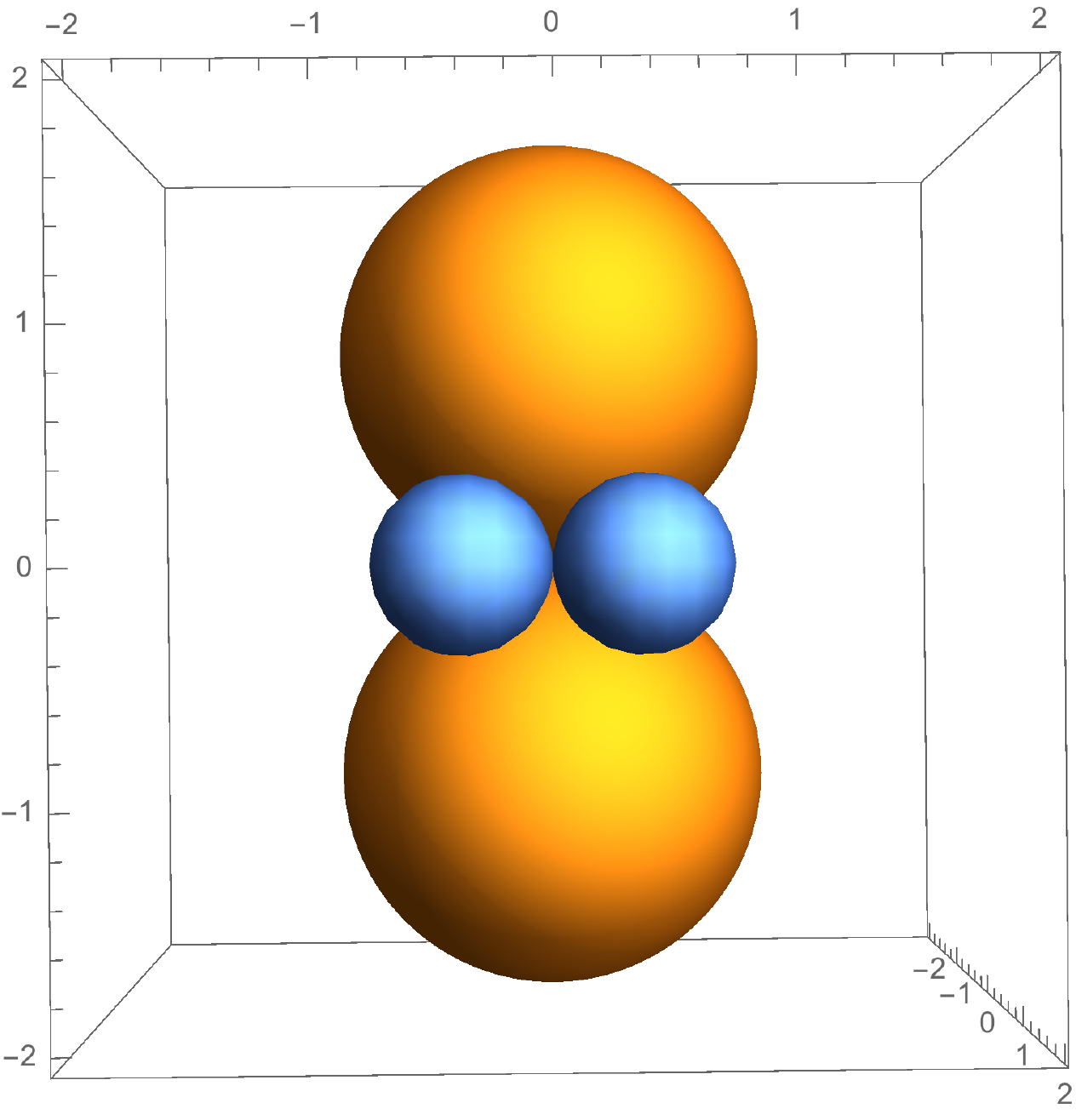}}
  \put(0.81,1.02){$x_3$}
  \put(0.07,0.84){$x_2$}
  \put(0.81,0.09){$x_1$}
  \end{picture}
}


\caption{\label{dualtan}  \protect\small \sf
Dual tangency configurations for $\delta = 0.5$ (top two images) and
$\delta = 0.9101$ (bottom two images).
In the former configuration, $\angle (n_p, n_q) \doteq 31.17^\circ$, and
in the latter configuration, $\angle (n_p, n_q) \doteq 90^\circ$.
The orange balls are $B$ and $B'$, with $p$ at their point of tangency, and
the blue balls are $Z$ and $Z'$, with $q$ at their point of tangency.
The manifold $\Sigma$ passes through $p$ and $q$ but does not intersect
the interior of any of these balls.
}
\end{figure}

In a configuration where neither tangency is engaged
(i.e., both inequalities are strict), we can increase
$\angle (n_p, n_q)$ and decrease its cosine by freely tilting
the line segment $zz'$ while maintaining the constraints that
$zz'$ passes through $q$, and both $z$ and $z'$ lie on the boundary of $F$.
(Note that in our coordinate system,
$q$, $p$, $B$, $B'$, $F$, and $n_p$ are all fixed, but
we can adjust $n_q$ subject to the inequalities.)
Therefore, if the maximum possible angle is not $180^\circ$,
a configuration that maximizes the angle must engage at least one tangency.
As $Z$ and $Z'$ play symmetric roles, we can assume without loss of generality
that $Z$ is tangent to $B'$ and Equation~(\ref{tangency1}) holds, giving
\begin{equation}
\cos \angle (n_p, n_q) = 1 + \frac{\ell^2 - 1 - 2 q_2}{2\ell}.
\label{cos}
\end{equation}

The derivative
$\frac{\partial}{\partial\ell} \cos \angle (n_p, n_q) =
(\ell^2 + 1 + 2 q_2) / (2\ell^2)$ is positive for all $q_2 \geq -1 / 2$;
we have $q_2 \in (-1 / 2, 1 / 2)$ because $q \in F$, $q \not\in B'$, and
$q \not\in B$.
Therefore, the cosine~(\ref{cos}) increases monotonically with $\ell$.
We see from Equation~(\ref{intchords}) that $\ell$ increases monotonically as
$\ell'$ decreases.
Inequality~(\ref{disjoint2}) places an upper bound on $\ell'$,
which together with~(\ref{intchords}) places a lower bound on $\ell$,
which places a lower bound on the cosine~(\ref{cos}) and
an upper bound on the angle $\angle (n_p, n_q)$ itself.
A configuration attains this upper bound on $\angle (n_p, n_q)$ when
Inequality~(\ref{disjoint2}) holds with equality---in a dual tangency, 
where $Z'$ is tangent to $B$ in addition to $Z$ being tangent to $B'$,

A dual tangency uniquely determines the values of $\ell$ and $\ell'$.
As $q \in zz'$, we can write
\begin{equation}
\ell (z'_2 - q_2) = \ell' (q_2 - z_2).
\label{proportions}
\end{equation}
The identities~(\ref{intchords}), (\ref{tangency1}), (\ref{tangency2}),
and~(\ref{proportions}) form a system of four (nonlinear) equations in
the four variables $\ell$, $\ell'$, $z_2$, and $z'_2$.
According to Mathematica (and verified by substitution),
these equations are simultaneously satisfied by
\begin{equation}
\ell = \sqrt{\frac{(1 - \|q\|^2) \, (2 + 2q_2 - \|q\|^2)}{2 - 2q_2 - \|q\|^2}}
  \hspace{.2in} \mbox{and} \hspace{.2in}
\ell' =
  \sqrt{\frac{(1 - \|q\|^2) \, (2 - 2q_2 - \|q\|^2)}{2 + 2q_2 - \|q\|^2}}.
\label{ell}
\end{equation}

As this configuration places a lower bound on $\ell$,
substituting the identity~(\ref{ell}) into~(\ref{cos}) shows that
\begin{equation}
\cos \angle (n_p, n_q) \geq 1 - \frac{\|q\|^2 - \|q\|^4 / 2 - 2 q_2^2}
{\sqrt{(1 - \|q\|^2) \, \left( (2 - \|q\|^2)^2 - 4 q_2^2 \right)}}.
\label{cos2}
\end{equation}
Recall the parameter $\delta = |pq| / \lfs(p)$.
As we chose and scaled our coordinate system so that $p$ is the origin and
$\lfs(p) = 1$, $\|q\| = \delta$ and $q_2 = \delta_N$.
Inequality~(\ref{codim1bound2}) follows.

This expression provides a strong upper bound when
the value of $q_2$ (the distance from $q$ to $T_p\Sigma$) is known, but
$q_2$ is not usually available in circumstances where
the Normal Variation Lemma is invoked.
To find a bound independent of $q_2$, we seek the value of
$q_2 \in (-\|q\|^2 / 2, \|q\|^2 / 2)$ that minimizes
the right-hand side of~(\ref{cos2}).
The left plot in Figure~\ref{nvlcodim1} makes it clear that
for all $\|q\| < 1$, this value is $q_2 = 0$.
To verify this formally, observe that~(\ref{cos2}) is
symmetric about $q_2 = 0$ (as it is a function of $q_2^2$) and
\[
\frac{\partial}{\partial q_2} \cos \angle (n_p, n_q) =
2 q_2
\frac{3(1 - \|q\|^2)^2 + 4(1 - \|q\|^2) + (1 - 4q_2^2)}
{\sqrt{1 - \|q\|^2} \, \left( (2 - \|q\|^2)^2 - 4 q_2^2 \right)^{3/2}}.
\]
The numerator and denominator are positive
for all $\|q\| < 1$ and $q_2 \in (-0.5, 0.5)$, so
the derivative is zero at $q_2 = 0$, positive for $q_2 > 0$, and
negative for $q_2 < 0$, showing that the cosine is minimized at $q_2 = 0$.
Setting $q_2 = 0$ shows that
\[
\cos \angle (n_p, n_q) \geq 1 - \frac{\|q\|^2}{2 \sqrt{1 - \|q\|^2}},
\]
proving Inequality~(\ref{codim1bound}).
\end{proof}

We conjecture (but are not certain) that
Inequality~(\ref{codim1bound}) is sharp:
for every legal $\delta$, there exists a surface $\Sigma$ and
points $p, q \in \Sigma$ for which the bound holds with equality.
Proving this conjecture would entail finding a surface $\Sigma$ that is
compatible with the four balls $B$, $B'$, $Z$, and $Z'$ in the dual tangency
described in the proof of Lemma~\ref{ptnormal} and
illustrated in Figure~\ref{dualtan}---meaning that $\Sigma$
intersects none of the four balls but passes through
the four points of tangency $p$, $q$, $z$, and $z'$---such that
no point of $\Sigma$'s medial axis lies in the ball $F$.

Figure~\ref{dualtan} reveals that in the worst-case configuration,
$n_q$ is tilted along the $x_3$-axis (so $z_3 = -z'_3 \neq 0$), but
not along the $x_1$-axis (i.e., $z_1 = z'_1 = q_1$).
In other words, $\Sigma$ undergoes a helical twisting as one walks
from $p$ to $q$.
By contrast, a tilt along the $x_1$-axis cannot be as large.

The proof of the Normal Variation Lemma for higher codimensions is
similar in many respects, but it takes a different turn because
adding an extra dimension to the normal space enables a novel configuration
(not possible in codimension~$1$) such that the largest angle no longer occurs
when $q \in T_p\Sigma$.

\begin{lemma}[Normal Variation Lemma for Codimension $2$ and Higher]
\label{ptnormal2}
Let $\Sigma \subset \mathbb{R}^d$ be
a bounded, smooth $k$-manifold without boundary for any $k < d$.
Consider two points $p, q \in \Sigma$ and let $\delta = |pq| / \lfs(p)$.

If $\delta < \sqrt{\left( \sqrt{5} - 1 \right) / 2} \doteq 0.7861$, then
$\angle (N_p\Sigma, N_q\Sigma) = \angle (T_p\Sigma, T_q\Sigma) \leq
\eta_2(\delta)$ where
\begin{equation}
\eta_2(\delta) = \arccos
  \sqrt{1 - \frac{\delta^2}{\sqrt{1 - \delta^2}}}
\approx
\delta + \frac{5}{12} \delta^3 + \frac{57}{160} \delta^5 +
\frac{327}{896} \delta^7 + O(\delta^9).
\label{codim2bound}
\end{equation}

Moreover, if $\delta_N$ is the component of $\delta$
parallel to $p$'s normal space $N_p\Sigma$---that is,
$\delta_N$ is the distance from $q$ to the tangent space $T_p\Sigma$
divided by $\lfs(p)$---we have the (stronger) bound
\begin{equation}
\angle (N_p\Sigma, N_q\Sigma) \leq \arccos
\sqrt{\left( 1 - \frac{\delta^2}{2 \sqrt{1 - \delta^2}} \right)^2 -
      \frac{\delta_N^2}{1 - \delta^2}}.
\label{codim2bound2}
\end{equation}
In the special case where $q \in T_p\Sigma$ (that is, $\delta_N = 0$),
this bound reduces to
the codimension-$1$ bound $\eta_1(\delta)$ from Lemma~\ref{ptnormal}.
\end{lemma}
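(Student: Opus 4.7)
The plan is to follow the skeleton of the proof of Lemma~\ref{ptnormal} but to accommodate the new geometry that arises when the normal space has dimension at least two. First I would reduce to a statement about a pair of lines: pick $\ell_q \subset N_q\Sigma$ and $\ell_p \subset N_p\Sigma$ with $\angle(\ell_p, \ell_q) = \angle(N_p\Sigma, N_q\Sigma)$. Because the two normal spaces have the same dimension $d-k$, the max-min in the definition of the subspace angle is realized by orthogonal projection, so $\ell_p$ is the projection of $\ell_q$ onto $N_p\Sigma$ and the direction of $\ell_q$ decomposes as $u = u_N + u_T$ with $u_N$ parallel to $\ell_p$ and $u_T \in T_p\Sigma$. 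Place $p$ at the origin, scale so $\lfs(p) = 1$, set the $x_2$-axis along $\ell_p$ and the $x_1$-axis along $u_T$ so that $\ell_q$ lies in the $x_1 x_2$-plane, and (the new feature) dedicate the $x_3$-axis to a direction in $N_p\Sigma$ perpendicular to $\ell_p$, which is available because $\dim N_p\Sigma \geq 2$.

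Next I would recreate the ball configuration of Lemma~\ref{ptnormal}: the medial-free ball $F$ of radius $1$ at $p$, the surface-free balls $B$ and $B'$ of radius $1$ centered at $\pm e_2$ and tangent to $\Sigma$ at $p$, and the balls $Z, Z'$ of radii $\ell = |qz|$ and $\ell' = |qz'|$ centered at the two points $z, z'$ where $\ell_q$ meets $\partial F$ and passing through $q$. Exactly as in the codimension-one proof, $Z$ and $Z'$ are subsets of medial balls tangent to $\Sigma$ at $q$, and the Intersecting Chords Theorem yields $\ell \ell' = 1 - \delta^2$. Since $\ell_p$ is the $x_2$-axis and $\ell_q$ passes through $z$ and $z'$, the cosine of the subspace angle equals $|z_2 - z'_2|/(\ell + \ell')$.

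The main obstacle is that the two disjointness relations $Z \cap B' = \emptyset$ and $Z' \cap B = \emptyset$ used in Lemma~\ref{ptnormal} depended on $\Sigma$ topologically separating $\R^d$ into inside and outside, which fails when the codimension is $\geq 2$; balls that had to lie on ``opposite sides'' of $\Sigma$ may now overlap. To recover usable constraints I would exploit the new degree of freedom: every unit vector in $N_p\Sigma$ is the center of a surface-free unit ball tangent to $\Sigma$ at $p$, yielding (in particular, along the $x_3$-direction) additional surface-free balls that exclude $z$ and $z'$ from certain regions. Working through the geometry, the locations of $z$ and $z'$ are then constrained by a family of ball-avoidance conditions rather than by the two sharp disjointness relations of codimension one; I expect a dual-tangency analysis analogous to the one at the end of the proof of Lemma~\ref{ptnormal} to yield Inequality~(\ref{codim2bound2}).

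Finally, to extract the codimension-independent bound $\eta_2(\delta)$, I would maximize the right-hand side of~(\ref{codim2bound2}) over $\delta_N$: a short calculation shows that the bound in~(\ref{codim2bound2}) equals $\eta_2(\delta)$ precisely when $\delta_N = \delta^2/2$ and gives a weaker upper bound on the angle for larger $\delta_N$, so $\delta_N = \delta^2/2$ is the worst case (unlike codimension one, where the worst case was $\delta_N = 0$). Substituting this value into~(\ref{codim2bound2}) gives exactly $\cos^2 \eta_2(\delta) = 1 - \delta^2/\sqrt{1-\delta^2}$. Setting $\delta_N = 0$ instead recovers the codimension-one bound $\cos \eta_1(\delta) = 1 - \delta^2/(2\sqrt{1-\delta^2})$, confirming the lemma's closing sentence.
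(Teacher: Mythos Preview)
Your overall framework is right and matches the paper: replace the two surface-free balls $B,B'$ of the codimension-$1$ proof by the whole family of unit balls tangent to $\Sigma$ at $p$, one for each direction in $N_p\Sigma$ (the paper calls their union the solid torus $\mathbb{B}$), and then run a tangency analysis. Your closing step is also correct: the bound~(\ref{codim2bound2}) is monotone in $\delta_N$, attains $\eta_2(\delta)$ at the maximal allowable value $\delta_N=\delta^2/2$ (maximal because $q\notin\mathbb{B}$), and reduces to $\eta_1(\delta)$ at $\delta_N=0$.

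However, the middle of the argument has a real gap. First, the constraint you name is the wrong one: you write that the extra surface-free balls ``exclude $z$ and $z'$ from certain regions,'' but the relevant constraint is on the \emph{balls} $Z,Z'$, not on their centers. The paper's key observation is topological: $Z$ cannot intersect \emph{every} unit ball centered on the torus skeleton $\dot{\mathbb{B}}=N_p\Sigma\cap\partial F$, because then $Z\cup\mathbb{B}$ would block the hole and $\Sigma$ could not thread through $p$ and $q$ without meeting $Z$ or $\mathbb{B}$. Hence $Z$ is disjoint from the unit ball centered at the point $o\in\dot{\mathbb{B}}$ farthest from $z$, giving $\ell+1\le|zo|$ and thus $z_2^2+z_N^2\ge\bigl((\ell+1)^2/2-1\bigr)^2$, where $z_N$ collects the normal coordinates of $z$ other than $z_2$. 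This is the quadratic replacement for the linear constraint~(\ref{disjoint1}) of Lemma~\ref{ptnormal}, and it does not drop out of ``excluding $z$ from regions.''

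Second, your coordinate choice differs from the paper's in a way that matters. The paper fixes $q$ in the $x_1x_2$-plane and lets $z$ carry a free normal component $z_N$; the cosine is then $\sqrt{(z_2-q_2)^2+z_N^2}/\ell$, and $z_N$ becomes a genuine optimization variable. With your choice (putting the \emph{direction} of $\ell_q$ in the $x_1x_2$-plane), $q$ need not lie in that plane, and in any case the ``dual tangency'' no longer pins down the configuration: even with both inequalities tight, there is one remaining degree of freedom. The paper handles it by writing $\gamma=\ell'/\ell$, eliminating $z_N$ between the two tangency equations, and showing by a (somewhat tedious) derivative computation that the cosine is minimized at $\gamma=1$, i.e.\ $\ell=\ell'=\sqrt{1-\delta^2}$. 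Only after that does~(\ref{codim2bound2}) emerge. Your phrase ``I expect a dual-tangency analysis \ldots\ to yield~(\ref{codim2bound2})'' skips precisely this part, which is where all the codimension-$\ge 2$ work lives.
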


An isocontour plot of the right-hand side of Inequality~(\ref{codim2bound2})
appears in Figure~\ref{nvlcodim2}.
For any given value of $\delta$,
the bound~(\ref{codim2bound2}) is weakest along the upper (or lower) boundary
of the plot, at $\delta_N = \delta^2 / 2$;
this substitution yields the bound~(\ref{codim2bound}).
The upper boundary is also plotted as the red curve in Figure~\ref{nvlplots}.
Interestingly, the horizontal midline of this plot is
the green curve in Figure~\ref{nvlplots}:
when $\delta_N = 0$, the symmetry of the configuration yields the
codimension-$1$ bound $\eta_1(\delta)$.
The bound gets worse from there as $\delta_N$ increases.

We are certain that this bound can be tightened for larger values of $\delta_N$
(but not for $\delta_N = 0$), but
we have not been able to derive a better explicit bound.
It would be nice if the codimension~$1$ bound held for all $\delta_N$, but
we think it very unlikely;
we know a configuration in $\R^4$ that defies the codimension~$1$ bound and
which we think (but don't know for sure) can be realized by
a $2$-manifold fitting the specified constraints.

\begin{figure}
\centerline{
  \setlength{\unitlength}{2.9in}
  \begin{picture}(1,0.983)
  \put(0,0){\includegraphics[width=2.9in]{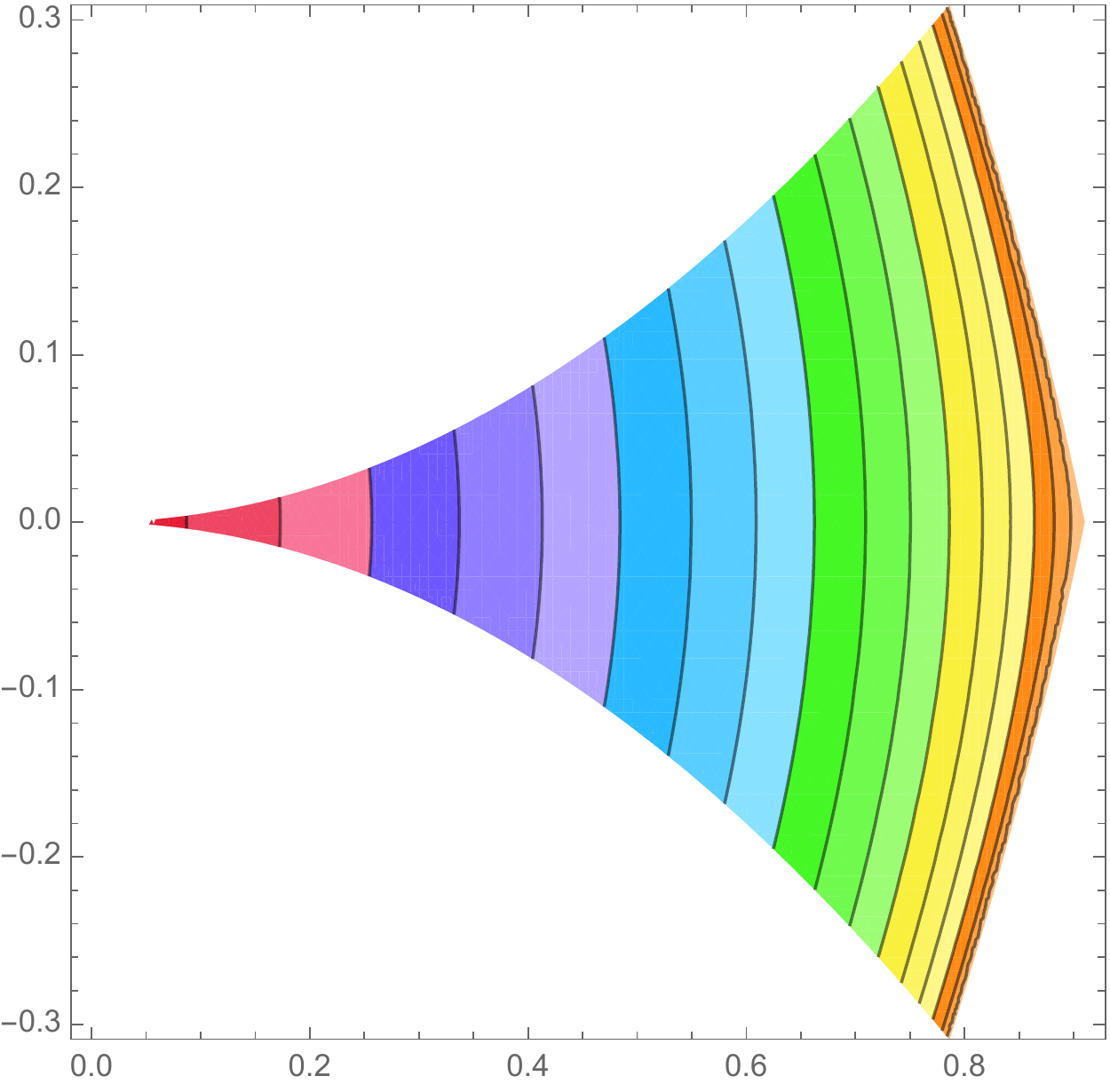}}
  \put(0.95,0){$\delta$}
  \put(0.09,0.89){$\delta_N$}
  \end{picture}
  \includegraphics[width=0.45in]{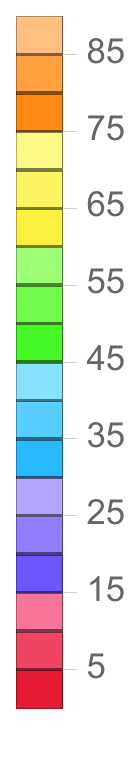}
  \begin{picture}(1,0.983)
  \put(0,0){\includegraphics[width=2.9in]{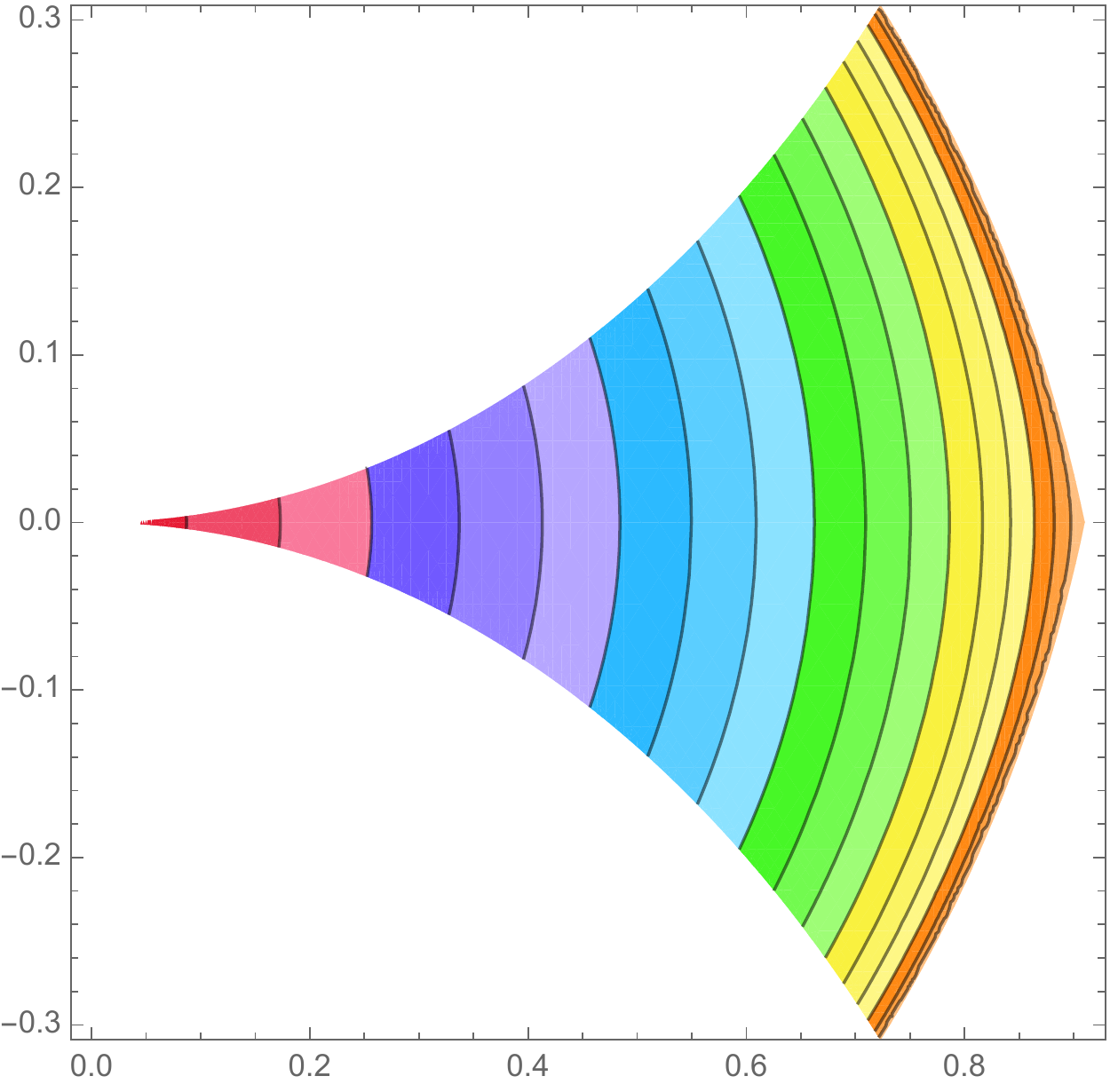}}
  \put(0.95,0){$\delta_T$}
  \put(0.09,0.89){$\delta_N$}
  \end{picture}
}


\caption{\label{nvlcodim2}  \protect\small \sf
Left:  Upper bound in degrees for $\angle (N_p\Sigma, N_q\Sigma)$
as specified by Inequality~(\ref{codim2bound2}), as a function of
$\delta = |pq| / \lfs(p)$ (on the horizontal axis) and
the normal component $\delta_N$ of $\delta$ (on the vertical axis);
i.e., $\delta_N$ is the signed distance from $q$ to
the tangent space $T_p\Sigma$ divided by $\lfs(p)$.
Right:  A similar plot with one change:
the horizontal axis is the tangential component $\delta_T$ of $\delta$;
i.e., the distance from $q$ to the normal space $N_p\Sigma$ divided by
$\lfs(p)$.
Hence, this plot reflects the Euclidean geometry of the space,
with $p$ at the origin, $T_p\Sigma$ on the horizontal midline,
$q$ somewhere in the colored region, and
the smallest possible medial torus (white) blocking $q$ from certain regions.
}
\end{figure}

\begin{proof}
%
Let $F$ be the open ball with center $p$ and radius $\lfs(p)$.
$F$ does not intersect the medial axis.
As in the proof of Lemma~\ref{ptnormal},
we choose a coordinate system with $p$ at the origin and
scale the coordinate system so that $\lfs(p) = 1$, so
$F$ is the unit ball centered at the origin.

Let $\dot{\mathbb{B}}$ be the intersection of $p$'s normal space $N_p\Sigma$
with the unit hypersphere $\partial F$ (the boundary of $F$);
$\dot{\mathbb{B}}$ is a unit $(d - k - 1)$-sphere.
For every point $c \in \dot{\mathbb{B}}$,
the open unit ball with center $c$ is tangent to $\Sigma$ at $p$ and
does not intersect $\Sigma$.
Let $\mathbb{B}$ be the union of these (infinitely many) open unit balls
(which constitute all the unit balls tangent to $\Sigma$ at $p$).
The boundary of $\mathbb{B}$ is a torus with inner radius zero (a horn torus).
We call $\mathbb{B}$ itself the (open) {\em solid torus}
and $\dot{\mathbb{B}}$ the {\em torus skeleton}.
Geometrically, $\mathbb{B}$ is
the Minkowski sum of $\dot{\mathbb{B}}$ and an open $d$-ball.
Topologically, $\mathbb{B}$ is the $d$-dimensional product of
a $(d - k - 1)$-sphere and an open $(k + 1)$-ball.
Like the balls it is composed of, $\mathbb{B}$ does not intersect $\Sigma$ nor
contain $q$.

By assumption, $|pq| < \lfs(p)$, so $q \in F$ and
$q$'s normal space $N_q\Sigma$ intersects $\partial F$ in
a $(d - k - 1)$-sphere $S$ (like $\dot{\mathbb{B}}$, but smaller).
Consider an open ball $Z$ with center $z \in S$ such that
$Z$'s boundary passes through $q$.
$Z$ is a subset of a medial ball tangent to $\Sigma$ at $q$, so
$Z$ does not intersect $\Sigma$ nor contain $p$.

The key property for obtaining a bound is that
$Z$ cannot intersect every open unit ball centered on $\dot{\mathbb{B}}$.
If it did, then it would effectively block the hole in the solid torus
$\mathbb{B}$, so that
$\Sigma$ cannot thread through $\mathbb{B}$ at $q$ without somewhere
intersecting $Z$ or $\mathbb{B}$.
This property applies to every ball $Z$ centered on $S$ and just touching $q$.
To obtain a tractable proof, we focus on two particular balls that help
determine the angle $\angle (N_p\Sigma, N_q\Sigma)$.
(Unfortunately, these two balls do not suffice to give a sharp bound, but
we have not been able to derive
better closed-form bounds that take advantage of the other balls.)

We choose a $d$-axis coordinate system with $p$ at the origin such that
the $x_1$-axis lies on $p$'s tangent space $T_p\Sigma$,
the $x_2$-axis lies on $p$'s normal space $N_p\Sigma$, and
$q$ lies in the upper right quadrant of the $x_1$-$x_2$-plane; that is,
$q_1 > 0$, $q_2 \geq 0$, and $q_3 = q_4 = \ldots = q_d = 0$.
Each remaining axis lies in $T_p\Sigma$ or $N_p\Sigma$, so
every axis can be categorized as tangential or normal with respect to $p$.
Let $z_T^2$ be the sum of squares of the tangential components of $z$
except $z_1$, and
let $z_N^2$ be the sum of squares of the normal components of $z$ except $z_2$;
thus $\|z\|^2 = z_1^2 + z_2^2 + z_T^2 + z_N^2$.
(The signs of $z_T$ and $z_N$ are irrelevant.)

By definition,
$\angle (N_p\Sigma, N_q\Sigma) = \max_{\ell_q \subset N_q\Sigma}
\min_{\ell_p \subset N_p\Sigma} \angle (\ell_p, \ell_q)$.
Let $\ell_q \subset N_q\Sigma$ be a line through $q$ that satisfies
$\angle (N_p\Sigma, N_q\Sigma) = \angle (N_p\Sigma, \ell_q)$.
Let $z$ and $z'$ be the two points where $\ell_q$ intersects $\partial F$, and
observe that $z, z' \in S$ (as $S = N_q\Sigma \cap \partial F$).
Let $Z$ and $Z'$ be the open balls centered on $z$ and $z'$, respectively,
with the boundaries of both balls passing through $q$.
Let $\ell = |qz|$ and $\ell' = |qz'|$ be their radii.

As $q_N = 0$, we can determine the angle $\angle (N_p\Sigma, N_q\Sigma)$ from
the identity
\begin{equation}
\cos \angle (N_p\Sigma, N_q\Sigma) = \cos \angle (N_p\Sigma, \ell_q) =
\frac{\sqrt{(z_2 - q_2)^2 + z_N^2}}{\ell},
\label{cosid}
\end{equation}
because the denominator is the length of the line segment $qz$ and
the numerator is the length of the projection of $qz$ onto $N_p\Sigma$.
To find an upper bound on $\angle (N_p\Sigma, N_q\Sigma)$,
we seek a lower bound on the cosine~(\ref{cosid}); to find that,
we will search for legal values of $z_2$, $z_N$, and $\ell$ that minimize
the right-hand side (i.e., a worst-case configuration).
First, we must understand the constraints on these values.

Let $o$ be the point on the torus skeleton $\dot{\mathbb{B}}$ farthest from
$z$.
What is the distance $|zo|$?
First consider the projection $\bar{z}$ of $z$ onto $N_p\Sigma$.
The origin lies between $\bar{z}$ and the farthest point on $\dot{\mathbb{B}}$,
so the distance from $\bar{z}$ to the farthest point is $\|\bar{z}\| + 1$.
With Pythagoras' Theorem we add the tangential component:
\begin{eqnarray*}
|zo|^2 & = & |z\bar{z}|^2 + (\|\bar{z}\| + 1)^2  \\
       & = & z_1^2 + z_T^2 + \left( \sqrt{z_2^2 + z_N^2} + 1 \right)^2  \\
       & = & z_1^2 + z_T^2 + z_2^2 + z_N^2 + 2 \sqrt{z_2^2 + z_N^2} + 1  \\
       & = & 2 + 2 \sqrt{z_2^2 + z_N^2}.
\end{eqnarray*}
The last step follows because $z$ lies on $\partial F$.

As $Z$ has radius $\ell$ and is disjoint from the unit ball centered at $o$,
$\ell + 1 \leq |zo|$. 
We rewrite this constraint as
\begin{equation}
z_2^2 + z_N^2 \geq \left( \frac{(\ell + 1)^2}{2} - 1 \right)^2.
\label{tangencyco1}
\end{equation}
If Inequality~(\ref{tangencyco1}) holds with equality,
we call this event a {\em tangency} between $Z$ and $\mathbb{B}$.
Likewise, the ball $Z'$ entails the following inequality, and
a tangency between $Z'$ and $\mathbb{B}$ means that
it holds with equality.
\begin{equation}
{z'_2}^2 + {z'_N}^2 \geq \left( \frac{(\ell' + 1)^2}{2} - 1 \right)^2.
\label{tangencyco2}
\end{equation}

Recall from the proof of Lemma~\ref{ptnormal} that,
by the Intersecting Chords Theorem, $\ell \ell' = 1 - \|q\|^2$
where $\|q\| = q_1^2 + q_2^2$ is the distance from $p$ to $q$.
As $q \in zz'$, we write two more useful identities:
\begin{eqnarray}
\ell z'_N & = & - \ell' z_N,  \label{proportionsN}  \\
\ell (z'_2 - q_2) & = & \ell' (q_2 - z_2).  \label{proportions2}
\end{eqnarray}
Thus we have a system of three equations and two inequalities in six variables:
$\ell$, $\ell'$, $z_2$, $z'_2$, $z_N$, and $z'_N$.
Among the multiple solutions of this system,
we seek one that minimizes the objective~(\ref{cosid}).

In a configuration where neither tangency is engaged, we can increase
$\angle (N_p\Sigma, N_q\Sigma)$ and decrease its cosine~(\ref{cosid}) by
freely tilting the line segment $zz'$ while maintaining the constraints that
$zz'$ passes through $q$ and $z, z' \in \partial F$.
Therefore, if there is a meaningful bound at all,
an optimal (i.e, worst-case) configuration must engage at least one tangency.
As $Z$ and $Z'$ play symmetric roles, we can assume without loss of generality
that $Z$ is tangent to $\mathbb{B}$ and
Inequality~(\ref{tangencyco1}) holds with equality.
Substituting that identity into~(\ref{cosid}) yields
\begin{equation}
\cos \angle (N_p\Sigma, N_q\Sigma) =
\frac{\sqrt{\left( \frac{(\ell + 1)^2}{2} - 1 \right)^2 + q_2^2 - 2 q_2 z_2}}
     {\ell} =
\sqrt{\left( 1 - \frac{\ell^2 - 1}{2 \ell} \right)^2 +
      \frac{q_2^2 - 2 q_2 z_2}{\ell^2}}.
\label{cosidl}
\end{equation}

As in the proof of Lemma~\ref{ptnormal2}, symmetry will play a role:
the ``optimal'' (i.e., worst-case) solution will turn out to have
$\ell = \ell'$.
To expose this symmetry, we define a parameter
\[
\gamma = \frac{\ell'}{\ell}.  
\]
By Identities~(\ref{proportionsN}) and~(\ref{proportions2}),
we can eliminate the primed variables with the substitutions
$\ell' = \gamma \ell$, $z'_N = - \gamma z_N$ and
$z'_2 = q_2 + \gamma (q_2 - z_2)$.
(A solution with $\gamma = 1$ would imply that
$\ell = \ell'$ and $z'_N = - z_N$.)
Inequality~(\ref{tangencyco2}) becomes
\begin{eqnarray}
\left( q_2 + \gamma (q_2 - z_2) \right)^2 + \gamma^2 z_N^2 \geq
\left( \frac{(\gamma \ell + 1)^2}{2} - 1 \right)^2.
\label{tangencysub}
\end{eqnarray}
To eliminate the variable $z_N$,
we multiply Inequality~(\ref{tangencyco1}) by $\gamma^2$
(recalling that the inequality is now assumed to be an equality) and
subtract Inequality~(\ref{tangencysub}) (which is still an inequality), giving
\begin{eqnarray}
(2 \gamma^2 + 2) q_2 z_2 - (\gamma + 1)^2 q_2^2 \leq \omega
\hspace{.2in}  \mbox{where}  \hspace{.2in}
\omega = \gamma^2 \left( \frac{(\ell + 1)^2}{2} - 1 \right)^2 -
         \left( \frac{(\gamma \ell + 1)^2}{2} - 1 \right)^2.
\end{eqnarray}
Rearranging, we have
\begin{equation}
q_2 z_2 \leq \frac{(\gamma + 1)^2 q_2^2 + \omega}{2 \gamma^2 + 2}.
\end{equation}
Substituting this into~(\ref{cosidl}) gives
\begin{equation}
\cos \angle (N_p\Sigma, N_q\Sigma) \geq
\sqrt{\left( 1 - \frac{\ell^2 - 1}{2 \ell} \right)^2 -
      \frac{2 \gamma q_2^2 + \omega}{(\gamma^2 + 1) \, \ell^2}}.
\label{cosg}
\end{equation}
The right-hand side is a function of $\gamma$, $\ell$, and the point $q$.
However, the definition $\gamma = \ell' / \ell$ and Equation~(\ref{intchords})
together imply that $\ell = \sqrt{(1 - \|q\|)^2 / \gamma}$, so
we can write the right-hand side as a function $f(\gamma, q)$.
We claim that for all valid $q$, $f(\gamma, q)$ is minimized at $\gamma = 1$.
It is straightforward but tedious (and best done with Mathematica) to verify
that $f(\gamma, q) = f(1 / \gamma, q)$ and that
$\frac{\partial}{\partial \gamma} f(\gamma, q)$ is zero at $\gamma = 1$,
positive for $\gamma > 1$, and negative for $\gamma \in (0, 1)$.
Specifically, with the abbreviation $\mathring{q} = 1 - \|q\|^2$, we have
\[
\frac{\partial}{\partial \gamma} f(\gamma, q) =
(\gamma - 1)
\frac{(\gamma + 1)^3 \mathring{q}^2 +
      \left( (2 \gamma^2 + 8\gamma + 2) \mathring{q} + 4\gamma \right)
      \sqrt{\gamma \mathring{q}}}
     {4 \gamma (\gamma + 1)^2 \sqrt{\gamma \mathring{q}}
      \sqrt{\gamma (1 - 4 q_2^2) + 2 \gamma \mathring{q} +
            (1 - \gamma + \gamma^2) \mathring{q}^2 +
            \frac{4 \left( (\gamma^2 + 1) \mathring{q} -
                  2 \gamma \right) \sqrt{\gamma \mathring{q}}}
                 {\gamma + 1}}}.
\]
The numerator and denominator are positive for
$\gamma > 0$, $\mathring{q} > 0$, and $q_2 \in [0, 0.5]$, so
the sign of $\frac{\partial}{\partial \gamma} f$ depends solely on
the sign of $\gamma - 1$, confirming that
the right-hand side of~(\ref{cosg}) is minimized at $\gamma = 1$.

For $\gamma = 1$, we have $\ell = \ell' = \sqrt{1 - \|q\|^2}$ and $\omega = 0$,
so Inequality~(\ref{cosg}) becomes
\begin{equation}
\cos \angle (N_p\Sigma, N_q\Sigma) \geq
\sqrt{\left( 1 - \frac{\|q\|^2}{2 \sqrt{1 - \|q\|^2}} \right)^2 -
      \frac{q_2^2}{1 - \|q\|^2}},
\label{cosq}
\end{equation}
Recall the parameter $\delta = |pq| / \lfs(p)$.
As we chose and scaled our coordinate system so that $p$ is the origin and
$\lfs(p) = 1$, $\|q\| = \delta$.
Inequality~(\ref{codim2bound2}) follows.

Clearly, larger values of $q_2^2$ make the right-hand side smaller
(and the bound weaker).
It is smallest when $q_2$ reaches its maximum allowable value of $\|q\|^2 / 2$.
(This maximum is imposed by the fact that $q \not\in \mathbb{B}$.)
Hence, the following bound holds for all valid values of $q_2$.
\[
\cos \angle (N_p\Sigma, N_q\Sigma) \geq
\sqrt{1 - \frac{\|q\|^2}{\sqrt{1 - \|q\|^2}}},
\]
proving Inequality~(\ref{codim2bound}).
\end{proof}

\section{Extended Triangle Normal Lemmas}
\label{etnls}

The Triangle Normal Lemmas in Section~\ref{tnls} bound
$\angle(N_\tau, N_v\Sigma) = \angle(\aff{\tau}, T_v\Sigma)$
only at a vertex $v$ of $\tau$.
Moreover, for vertices where $\tau$ has a small plane angle, the bound is poor.
Here, we derive a bound on $\angle (N_\tau, N_{\tilde{x}}\Sigma)$
for every $x \in \tau$.
The method to accomplish this is not new:
a triangle normal lemma establishes a strong bound at a vertex where
a triangle has a large plane angle, and a normal variation lemma
extends the bound from that anchor over the rest of the triangle.
We improve on this formulation a bit by taking advantage of the fact that
our Triangle Normal Lemma's bound varies with the plane angle at a vertex:
we choose $\tau$'s vertex nearest $\tilde{x}$ as the anchor if its angle is
at least $49^\circ$; otherwise,
we choose the vertex with the largest plane angle as the anchor.

We begin with several technical lemmas that help us obtain better bounds.
Both lemmas help to constrain where $\tilde{x}$ can lie.

\begin{lemma}
\label{lem:projinsphere}
Let $\Sigma \subset \R^d$ be a smooth $k$-manifold.
Let $\tau$ be a simplex (of any dimension) whose vertices lie on~$\Sigma$.
Let $B_\tau$ be a closed $d$-ball such that $B_\tau \supseteq \tau$
(e.g., $\tau$'s smallest enclosing ball or a circumscribing ball).
Let $r$ be the radius of $B_\tau$, let $v$ be a vertex of $\tau$, and
suppose that $r \leq \lfs(v) / 2$.
Then for every point $x \in \tau$ that is not a vertex of $\tau$,
$\tilde{x} = \nu(x)$ is in the interior of~$B_\tau$.
\end{lemma}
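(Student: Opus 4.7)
The plan is to reduce the conclusion $|c\tilde{x}| < r$ (where $c$ is the center of $B_\tau$) to showing $\lfs(\tilde{x}) > r$, via the Surface Interpolation Lemma, and then to establish that inequality using the hypothesis $r \leq \lfs(v)/2$.

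First, I will show $|cx| < r$ strictly. Because $x$ is not a vertex of $\tau$, $x$ lies in the relative interior of some face of $\tau$ of dimension at least $1$ and may be written as a strict convex combination $x = \sum_i \lambda_i v_i$ of at least two distinct vertices of $\tau$ with every $\lambda_i > 0$. Since $|c v_i| \leq r$ for each $v_i \in B_\tau$, the triangle inequality gives $|cx| \leq \sum_i \lambda_i |cv_i| \leq r$; equality throughout would force the vectors $c - v_i$ (for $\lambda_i > 0$) to be parallel of the same sense with common norm $r$, hence the corresponding $v_i$ to coincide, contradicting distinctness.

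Next, I apply the Surface Interpolation Lemma (Lemma~\ref{interplemma}) to the enclosing ball $B_\tau$, which yields
\[
|x\tilde{x}| \leq \lfs(\tilde{x}) - \sqrt{\lfs(\tilde{x})^2 - r^2 + |cx|^2}
\]
provided $r < \lfs(\tilde{x})$. Combining this with $|c\tilde{x}| \leq |cx| + |x\tilde{x}|$, the desired strict inequality $|c\tilde{x}| < r$ follows from $|cx| + \lfs(\tilde{x}) - r < \sqrt{\lfs(\tilde{x})^2 - r^2 + |cx|^2}$, which a direct algebraic manipulation (squaring when the left side is nonnegative) reduces to $(\lfs(\tilde{x}) - r)(r - |cx|) > 0$. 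Together with $|cx| < r$ from the previous step, this is equivalent to the single inequality $\lfs(\tilde{x}) > r$.

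The hard part is therefore showing $\lfs(\tilde{x}) > r$. By the $1$-Lipschitz property $\lfs(\tilde{x}) \geq \lfs(v) - |v\tilde{x}|$ and $\lfs(v) \geq 2r$, it suffices to bound $|v\tilde{x}| < \lfs(v) - r$. The naive estimate $|v\tilde{x}| \leq |vx| + |x\tilde{x}| \leq 2|vx| \leq 4r$ (using $v, x \in B_\tau$ and $|x\tilde{x}| \leq |vx|$ because $v \in \Sigma$) is too weak. To sharpen it, I will exploit the medial-ball configuration from the proof of Lemma~\ref{interplemma}: let $B$ be the medial ball tangent to $\Sigma$ at $\tilde{x}$ whose center $m \in M$ lies on the ray from $\tilde{x}$ through $x$, and write $L = |\tilde{x}m|$ for its radius. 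Since $m$ is a medial-axis point and $v \in \Sigma$, $|vm| \geq \lfs(v) \geq 2r$; combining this with $|vx| \leq 2r$ and the inequality $|xm|^2 \geq L^2 - r^2 + |cx|^2$ derived in the proof of Lemma~\ref{interplemma} confines $\tilde{x}$ sufficiently close to $v$ to yield $|v\tilde{x}| < \lfs(v) - r$, completing the proof.
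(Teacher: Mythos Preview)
Your reduction to the inequality $\lfs(\tilde{x}) > r$ is clean, and the algebra showing that this plus $|cx| < r$ suffices is correct. The gap is in the last paragraph: the claim $|v\tilde{x}| < \lfs(v) - r$ is asserted without a derivation, and it is in fact \emph{false}. Take $\Sigma \subset \R^2$ to be a circle of radius $L$, let $\tau$ be the chord with endpoints $v = (L\sqrt{3}/2, L/2)$ and $v' = (L\sqrt{3}/2, -L/2)$, and let $B_\tau$ be the smallest enclosing ball, with center $c = (L\sqrt{3}/2, 0)$ and radius $r = L/2 = \lfs(v)/2$. For $x = c$ one computes $\tilde{x} = (L,0)$ and
\[
|v\tilde{x}| \;=\; L\sqrt{(1-\sqrt{3}/2)^2 + 1/4} \;\doteq\; 0.5176\,L \;>\; L/2 \;=\; \lfs(v)-r.
\]
So the Lipschitz route through $|v\tilde{x}|$ cannot close the argument. (In this particular example $\lfs(\tilde{x}) = L > r$ still holds, but by the constancy of $\lfs$ on a circle rather than by your estimate; on a manifold where $\lfs$ genuinely varies, the situation is worse and it is not clear that $\lfs(\tilde{x}) > r$ holds at all under only the hypothesis $r \le \lfs(v)/2$ at a \emph{single} vertex.) There is also a mild circularity risk in invoking the inequality $|xm|^2 \ge L^2 - r^2 + |cx|^2$ from the proof of the Interpolation Lemma before you know $r < \ebs(\tilde{x})$, since that proof uses this hypothesis to set up the separating hyperplane.

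The paper's proof avoids $\lfs(\tilde{x})$ entirely. It takes the same medial ball $B$ (center $m$, with $x \in \tilde{x}m$), but instead of estimating distances it looks at the hyperplane $\Pi$ through $\partial B \cap \partial B_\tau$, which separates the cap $B_\tau \setminus B$ (containing all vertices of $\tau$, hence $x$) from the cap $B \setminus B_\tau$. The single use of the hypothesis is that $|vm| \ge \lfs(v) \ge 2r$ forces $m \notin \mathrm{int}\,B_\tau$, hence $m$ lies strictly on the $B$-side of $\Pi$. Since $x$ lies on the $B_\tau$-side and $\tilde{x}$ is on the segment $m\tilde{x}$ beyond $x$ from $m$, the point $\tilde{x}$ lands in the open $B_\tau$-side of $\Pi$ and on $\partial B$, which is contained in $\mathrm{int}\,B_\tau$ there. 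This is short, uses only the one assumed inequality at $v$, and never needs control of $\lfs$ at $\tilde{x}$.
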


\begin{proof}
Consider a point $x \in \tau$ that is not a vertex of $\tau$.
As $\tau$'s vertices lie in $B_\tau$, $x$ is in the interior of $B_\tau$.
If $\tilde{x} = x$ the lemma follows immediately, so
suppose that $\tilde{x} \neq x$ and thus $x \not\in \Sigma$.
Let $B$ be the open medial ball tangent to $\Sigma$ at $\tilde{x}$ such that
$x$ lies on the line segment $\tilde{x}m$, where $m$ is the center of $B$,
as illustrated in Figure~\ref{projinside}.
As $B$ is a medial ball, $m$ lies on the medial axis of $\Sigma$.

\begin{figure}
\centerline{\input{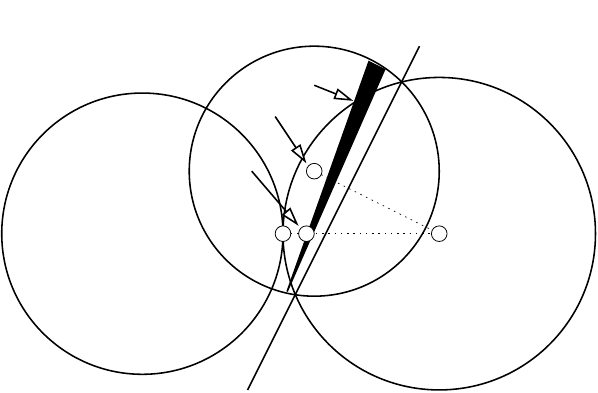_t}}

\caption{\label{projinside}
For every point $x \in \tau$ except $\tau$'s vertices,
$\tilde{x}$ is in the interior of $B_\tau$.
}
\end{figure}

Recall that $B$ is open and $B_\tau$ is closed.
If the entire closure of $B$ is in the interior of $B_\tau$, then
$\tilde{x}$ is in the interior of $B_\tau$ and the lemma follows immediately;
so assume it is not.
Let $C$ be the intersection of the boundaries of $B$ and $B_\tau$.
$C$ cannot be the boundary of $B$, because
we have just assumed that $B_\tau$ does not include the closure of $B$.
We show that $C \neq \emptyset$ by ruling out the alternatives:
we cannot have $B$ and $B_\tau$ disjoint because
$x \in B$ and $x$ is in the interior of $B_\tau$;
we cannot have $B_\tau \subset B$, as $\tau$'s vertices are not in $B$; and
we have already ruled out $\mathrm{closure}(B) \subset B_\tau$.
Hence $C$ is either a $(d - 2)$-sphere (e.g., a circle in $\R^3$) or
a single point (with $B$ and $B_\tau$ tangent to each other at that point,
one inside the other).

If $C$ is a $(d - 2)$-sphere, let $\Pi$ be
the unique hyperplane that includes that $(d - 2)$-sphere, as illustrated;
if $C$ contains a single point,
let $\Pi$ be the hyperplane tangent to $B_\tau$ and $B$ at that point.
Let $\bar{\Pi}_\tau$ be the closed halfspace bounded by $\Pi$ that includes
$B_\tau \setminus B$, and
let $\Pi_\tau$ be the open version of the same halfspace.
The portion of $B$ in $\bar{\Pi}_\tau$ is in the interior of $B_\tau$, and
the portion of $B$'s boundary in $\Pi_\tau$ is in the interior of $B_\tau$.
The portion of $B_\tau$ in the open halfspace complementary to $\bar{\Pi}_\tau$
is a subset of $B$.
Every vertex of $\tau$ lies in $B_\tau$ but not in $B$, hence
$\tau$'s vertices lie in $\bar{\Pi}_\tau$.
Therefore, $\tau \subset \bar{\Pi}_\tau$ and $x \in \bar{\Pi}_\tau$.

By assumption, the radius of $B_\tau$ satisfies $r \leq \lfs(v) / 2$, so
$|vm| \geq \lfs(v) \geq 2r$.
As $v$ lies in $B_\tau$ and $|vm|$ is at least twice the radius of $B_\tau$,
it follows that $m$ is not in the interior of $B_\tau$.
But $m \in B$, so $m \not\in \bar{\Pi}_\tau$.

Given the facts that $x$ lies on the line segment $m\tilde{x}$,
$m \not\in \bar{\Pi}_\tau$, $x \in \bar{\Pi}_\tau$, and
$\tilde{x} \neq x$, it follows that $\tilde{x} \in \Pi_\tau$.
As $\tilde{x}$ is also on $B$'s boundary,
$\tilde{x}$ is in the interior of $B_\tau$.
\end{proof}

Lemma~\ref{lem:projinsphere} implies that $\tilde{x}$ is in
{\em every} ball $B_\tau \supseteq \tau$ with radius $\lfs(v) / 2$ (or less).
The intersection of these balls, illustrated in Figure~\ref{lens}, is
typically a narrow region, especially if $\tau$ is small.
The next lemma also places a restriction on the position of $\tilde{x}$.

\begin{figure}
\centerline{\includegraphics[width=3.5in]{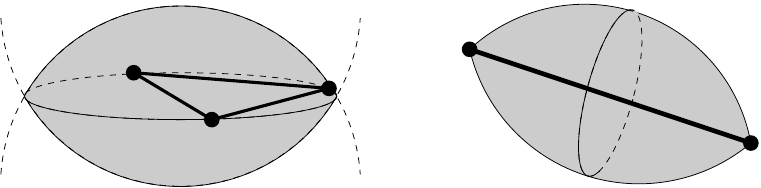}}

\caption{\label{lens}  \protect\small \sf
For the triangle $\tau$ at left, the dark lens-shaped region is
the intersection of $\tau$'s two enclosing balls of radius $\lfs(v) / 2$, where
$v$ is any vertex of $\tau$.
For every point $x \in \tau$, $\tilde{x}$ lies in this lens.
Likewise, for the segment at right,
the lemon-shaped region is the intersection of
its infinitely many enclosing balls of radius $\lfs(v) / 2$;
this lemon contains $\tilde{x}$ for every point $x$ on the segment.
}
\end{figure}

\begin{lemma}
\label{proj2nearest}
Let $\Sigma \subset \R^d$ be a smooth $k$-manifold.
Let $\tau$ be a simplex (of any dimension) whose vertices lie on~$\Sigma$.
Let $r$ be the min-containment radius of $\tau$
(i.e., the radius of $\tau$'s smallest enclosing ball).
Then for every point $x \in \tau$,
the distance from $\tilde{x}$ to the nearest vertex of $\tau$ is
at most $\sqrt{2} r$.
Moreover, if $r < \ebs(\tilde{x})$,
the distance from $\tilde{x}$ to the nearest vertex of $\tau$ is at most
\begin{equation}
\sqrt{2 \, \ebs(\tilde{x})
      \left( \ebs(\tilde{x}) - \sqrt{\ebs(\tilde{x})^2 - r^2} \right)}
\in [r, \sqrt{2} r).
\label{proj2nearbound}
\end{equation}
\end{lemma}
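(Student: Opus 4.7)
The plan is to prove both bounds simultaneously by averaging $|v_i\tilde{x}|^2$ against the barycentric coordinates of $x$ in $\tau$, and then, for the sharper bound, substituting Lemma~\ref{interplemma}.  Write $x = \sum_i \lambda_i v_i$ with $\lambda_i \geq 0$ and $\sum_i \lambda_i = 1$, and let $c$ be the center of $\tau$'s smallest enclosing ball, so that $|v_i - c| \leq r$ for every vertex.  The standard variance identity gives
\[
\sum_i \lambda_i |v_i - x|^2 \;=\; \sum_i \lambda_i |v_i - c|^2 \;-\; |x - c|^2 \;\leq\; r^2 - |xc|^2,
\]
and expanding $(v_i - \tilde{x}) = (v_i - x) + (x - \tilde{x})$ together with the vanishing cross term $\sum_i \lambda_i (v_i - x) = 0$ yields
\[
\min_i |v_i \tilde{x}|^2 \;\leq\; \sum_i \lambda_i |v_i \tilde{x}|^2 \;=\; \sum_i \lambda_i |v_i - x|^2 + |x\tilde{x}|^2 \;\leq\; r^2 - |xc|^2 + |x\tilde{x}|^2.
\]

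For the unconditional $\sqrt{2}\,r$ bound, I would observe that because every $v_i$ lies on $\Sigma$, $|x\tilde{x}| \leq \min_i |xv_i|$, and the very same averaging also gives $|x\tilde{x}|^2 \leq r^2 - |xc|^2$; substituting into the display yields $\min_i |v_i \tilde{x}|^2 \leq 2(r^2 - |xc|^2) \leq 2r^2$.  For the sharper bound when $r < \ebs(\tilde{x})$, I would instead invoke Lemma~\ref{interplemma} applied to the smallest enclosing ball to get $|x\tilde{x}|^2 \leq (E - \sqrt{E^2 - r^2 + |xc|^2})^2$ where $E = \ebs(\tilde{x})$.  Expanding the square produces $2E^2 - r^2 + |xc|^2 - 2E\sqrt{E^2 - r^2 + |xc|^2}$, and the $-r^2 + |xc|^2$ exactly cancels the $r^2 - |xc|^2$ in the display above, leaving
\[
\min_i |v_i \tilde{x}|^2 \;\leq\; 2E\bigl(E - \sqrt{E^2 - r^2 + |xc|^2}\bigr) \;\leq\; 2E\bigl(E - \sqrt{E^2 - r^2}\bigr),
\]
which is~(\ref{proj2nearbound}).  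That this latter expression lies in $[r, \sqrt{2}\,r)$ follows by rationalizing $E - \sqrt{E^2 - r^2} = r^2/(E + \sqrt{E^2 - r^2})$ and noting that the resulting quantity is monotonically decreasing in $E$, equal to $2r^2$ at $E = r$ and tending to $r^2$ as $E \to \infty$.

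There is no single hard step; the argument is clean algebra built around the variance identity and the barycentric averaging.  The only delicate point is the observation that the $|xc|^2$ terms cancel exactly when the Interpolation-Lemma estimate is substituted, which is what makes the bound tighten smoothly from $\sqrt{2}\,r$ (at $r = \ebs(\tilde{x})$) down toward $r$ (as $r/\ebs(\tilde{x}) \to 0$), matching the Pythagorean heuristic $|v\tilde{x}|^2 \approx |vx|^2 + |x\tilde{x}|^2$ that one would guess for the nearest vertex.
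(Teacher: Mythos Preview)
Your proof is correct and takes a genuinely different route from the paper's. The paper argues geometrically: it lets $y$ be the point of $\tau$ nearest $\tilde{x}$, observes that the segment $y\tilde{x}$ is orthogonal to the face $\sigma$ of $\tau$ whose relative interior contains $y$, and then applies Pythagoras to the right triangle $w,y,\tilde{x}$ (with $w$ the vertex of $\sigma$ nearest $y$) to obtain $|w\tilde{x}|^2 = |yw|^2 + |y\tilde{x}|^2 \leq r^2 + |x\tilde{x}|^2$; from there it substitutes the Interpolation Lemma just as you do. Your barycentric averaging via the variance identity replaces this orthogonal-projection step with pure algebra. A nice bonus of your route is that it naturally carries the $-|xc|^2$ term through and cancels it exactly against the $+|xc|^2$ in the sharper form of Lemma~\ref{interplemma}, so you pass through the tighter intermediate bound $2E\bigl(E-\sqrt{E^2-r^2+|xc|^2}\,\bigr)$ before discarding $|xc|^2$; the paper uses only the weaker form of the interpolation bound and does not see this cancellation. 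The paper's argument is perhaps more immediately pictorial, while yours is more systematic and slightly sharper along the way.
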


\begin{proof}
Let $y \in \tau$ be the point nearest $\tilde{x}$ on $\tau$.
As $x$ is also on $\tau$, $|y\tilde{x}| \leq |x\tilde{x}|$.
Let $\sigma$ be the unique face of $\tau$
(i.e., a vertex, edge, triangle, etc.)\ whose relative interior contains $y$.
Observe that the line segment $y\tilde{x}$ is orthogonal to $\sigma$,
as Figure~\ref{proj2near} illustrates.
(If $\sigma$ is a vertex, it is a trivial ``orthogonality.'')
Let $w$ be the vertex of $\sigma$ nearest $y$;
$y\tilde{x}$ is orthogonal to $yw$.
By Pythagoras' Theorem,
$|w\tilde{x}|^2 = |yw|^2 + |y\tilde{x}|^2 \leq |yw|^2 + |x\tilde{x}|^2$.

\begin{figure}
\centerline{\input{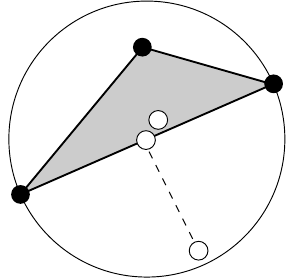_t}}

\caption{\label{proj2near}
Given a simplex $\tau$ with min-containment radius $r$ and
a point $x \in \tau$, the distance from $\tilde{x}$ to
the nearest vertex of $\tau$ is at most $\sqrt{2} r$.
}
\end{figure}

As $\tau$'s smallest enclosing ball has radius $r$, $|yw| \leq r$.
Likewise, let $z$ be the vertex of $\tau$ nearest $x$; then $|xz| \leq r$.
As $z$ lies on $\Sigma$ and $\tilde{x}$ is the point nearest $x$ on $\Sigma$,
$|x\tilde{x}| \leq |xz| \leq r$.
Hence $|w\tilde{x}| \leq \sqrt{r^2 + r^2} = \sqrt{2} r$, and
the distance from $\tilde{x}$ to the nearest vertex of $\tau$
(which may or may not be $w$) is at most $\sqrt{2} r$ as claimed.

Alternatively, if $r < \ebs(\tilde{x})$,
we can substitute the bound for $|x\tilde{x}|$ from
the Surface Interpolation Lemma (Lemma~\ref{interplemma}),
yielding the bound
$\sqrt{r^2 +
 \left( \ebs(\tilde{x}) - \sqrt{\ebs(\tilde{x})^2 - r^2} \right)^2}$,
which is equal to~(\ref{proj2nearbound}).
\end{proof}

This brings us to the first main result of this section.

\begin{lemma}[Extended Triangle Normal Lemma]
\label{etnlsimple}
Let $\Sigma$ be a bounded $k$-manifold without boundary in $\R^d$ with
$k \geq 2$.
Let $\tau = \triangle vv'v''$ be a triangle whose vertices lie on $\Sigma$.
Let $R$ be $\tau$'s circumradius.
Suppose that $R \leq \kappa \, \lfs(v)$, $R \leq \kappa \, \lfs(v')$, and
$R \leq \kappa \, \lfs(v'')$ for some $\kappa \leq 1 / 2$.
Let $x$ be any point on $\tau$, and
let $\tilde{x}$ be the point nearest $x$ on $\Sigma$.
Then for any angle $\phi \in (0^\circ, 60^\circ]$,
\begin{equation}
\angle (N_\tau, N_{\tilde{x}}\Sigma) \leq
\max \left\{ \eta(\sqrt{2} \kappa) +
\arcsin \left( \kappa \cot \frac{\phi}{2} \right),
\eta(2 \kappa) +
\arcsin \left( \kappa \cot \left( 45^\circ - \frac{\phi}{4} \right) \right)
\right\},
\label{etnlsimpbound}
\end{equation}
where $\eta(\delta) = \eta_1(\delta)$ as defined in Lemma~\ref{lem:tnl}
if $d - k = 1$, or $\eta(\delta) = \eta_2(\delta)$ as defined in
Lemma~\ref{lem:tnlhigh} if $d - k \geq 2$.
\end{lemma}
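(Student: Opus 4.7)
The plan is to bound $\angle(N_\tau, N_{\tilde{x}}\Sigma)$ by the flat-angle triangle inequality, splitting it as $\angle(N_\tau, N_w\Sigma) + \angle(N_w\Sigma, N_{\tilde{x}}\Sigma)$ for a carefully chosen anchor vertex $w$ of $\tau$. The first term will be estimated by the Triangle Normal Lemma (Lemma~\ref{lem:tnlhigh}) and the second by the appropriate Normal Variation Lemma (Lemma~\ref{ptnormal} in codimension~$1$, Lemma~\ref{ptnormal2} otherwise). Because the Triangle Normal Lemma blows up at vertices with tiny plane angle, the choice of $w$ matters: the two arguments of the $\max$ in~(\ref{etnlsimpbound}) correspond to two different choices of $w$, and taking their maximum is what makes the bound unconditional.

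Let $v^*$ be the vertex of $\tau$ nearest to $\tilde{x}$ and let $\phi^*$ be $\tau$'s plane angle at $v^*$. In Case~$1$, suppose $\phi^* \geq \phi$, and anchor at $v^*$. Lemma~\ref{proj2nearest} gives $|v^*\tilde{x}| \leq \sqrt{2}\, R$, and the hypothesis $R \leq \kappa\,\lfs(v^*)$ yields $|v^*\tilde{x}|/\lfs(v^*) \leq \sqrt{2}\kappa$, so the Normal Variation Lemma contributes at most $\eta(\sqrt{2}\kappa)$. Because $\phi \leq 60^\circ$ forces $\cot(\phi/2) \geq \cot 30^\circ = \sqrt{3} \geq 1$, the quantity $\max\{\cot(\phi^*/2),1\}$ appearing in Lemma~\ref{lem:tnlhigh} is at most $\cot(\phi/2)$ whether $\phi^*$ is acute or obtuse, so the Triangle Normal Lemma contributes at most $\arcsin(\kappa\cot(\phi/2))$. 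Summing gives the first branch of~(\ref{etnlsimpbound}).

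In Case~$2$, suppose $\phi^* < \phi$. Then the other two plane angles sum to more than $180^\circ - \phi$, so at least one of them, at a vertex we call $v^\sharp$ with angle $\phi^\sharp$, satisfies $\phi^\sharp \geq 90^\circ - \phi/2 \geq 60^\circ$. Anchor at $v^\sharp$. Lemma~\ref{lem:projinsphere} applies (since $R \leq \kappa\,\lfs(v^\sharp) \leq \lfs(v^\sharp)/2$) and places $\tilde{x}$ in $\tau$'s smallest enclosing ball, which has radius at most $R$; as $v^\sharp$ is also in that ball, $|v^\sharp\tilde{x}| \leq 2R \leq 2\kappa\,\lfs(v^\sharp)$, contributing $\eta(2\kappa)$. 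Since $\phi^\sharp/2 \geq 45^\circ - \phi/4$ with both sides in $(0^\circ, 90^\circ)$, monotonicity of $\cot$ gives $\cot(\phi^\sharp/2) \leq \cot(45^\circ - \phi/4)$; and if $\phi^\sharp$ is obtuse then $\max\{\cot(\phi^\sharp/2),1\} = 1 \leq \cot(45^\circ - \phi/4)$ because $45^\circ - \phi/4 \leq 45^\circ$. The Triangle Normal Lemma therefore contributes at most $\arcsin(\kappa\cot(45^\circ - \phi/4))$, giving the second branch.

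The main subtlety is really just the bookkeeping around the anchor: in Case~$1$ the nearest-vertex choice buys the sharper $\sqrt{2}R$ distance bound from Lemma~\ref{proj2nearest}, but in Case~$2$ we are forced off that vertex on angle grounds and must absorb the coarser $2R$ estimate via Lemma~\ref{lem:projinsphere}, which is exactly why the two cases carry the arguments $\sqrt{2}\kappa$ and $2\kappa$ respectively. Verifying the cotangent comparisons and the applicability of Lemma~\ref{lem:projinsphere} (guaranteed by $\kappa \leq 1/2$) is routine; the final bound is obtained by taking the $\max$ over the two cases so the branch that actually occurs need not be identified.
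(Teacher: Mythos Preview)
Your proof is correct and follows essentially the same approach as the paper's: split into two cases according to whether the nearest vertex to $\tilde{x}$ has plane angle at least $\phi$, anchor at that vertex (using the $\sqrt{2}r$ bound from Lemma~\ref{proj2nearest}) in the first case, and anchor at a large-angle vertex (using the $2r$ bound from Lemma~\ref{lem:projinsphere}) in the second. The only cosmetic difference is that the paper always takes $v^\sharp$ to be the vertex of largest plane angle, whereas you take any vertex with angle at least $90^\circ - \phi/2$; since the largest-angle vertex satisfies this, the arguments are equivalent.
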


Lemma~\ref{etnlsimple} is unusual because it has a parameter $\phi$;
the right-hand side of Inequality~(\ref{etnlsimpbound}) varies a bit
with~$\phi$.
The parameter $\phi$ is a threshold that determines
which vertex of $\tau$ is used as an anchor.
In codimension~$1$, a good choice of $\phi$ is $49^\circ$, because
it balances the two expressions in~(\ref{etnlsimpbound}) reasonably well
and delivers a bound below $90^\circ$ over the range $\kappa \in [0, 0.3734]$.
For a specific value of $\kappa$, one can tune $\phi$ to obtain
a slightly better bound, but the improvement is marginal.
In codimension $2$ or greater, the bound~(\ref{etnlsimpbound}) is weaker
because $\eta_2$ is weaker than $\eta_1$.
A good choice is $\phi = 48.5^\circ$, which 
delivers a bound below $90^\circ$ over the range $\kappa \in [0, 0.3527]$.
Figure~\ref{etnlplot} graphs the bound~(\ref{etnlsimpbound}) both
for codimension $1$ and for higher codimensions.

\begin{proof}
Suppose without loss of generality that
$v$ is the vertex of $\tau$ nearest $\tilde{x}$.
Let $w \in \{ v, v', v'' \}$ be the vertex at $\tau$'s largest plane angle.
Let $B_\tau$ be $\tau$'s smallest enclosing ball and
observe that its radius is $r \leq R \leq \lfs(v) / 2$.
By Lemma~\ref{lem:projinsphere}, $\tilde{x} \in B_\tau$, so
$|w\tilde{x}| \leq 2r \leq 2 \kappa \, \lfs(w)$.
By Lemma~\ref{proj2nearest},
$|v\tilde{x}| \leq \sqrt{2} r \leq \sqrt{2} \kappa \, \lfs(v)$.
By the Normal Variation Lemma,
$\angle (N_w\Sigma, N_{\tilde{x}}\Sigma) \leq \eta(2 \kappa)$ and
$\angle (N_v\Sigma, N_{\tilde{x}}\Sigma) \leq \eta(\sqrt{2} \kappa)$.

If $\tau$'s plane angle at the vertex $v$ is $\phi$ or greater, then
by the Triangle Normal Lemma (Lemma~\ref{lem:tnl} or~\ref{lem:tnlhigh}),
$\sin \angle (N_\tau, N_v\Sigma) \leq \frac{R}{\lfs(v)} \cot \frac{\phi}{2}
\leq \kappa \cot \frac{\phi}{2}$.
Then $\angle (N_\tau, N_{\tilde{x}}\Sigma) \leq
\angle (N_{\tilde{x}}\Sigma, N_v\Sigma) + \angle (N_\tau, N_v\Sigma) \leq
\eta(\sqrt{2} \kappa) + \arcsin (\kappa \cot \frac{\phi}{2})$.

Otherwise, $\tau$'s plane angle at $v$ is less than $\phi$, so
$\tau$'s plane angle at $w$ ($\tau$'s largest plane angle) is
greater than $(180^\circ - \phi) / 2$.
By the Triangle Normal Lemma, $\sin \angle (N_\tau, N_w\Sigma) \leq
\frac{R}{\lfs(w)} \cot (45^\circ - \phi / 4) \leq
\kappa \cot (45^\circ - \phi / 4)$.
Then $\angle (N_\tau, N_{\tilde{x}}\Sigma) \leq
\angle (N_{\tilde{x}}\Sigma, N_w\Sigma) + \angle (N_\tau, N_w\Sigma) \leq
\eta(2 \kappa) + \arcsin (\kappa \cot (45^\circ - \phi / 4))$.
\end{proof}

For our final act, we address the approximation accuracy of
restricted Delaunay triangulations of $\epsilon$-samples.
{\em Restricted Delaunay triangulations} (RDTs),
proposed by Edelsbrunner and Shah~\cite{edelsbrunner97}, have become
a well-established way of generating Delaunay-like triangulations
on curved surfaces~\cite{cheng12,dey07,edelsbrunner01}.
Given a $k$-manifold $\Sigma \subset \R^d$ and
a finite set of vertices $V \subset \Sigma$,
let $\del V$ be the ($d$-dimensional) Delaunay triangulation of $V$ and
let $\vor V$ be the Voronoi diagram of $V$.
Every $j$-simplex in $\del V$ is dual to some $(d - j)$-face of $\vor V$.
The restricted Delaunay triangulation $\resdel{V}{\Sigma}$ is
a subcomplex of $\del V$ consisting of the {\em restricted Delaunay simplices}:
the simplices whose Voronoi dual faces intersect $\Sigma$.

Here, we are specifically interested in
the restricted Delaunay triangles when $k \geq 2$.
Recall that for a triangle $\tau = \triangle vv'v''$,
$N_\tau$ is the set of all points in $\R^d$ that are equidistant from
$v$, $v'$, and $v''$, a flat of dimension $d - 2$ that
is orthogonal to $\tau$ and passes through $\tau$'s circumcenter.
Let $\tau^* \in \vor V$ denote
the Voronoi $(d - 2)$-face dual to some $\tau \in \del V$.
By definition, $\tau$ is a restricted Delaunay triangle if
there exists a point $u \in \tau^* \cap \Sigma$.
(There might be more than one such point.)
We call $u$ a {\em restricted Voronoi vertex} dual to~$\tau$.

A finite point set $V \subset \Sigma$ is called
an {\em $\epsilon$-sample of $\Sigma$} if
for every point $p \in \Sigma$, there is a vertex $w \in V$ such that
$|pw| \leq \epsilon \, \lfs(p)$.
That is, the ball centered at $p$ with radius $\epsilon \, \lfs(p)$ contains
at least one sample point.
One of the crowning results of provably good surface reconstruction is that
for a sufficiently small~$\epsilon$,
the restricted Delaunay triangulation $\resdel{V}{\Sigma}$ of
an $\epsilon$-sample $V$ of $\Sigma$
is homeomorphic to $\Sigma$~\cite{amenta99b,amenta02,dey07}.
(In a forthcoming sequel paper, we will use this paper's results and
other new ideas to improve the constant $\epsilon$ in that theorem.)
For small~$\epsilon$, $\resdel{V}{\Sigma}$ is also
a geometrically accurate approximation of $\Sigma$,
as we demonstrate below in Corollary~\ref{epsgood}.

Although one could apply Lemma~\ref{etnlsimple} to
restricted Delaunay triangles,
we will obtain a stronger (but less general) extended triangle normal lemma
by taking advantage of the fact that for each restricted Delaunay triangle,
a dual point $u$ lies on $\Sigma$.
Prior to that, we need a couple of short technical lemmas.

\begin{lemma}
\label{lem:vertexnearsurf}
Let $\Sigma \subset \R^d$ be a point set with a well-defined medial axis $M$.
Let $u \in \Sigma$ be a restricted Voronoi vertex and
let $\tau$ be its dual restricted Delaunay simplex.
Let $x \in \tau$ be a point that does not lie on $M$.
Let $\tilde{x}$ be the point on $\Sigma$ nearest $x$.
There is a vertex $v$ of $\tau$ such that $|v\tilde{x}| \leq |vu|$, and
such that $|v\tilde{x}| < |vu|$ if $\tilde{x} \neq u$.
\end{lemma}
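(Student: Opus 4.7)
The plan is to prove the contrapositive via a halfspace argument, using only the fact that $u \in \Sigma$ (the precise structure of $\tau^*$ is not needed here). First I dispatch the easy case: if $\tilde{x} = u$, then for any vertex $v$ of $\tau$ we have $|v\tilde{x}| = |vu|$, so the non-strict inequality is trivially achieved. So assume $\tilde{x} \neq u$, and suppose for contradiction that $|v\tilde{x}| \geq |vu|$ for every vertex $v$ of $\tau$.

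Next I set up the bisector. Let $H$ be the perpendicular bisector hyperplane of the segment $u\tilde{x}$, and let $\overline{H^+}$ be the closed halfspace bounded by $H$ that contains $u$; a point $y$ lies in $\overline{H^+}$ if and only if $|yu| \leq |y\tilde{x}|$. By assumption, every vertex of $\tau$ lies in $\overline{H^+}$. Since $\tau$ is the convex hull of its vertices and $\overline{H^+}$ is convex, $\tau \subseteq \overline{H^+}$, and in particular $x \in \overline{H^+}$, i.e., $|xu| \leq |x\tilde{x}|$.

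Now I derive a contradiction from the nearest-point property. Because $u \in \Sigma$ and $\tilde{x}$ is \emph{a} point of $\Sigma$ nearest $x$, we also have $|x\tilde{x}| \leq |xu|$. Combining the two inequalities yields $|xu| = |x\tilde{x}|$, so both $u$ and $\tilde{x}$ are points of $\Sigma$ at minimum distance from $x$. Since $u \neq \tilde{x}$, the nearest point to $x$ on $\Sigma$ is not unique, which forces $x \in M$ by the very definition of the medial axis. This contradicts the hypothesis $x \notin M$.

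Hence there must exist a vertex $v$ of $\tau$ with $|v\tilde{x}| < |vu|$, which establishes the strict inequality in the case $\tilde{x} \neq u$ and also the non-strict inequality. There is no substantial obstacle here; the only point to be careful about is that the conclusion exploits the uniqueness of the nearest point guaranteed off the medial axis, which is exactly what the hypothesis $x \notin M$ supplies.
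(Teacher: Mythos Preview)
Your proof is correct and follows essentially the same approach as the paper's: both arguments hinge on the perpendicular bisector hyperplane of the segment $u\tilde{x}$ together with the convexity of $\tau$, using the uniqueness of $\tilde{x}$ off the medial axis to force a vertex onto $\tilde{x}$'s side. The only cosmetic difference is that the paper argues directly (showing $x$ lies strictly on $\tilde{x}$'s side of the bisector, hence so must some vertex) while you phrase the same step as a contrapositive.
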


\begin{proof}
If $\tilde{x} = u$ then the result follows immediately, so
assume that $\tilde{x} \neq u$.
As $x$ does not lie on the medial axis,
$\tilde{x}$ is the unique point on $\Sigma$ nearest $x$.
As $u$ also lies on $\Sigma$, $|x\tilde{x}| < |xu|$.
Let $\Pi$ be the hyperplane that bisects the line segment $\tilde{x}u$, and
observe that $x$ lies on the same side of $\Pi$ as $\tilde{x}$.
As $x \in \tau$ and $\tau$ is a simplex,
some vertex $v$ of $\tau$ lies on the same side of $\Pi$ as $\tilde{x}$, thus
$|v\tilde{x}| < |vu|$.
\end{proof}

The following simple lemma is implicit in Amenta and Bern~\cite{amenta99b} and
explicit in Amenta, Choi, Dey, and Leekha~\cite{amenta02}.

\begin{lemma}[Feature Translation Lemma]
\label{lem:ftl}
Let $\Sigma \subset \R^3$ be a smooth surface and
let $p, q \in \Sigma$ be points on $\Sigma$ such that
$|pq| \leq \epsilon \, \lfs(p)$ for some $\epsilon < 1$.
Then
\[
\lfs(p) \leq \frac{1}{1 - \epsilon} \lfs(q)
\hspace{.2in}  \mbox{and}  \hspace{.2in}
|pq| \leq \frac{\epsilon}{1 - \epsilon} \lfs(q).
\]
\end{lemma}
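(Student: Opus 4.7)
The proof plan is to exploit the $1$-Lipschitz property of $\lfs$ that the tour section (just before Section~\ref{prox}) records: for all $p, q \in \Sigma$, $\lfs(p) \leq \lfs(q) + |pq|$. With this in hand, both inequalities follow by a single algebraic rearrangement.

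First I would combine the Lipschitz inequality with the hypothesis $|pq| \leq \epsilon \, \lfs(p)$ to get $\lfs(p) \leq \lfs(q) + \epsilon \, \lfs(p)$, and then move $\epsilon \, \lfs(p)$ to the left-hand side to obtain $(1 - \epsilon) \lfs(p) \leq \lfs(q)$. Since $\epsilon < 1$, division by $1 - \epsilon$ yields the first bound $\lfs(p) \leq \lfs(q) / (1 - \epsilon)$. The second bound is then immediate: $|pq| \leq \epsilon \, \lfs(p) \leq \epsilon \, \lfs(q) / (1 - \epsilon)$.

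For completeness, if the $1$-Lipschitz property is not taken as a black box, I would prove it in one line: let $m \in M$ be a nearest medial axis point to $q$, so $\lfs(q) = |qm|$; then by the triangle inequality $\lfs(p) \leq |pm| \leq |pq| + |qm| = |pq| + \lfs(q)$, where the first inequality uses the definition of $\lfs$ as the distance from $p$ to the closest point of $M$.

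The hard part, such as it is, amounts to noticing that the hypothesis is asymmetric in $p$ and $q$ (it uses $\lfs(p)$, not $\lfs(q)$), so the conclusion is really just a statement that one may \emph{transfer} the feature-size reference point from $p$ to $q$ at the cost of a factor $1/(1 - \epsilon)$. No geometry beyond the Lipschitz bound is needed; in particular, we never have to invoke medial balls, curvature, or the ambient dimension, so the same proof would go through verbatim in $\R^d$ for any $d$, even though the statement is phrased only for $\R^3$.
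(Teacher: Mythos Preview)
Your proof is correct and essentially identical to the paper's: the paper also takes a nearest medial-axis point $m$ to $q$, applies the triangle inequality to get $\lfs(p) \leq |pm| \leq |pq| + \lfs(q) \leq \epsilon\,\lfs(p) + \lfs(q)$, and rearranges. You simply factor out the $1$-Lipschitz step explicitly before combining it with the hypothesis, which is the same argument in slightly different packaging.
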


\begin{proof}
By the definition of the local feature size,
there is a medial axis point $m$ such that $|qm| = \lfs(q)$.
By the Triangle Inequality,
$\lfs(p) \leq |pm| \leq |pq| + |qm| \leq \epsilon \, \lfs(p) + \lfs(q)$.
Rearranging terms gives $\lfs(p) \leq \lfs(q) / (1 - \epsilon)$.
The second claim follows immediately.
\end{proof}

\begin{lemma}[Extended Triangle Normal Lemma for $\epsilon$-samples]
\label{etnleps}
Let $\Sigma$ be a bounded $k$-manifold without boundary in $\R^d$ with
$k \geq 2$.
Let $\tau = \triangle vv'v''$ be a triangle whose vertices lie on $\Sigma$.
Let $u$ be a point in $\Sigma \cap N_\tau$.
Let $s = |vu| = |v'u| = |v''u|$ and
suppose that $s \leq \epsilon \, \lfs(u)$ for some $\epsilon \leq 1 / 3$.
Let $x$ be any point on $\tau$, and
let $\tilde{x}$ be the point nearest $x$ on $\Sigma$.
Then for any angle $\phi \in (0^\circ, 60^\circ]$,
\begin{equation}
\angle (N_\tau, N_{\tilde{x}}\Sigma) \leq
\max \left\{
\eta \left( \frac{\epsilon}{1 - \epsilon} \right) +
\arcsin \left( \frac{\epsilon}{1 - \epsilon} \cot \frac{\phi}{2} \right),
2 \eta(\epsilon) + \arcsin \left(
\frac{\epsilon}{1 - \epsilon} \cot \left( 45^\circ - \frac{\phi}{4} \right)
\right)
\right\},
\label{etnlepsineq}
\end{equation}
where $\eta(\delta) = \eta_1(\delta)$ as defined in Lemma~\ref{lem:tnl}
if $d - k = 1$, or $\eta(\delta) = \eta_2(\delta)$ as defined in
Lemma~\ref{lem:tnlhigh} if $d - k \geq 2$.
\end{lemma}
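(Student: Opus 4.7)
The plan is to follow the skeleton of Lemma~\ref{etnlsimple}'s proof while exploiting the fact that the restricted Voronoi vertex $u$ lies on $\Sigma$, so it can serve as an intermediate pivot for the Normal Variation Lemma. Routing through $u$ replaces a single application of $\eta$ at a larger argument with two applications at the smaller argument $\epsilon$, which is a net improvement because the Taylor expansion $\eta(\delta) \approx \delta + \Theta(\delta^3)$ is superlinear.

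First I will collect three basic estimates. Since $u \in N_\tau$ is equidistant from all three vertices of $\tau$ at distance $s$, Pythagoras' Theorem (applied to the right triangle formed by $u$, $\tau$'s circumcenter, and any vertex, using that $N_\tau \perp \aff\tau$) yields $R \leq s$. Because $\lfs$ is $1$-Lipschitz (Lemma~\ref{lem:ftl} applied with $p = y$, $q = u$), for each vertex $y \in \{v, v', v''\}$ we have $\lfs(u) \leq \lfs(y)/(1-\epsilon)$, hence $s \leq \frac{\epsilon}{1-\epsilon}\lfs(y)$, and the hypothesis $\epsilon \leq 1/3$ implies $s \leq \lfs(y)/2$. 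Finally, the closed ball $\bar B(u, s)$ contains all three vertices of $\tau$ on its boundary, so it encloses the triangle; applying Lemma~\ref{lem:projinsphere} to this ball (with any vertex playing the role of $v$) shows $\tilde{x} \in \bar B(u, s)$, i.e., $|u\tilde{x}| \leq s \leq \epsilon\,\lfs(u)$ for every $x \in \tau$.

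Now assume without loss of generality that $v$ is the vertex of $\tau$ nearest $\tilde{x}$; by Lemma~\ref{lem:vertexnearsurf} this gives $|v\tilde{x}| \leq |vu| = s$, so $|v\tilde{x}|/\lfs(v) \leq \epsilon/(1-\epsilon)$. I then split by the plane angle $\phi_v$ of $\tau$ at $v$. In Case~1, when $\phi_v \geq \phi$, the Triangle Normal Lemma (Lemma~\ref{lem:tnlhigh}) at $v$ combined with $R \leq s \leq \frac{\epsilon}{1-\epsilon}\lfs(v)$ gives $\angle(N_\tau, N_v\Sigma) \leq \arcsin\!\bigl(\tfrac{\epsilon}{1-\epsilon}\cot(\phi/2)\bigr)$, while the Normal Variation Lemma gives $\angle(N_v\Sigma, N_{\tilde{x}}\Sigma) \leq \eta(\epsilon/(1-\epsilon))$; the triangle inequality for angles between flats yields the first expression in the max. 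In Case~2, when $\phi_v < \phi$, the angle sum of $\tau$ forces its largest plane angle (at some vertex $w$) to exceed $(180^\circ - \phi)/2 = 90^\circ - \phi/2$, so the Triangle Normal Lemma at $w$ gives $\angle(N_\tau, N_w\Sigma) \leq \arcsin\!\bigl(\tfrac{\epsilon}{1-\epsilon}\cot(45^\circ - \phi/4)\bigr)$; to connect $w$ to $\tilde{x}$ I pivot through $u$, using $|wu|/\lfs(u) \leq \epsilon$ and $|u\tilde{x}|/\lfs(u) \leq \epsilon$ to invoke the Normal Variation Lemma twice, yielding $\angle(N_w\Sigma, N_u\Sigma) \leq \eta(\epsilon)$ and $\angle(N_u\Sigma, N_{\tilde{x}}\Sigma) \leq \eta(\epsilon)$, which sum to $2\eta(\epsilon)$ and produce the second expression.

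I do not anticipate a major technical obstacle, as the proof is essentially a careful synthesis of results already established. The one genuinely interesting design choice---and the reason the two cases look asymmetric---is in Case~2. A single direct application of the Normal Variation Lemma from $w$ to $\tilde{x}$ would only offer $|w\tilde{x}|/\lfs(w) \leq 2\epsilon/(1-\epsilon)$ (via $|w\tilde{x}| \leq |wu| + |u\tilde{x}|$ and the $1$-Lipschitz property), producing $\eta(2\epsilon/(1-\epsilon))$, which the superlinear growth of $\eta$ makes worse than $2\eta(\epsilon)$. Pivoting through $u$ keeps each hop at the smaller ratio $\epsilon$ (rather than $\epsilon/(1-\epsilon)$ or larger), which is precisely what the restricted Voronoi vertex buys us over Lemma~\ref{etnlsimple}. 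The remaining routine checks are that all arguments of $\arcsin$ and $\eta$ stay in their legal ranges, which follow from $\epsilon \leq 1/3$ and $\phi \in (0^\circ, 60^\circ]$.
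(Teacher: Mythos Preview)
Your proposal is correct and follows essentially the same approach as the paper's proof: the same ball $\bar B(u,s)$ is used with Lemma~\ref{lem:projinsphere} to get $|u\tilde{x}|\le s$, Lemma~\ref{lem:vertexnearsurf} gives $|v\tilde{x}|\le s$ for the nearest vertex, and the same case split on $\phi_v$ leads to the two terms in the max, with Case~2 pivoting through $u$ exactly as you describe. Your commentary on why the pivot through $u$ beats a direct hop from $w$ to $\tilde{x}$ is a nice addition that the paper does not spell out.
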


A good choice of $\phi$ in codimension $1$ is $56.65^\circ$, which delivers
a bound below $90^\circ$ for all $\epsilon \leq 0.3202$.
A good choice of $\phi$ in higher codimensions is $56.75^\circ$, which delivers
a bound below $90^\circ$ for all $\epsilon \leq 0.3189$.
Figure~\ref{etnlepsplot} graphs the bound for both cases.

\begin{figure}
\centerline{
  \setlength{\unitlength}{4in}
  \begin{picture}(1,0.653)
  \put(0,0){\includegraphics[width=4in]{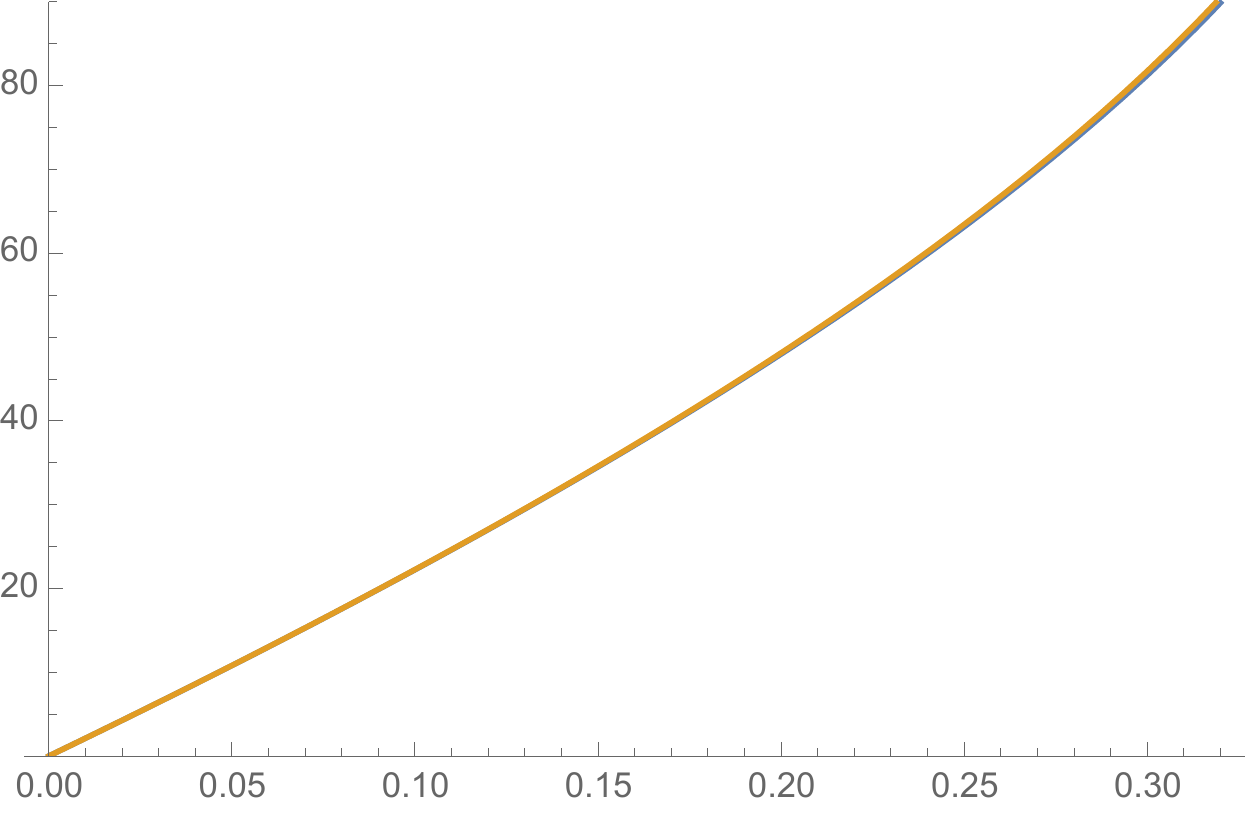}}
  \put(0.98,0.01){$\epsilon$}
  \put(0.07,0.62){bound on $\angle(N_\tau, N_{\tilde{x}}\Sigma)$}
  \put(0.56,0.3){codimension $1$}
  \put(0.24,0.35){higher codimensions}
  \end{picture}
}

%

\caption{\label{etnlepsplot}  \protect\small \sf
Upper bounds for $\angle(N_\tau, N_{\tilde{x}}\Sigma) =
\angle(\aff{\tau}, T_{\tilde{x}}\Sigma)$
as a function of the parameter $\epsilon$ of an $\epsilon$-sample of $\Sigma$,
where $\tau$ is a restricted Delaunay triangle whose vertices are
in the $\epsilon$-sample and $x$ is any point on $\tau$.
The blue curve is the upper bound in codimension~$1$
(with the choice $\phi = 56.65^\circ$) and
the brown curve is the upper bound in higher codimensions
(with the choice $\phi = 56.75^\circ$),
for which the Normal Variation Lemma is weaker.
The two curves are barely distinguishable because
$\eta_1(\epsilon)$ does not differ much from $\eta_2(\epsilon)$ for
$\epsilon < 0.32$.
}
\end{figure}

\begin{proof}
Let $R$ be $\tau$'s circumradius.
Let $B_\tau$ be the closed $d$-ball with center $u$ and radius $s$,
whose boundary passes through all three vertices of $\tau$.
As $\tau$'s circumcircle is a cross section of the boundary of $B_\tau$,
$R \leq s$.

Suppose without loss of generality that
$v$ is the vertex of $\tau$ nearest $\tilde{x}$.
Let $w \in \{ v, v', v'' \}$ be the vertex at $\tau$'s largest plane angle.
As $|vu| = |wu| = s \leq \epsilon \, \lfs(u)$,
by the Feature Translation Lemma (Lemma~\ref{lem:ftl}),
$\lfs(u) \leq \lfs(v) / (1 - \epsilon)$ and likewise
$\lfs(u) \leq \lfs(w) / (1 - \epsilon)$, so
$R \leq s \leq \frac{\epsilon}{1 - \epsilon} \, \lfs(v) \leq \lfs(v) / 2$
and likewise $R \leq \frac{\epsilon}{1 - \epsilon} \, \lfs(w)$.
By Lemma~\ref{lem:projinsphere} (with $B_\tau$ as defined above),
$|\tilde{x}u| \leq s$; hence $|\tilde{x}u| \leq \epsilon \, \lfs(u)$.
By Lemma~\ref{lem:vertexnearsurf}, $|\tilde{x}v| \leq s$;
hence $|\tilde{x}v| \leq \frac{\epsilon}{1 - \epsilon} \, \lfs(v)$.
By the Normal Variation Lemma,
$\angle (N_v\Sigma, N_u\Sigma) \leq \eta(\epsilon)$,
$\angle (N_w\Sigma, N_u\Sigma) \leq \eta(\epsilon)$,
$\angle (N_{\tilde{x}}\Sigma, N_u\Sigma) \leq \eta(\epsilon)$, and
$\angle (N_{\tilde{x}}\Sigma, N_v\Sigma) \leq
\eta \left( \frac{\epsilon}{1 - \epsilon} \right)$.

If $\tau$'s plane angle at the vertex $v$ is $\phi$ or greater, then
by the Triangle Normal Lemma (Lemma~\ref{lem:tnl} or~\ref{lem:tnlhigh}),
$\sin \angle (N_\tau, N_v\Sigma) \leq \frac{R}{\lfs(v)} \cot \frac{\phi}{2}
\leq \frac{\epsilon}{1 - \epsilon} \cot \frac{\phi}{2}$.
Then $\angle (N_\tau, N_{\tilde{x}}\Sigma) \leq
\angle (N_{\tilde{x}}\Sigma, N_v\Sigma) + \angle (N_\tau, N_v\Sigma) \leq
\eta \left( \frac{\epsilon}{1 - \epsilon} \right) +
\arcsin \left( \frac{\epsilon}{1 - \epsilon} \cot \frac{\phi}{2} \right)$.

Otherwise, $\tau$'s plane angle at $v$ is less than $\phi$, so
$\tau$'s plane angle at $w$ ($\tau$'s largest plane angle) is
greater than $(180^\circ - \phi) / 2$.
By the Triangle Normal Lemma, $\sin \angle (N_\tau, N_w\Sigma) \leq
\frac{R}{\lfs(w)} \cot (45^\circ - \phi / 4) \leq
\frac{\epsilon}{1 - \epsilon} \cot (45^\circ - \phi / 4)$.
Then $\angle (N_\tau, N_{\tilde{x}}\Sigma) \leq
\angle (N_{\tilde{x}}\Sigma, N_u\Sigma) + \angle (N_w\Sigma, N_u\Sigma) +
\angle (N_\tau, N_w\Sigma) \leq 2 \eta(\epsilon) + \arcsin
\left( \frac{\epsilon}{1 - \epsilon} \cot (45^\circ - \phi / 4) \right)$.
\end{proof}

Our final corollary summarizes the interpolation and normal errors for
restricted Delaunay triangulations of $\epsilon$-samples of manifolds.

\begin{corollary}
\label{epsgood}
Let $\Sigma$ be a bounded $k$-manifold without boundary in $\R^d$
with $k \geq 2$.
Let $V$ be an $\epsilon$-sample of $\Sigma$ for some $\epsilon < 1 / 2$.
Then for every restricted Delaunay triangle $\tau \in \resdel{V}{\Sigma}$ and
every point $x \in \tau$,
\begin{eqnarray*}
|x\tilde{x}| & \leq & \left( 1 -
             \sqrt{1 - \left( \frac{\epsilon}{1 - \epsilon} \right)^2} \right)
             \, \lfs(\tilde{x})
\hspace{.2in}  \mbox{and}  \hspace{.2in}  \\
|x\tilde{x}| & \leq & (1 - \epsilon - \sqrt{1 - 2 \epsilon}) \, \lfs(u)
\end{eqnarray*}
where $u$ is any restricted Voronoi vertex dual to $\tau$.
Moreover, if $\epsilon \leq 1 / 3$, then
$\angle (N_\tau, N_{\tilde{x}}\Sigma)$ satisfies~(\ref{etnlepsineq})
for any $\phi \in (0^\circ, 60^\circ]$.
\end{corollary}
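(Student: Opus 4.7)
The plan is to reduce all three bounds to applications of the Surface Interpolation Lemma (Lemma~\ref{interplemma}) and the Extended Triangle Normal Lemma for $\epsilon$-samples (Lemma~\ref{etnleps}), after first establishing a shared geometric fact: for the restricted Delaunay triangle $\tau$ with dual restricted Voronoi vertex $u$, the common distance $s = |vu|$ to each vertex $v$ of $\tau$ satisfies $s \leq \epsilon\,\lfs(u)$. Since $u \in \Sigma$, the $\epsilon$-sample hypothesis provides some $w \in V$ with $|wu| \leq \epsilon\,\lfs(u)$; since $u$ lies on the boundary of the Voronoi cell of every vertex of $\tau$, each such vertex is at least as close to $u$ as $w$ is, giving $s \leq |wu| \leq \epsilon\,\lfs(u)$. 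The $d$-ball of radius $s$ centered at $u$ then contains $\tau$, so $\tau$'s circumradius $R$ and min-containment radius $r$ both satisfy $r \leq R \leq s \leq \epsilon\,\lfs(u)$.

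Next, I would invoke Lemma~\ref{lem:projinsphere} on the ball $B_\tau$ of radius $s$ centered at $u$ to conclude $|\tilde{x}u| < s$. Its hypothesis $s \leq \lfs(v)/2$ is checked via the $1$-Lipschitz property of $\lfs$: $\lfs(v) \geq \lfs(u) - |uv| \geq (1-\epsilon)\,\lfs(u) \geq 2s$ whenever $\epsilon/(1-\epsilon) \leq 1/2$. Then a second application of $1$-Lipschitzness yields $\lfs(\tilde{x}) \geq \lfs(u) - |\tilde{x}u| \geq (1-\epsilon)\,\lfs(u)$. For the first interpolation bound, $r/\lfs(\tilde{x}) \leq \epsilon/(1-\epsilon)$, so Lemma~\ref{interplemma} gives $|x\tilde{x}| \leq \lfs(\tilde{x}) - \sqrt{\lfs(\tilde{x})^2 - r^2} \leq \bigl(1 - \sqrt{1 - (\epsilon/(1-\epsilon))^2}\bigr)\,\lfs(\tilde{x})$. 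For the second, I would exploit the monotonicity noted in Lemma~\ref{interplemma}'s proof---that $L - \sqrt{L^2 - r^2}$ is decreasing in $L$ and increasing in $r$---substituting $\lfs(\tilde{x}) \geq (1-\epsilon)\,\lfs(u)$ and $r \leq \epsilon\,\lfs(u)$ to obtain $|x\tilde{x}| \leq (1-\epsilon)\,\lfs(u) - \sqrt{(1-\epsilon)^2 - \epsilon^2}\,\lfs(u) = (1 - \epsilon - \sqrt{1-2\epsilon})\,\lfs(u)$.

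For the normal error bound, the stricter hypothesis $\epsilon \leq 1/3$ lets me apply Lemma~\ref{etnleps} directly, since its hypotheses---$u \in \Sigma \cap N_\tau$ with equal distances $s = |vu| = |v'u| = |v''u|$, $s \leq \epsilon\,\lfs(u)$, and $\epsilon \leq 1/3$---have all been verified above, so $\angle(N_\tau, N_{\tilde{x}}\Sigma)$ satisfies~(\ref{etnlepsineq}). The hard part will be the Lemma~\ref{lem:projinsphere} step: checking $s \leq \lfs(v)/2$ via the Feature-Translation-style chain above yields exactly the condition $\epsilon/(1-\epsilon) \leq 1/2$, i.e., $\epsilon \leq 1/3$, so pushing $|\tilde{x}u| \leq s$ and hence $\lfs(\tilde{x}) \geq (1-\epsilon)\,\lfs(u)$ through the full corollary range $\epsilon < 1/2$ will likely require a finer geometric argument about the position of $\tilde{x}$ relative to $u$ and $\tau$ (the weaker triangle-inequality bound $|\tilde{x}u| \leq |\tilde{x}x| + |xu| \leq 2s$ is too lossy to recover the stated constants).
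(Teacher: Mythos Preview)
Your approach is essentially identical to the paper's: establish $s \leq \epsilon\,\lfs(u)$ from the $\epsilon$-sample and Voronoi properties, enclose $\tau$ in the ball $B_\tau$ of radius $s$ centered at $u$, invoke Lemma~\ref{lem:projinsphere} to get $\tilde{x} \in B_\tau$ (hence $|\tilde{x}u| \leq s$), apply the Feature Translation Lemma to obtain $\lfs(\tilde{x}) \geq (1-\epsilon)\,\lfs(u)$, and then push through Lemma~\ref{interplemma} for the two interpolation bounds and Lemma~\ref{etnleps} for the normal bound. The paper does exactly this, with the same monotonicity substitution for the second bound.

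The concern you flag at the end---that verifying the hypothesis $s \leq \lfs(v)/2$ of Lemma~\ref{lem:projinsphere} via the Lipschitz chain only works for $\epsilon \leq 1/3$, not the full range $\epsilon < 1/2$---is perceptive; the paper's proof in fact applies Lemma~\ref{lem:projinsphere} without checking this hypothesis at all. But the resolution is much simpler than you suggest. Inspect the proof of Lemma~\ref{lem:projinsphere}: the hypothesis $r \leq \lfs(v)/2$ is used \emph{only} to conclude that the medial-axis point $m$ does not lie in the interior of $B_\tau$. In the present setting $B_\tau$ is centered at $u \in \Sigma$ with radius $s \leq \epsilon\,\lfs(u) < \lfs(u)$, so $B_\tau$ contains no medial-axis point whatsoever---the needed conclusion holds directly, for all $\epsilon < 1$. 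Thus Lemma~\ref{lem:projinsphere}'s argument goes through verbatim for this particular ball over the full range $\epsilon < 1/2$, and no ``finer geometric argument'' is needed.
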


\begin{proof}
Consider some $\tau = \triangle vv'v'' \in \resdel{V}{\Sigma}$.
By the definition of ``restricted Delaunay triangle,''
there is a point $u \in \Sigma \cap N_\tau$ that lies on the boundaries of
the Voronoi cells of all three vertices $v$, $v'$, and $v''$;
$u$ is (by definition) a restricted Voronoi vertex dual to $\tau$.
Let $s = |vu| = |v'u| = |v''u|$; the Voronoi property implies that
there is no vertex $w \in V$ such that $|wu| < s$.
As $V$ is an $\epsilon$-sample, $s \leq \epsilon \, \lfs(u)$.
The bound~(\ref{etnlepsineq}) follows by Lemma~\ref{etnleps}.

The ball $B_\tau$ with center $u$ and radius $s$ encloses $\tau$, so
the minimum enclosing ball of $\tau$ has radius
$r \leq s \leq \epsilon \, \lfs(u)$.
By Lemma~\ref{lem:projinsphere}, $\tilde{x} \in B_\tau$, thus
$|\tilde{x}u| \leq s \leq \epsilon \, \lfs(u)$.
By the Feature Translation Lemma (Lemma~\ref{lem:ftl}),
$\lfs(u) \leq \lfs(\tilde{x}) / (1 - \epsilon)$, and hence
$r \leq \epsilon \, \lfs(u) \leq \epsilon \, \lfs(\tilde{x}) / (1 - \epsilon)
< \lfs(\tilde{x})$ so we can apply Lemma~\ref{interplemma}, giving
\begin{eqnarray*}
|x\tilde{x}| & \leq & \lfs(\tilde{x}) - \sqrt{\lfs(\tilde{x})^2 - r^2}  \\
& \leq & \lfs(\tilde{x}) - \sqrt{\lfs(\tilde{x})^2 -
         \epsilon^2 \, \lfs(\tilde{x})^2 / (1 - \epsilon)^2}  \\
&    = & \left( 1 -
         \sqrt{1 - \left( \frac{\epsilon}{1 - \epsilon} \right)^2} \right)
         \, \lfs(\tilde{x})
\end{eqnarray*}
and as $L - \sqrt{L^2 - r^2}$ increases as $L$ decreases,
\begin{eqnarray*}
|x\tilde{x}|
& \leq & \lfs(\tilde{x}) - \sqrt{\lfs(\tilde{x})^2 - r^2}  \\
& \leq & \left( (1 - \epsilon) \, \lfs(u) -
         \sqrt{(1 - \epsilon)^2 \, \lfs(u)^2 - \epsilon^2 \, \lfs(u)^2} \right)
         \\
&    = & (1 - \epsilon - \sqrt{1 - 2 \epsilon}) \, \lfs(u)
\end{eqnarray*}
as claimed.
\end{proof}

For example, in a $0.2$-sample, we have
$|x\tilde{x}| \leq 0.0318 \, \lfs(\tilde{x})$,
$|x\tilde{x}| \leq 0.0255 \, \lfs(u)$, and
$\angle (N_\tau, N_{\tilde{x}}\Sigma) < 47.95^\circ$
for every point $x$ on every restricted Delaunay triangle.
Note that the normal errors can still be rather large when
the interpolation errors are reasonably small.

\newpage

\bibliographystyle{jrs}
\bibliography{jrs}

\end{document}